\documentclass
  [
    DIV=14,
    fontsize=12,
    abstract=true,
    bibliography=totoc
  ]
  {scrartcl}

\setkomafont{date}{\small\rmfamily}
\setkomafont{author}{\large\sffamily}

\usepackage{authblk}

\newcommand
{\NameEmail}[2]
{%
  \stackunder%
  {#1}%
  {%
   \clap{%
   \normalfont%
   \footnotesize%
   \texttt{#2}%
  }}%
  \hspace{-3.5pt}%
}

\newcommand\keywords[1]{%
  \begingroup
  \renewcommand\thefootnote{}
  \footnote{{Keywords:} #1}%
  \addtocounter{footnote}{-1}
  \endgroup
}
\newcommand\MSCcodes[1]{%
  \begingroup
  \renewcommand\thefootnote{}
  \footnote{{MSC codes:} #1}%
  \addtocounter{footnote}{-1}
  \endgroup
}
\newcommand\funding[1]{%
  \begingroup
  \renewcommand\thefootnote{}
  \footnote{{Funding:} #1}%
  \addtocounter{footnote}{-1}
  \endgroup
}

\usepackage{scrlayer-scrpage}
\clearpairofpagestyles 
\ihead{} 
\ohead{\headmark}
\automark{section}

\NewCommandCopy{\oldparagraph}{\paragraph}
\RenewDocumentCommand{\paragraph}{s t{!} O{#4} m}{%
  \IfBooleanTF{#1}
    {\IfBooleanTF{#2}{\oldparagraph*{#4}}{\oldparagraph*{#4.}}}
    {\IfBooleanTF{#2}{\oldparagraph[#3]{#4}}{\oldparagraph[#3]{#4.}}}%
}

\usepackage{caption}
\usepackage
  [leftcaption]
  {sidecap}

\usepackage{enumitem}
\setlist[enumerate,1]{label=\textbf{\textup{(\roman*)}}}
\setlist[enumerate,2]{label=\textbf{\textup{(\alph*)}}}
\setlist{
  topsep=1pt,
  itemsep=2pt,
  leftmargin=1cm,
  before=\leavevmode
}

\usepackage[
  backend=biber,
  style=alphabetic,
  giveninits=true, 
  backref=true
]{biblatex}
\usepackage[T1]{fontenc}
\usepackage{lmodern}
\usepackage{anyfontsize}

\usepackage{multirow}
\usepackage{hhline}

\usepackage{tabularray}
\UseTblrLibrary{booktabs}

\usepackage{adjustbox}
\usepackage{tcolorbox}

\newtcolorbox{standout}{
  colback=gray!15,
  boxrule=0pt,
  left=.3cm,
  right=.3cm,
  top=.18cm,
  bottom=.18cm,
  boxsep=0pt
}

\usepackage{amsmath}
\allowdisplaybreaks
\usepackage{mathtools}
\usepackage{amssymb}
\usepackage{amsthm}
\usepackage{xfrac}
\usepackage{stmaryrd} 
\usepackage{scalerel} 
\usepackage{stackengine} 
\usepackage[safe]{tipa} 
\usepackage{centernot}
\usepackage[
  cal=euler,
  scr=dutchcal
]
 {mathalpha} 

 \newcommand{\bracket}[3]{%
  \stretchleftright
    {#1}
    {%
      \ensurestackMath{\addstackgap[1pt]{#2}}%
      \vrule width 0pt depth 2pt height 0pt
    }
    {#3}%
} 
\newcommand{\scaledbracket}[3]{%
  \ThisStyle{%
    \stretchleftright
      {#1}
      {
        \ensurestackMath{\addstackgap[1pt]{\SavedStyle #2}}%
        \vrule width 0pt depth 1.5pt height 0pt
      }
      {#3}%
  }%
}
\newcommand{\bracketmid}[4]{%
  \stretchleftright{#1}{%
    \ensurestackMath{%
      \addstackgap[1pt]{#2}%
      \,\stretchrel*{|}{\addstackgap[1pt]{#2#3}}\,%
      \addstackgap[1pt]{#3}%
    }%
  }{#4}%
}

\theoremstyle{plain}
\newtheorem{theorem}{Theorem}[section]
\newtheorem{lemma}[theorem]{Lemma}
\newtheorem{proposition}[theorem]{Proposition}
\newtheorem{corollary}[theorem]{Corollary}
\theoremstyle{definition}
\newtheorem{definition}[theorem]{Definition}
\newtheorem{example}[theorem]{Example}

\theoremstyle{remark}
\newtheorem{remark}[theorem]{Remark}
\usepackage[
  colorlinks=true,
  linkcolor=darkgreen,
  citecolor=darkgreen,
  urlcolor=darkgreen
]{hyperref}
\usepackage{xurl} 

\DeclareSourcemap{
  \maps[datatype=bibtex]{
    \map{
      \step[fieldsource=url, final=true]
      \step[fieldset=verba, origfieldval, final=true]
      \step[fieldsource=verba, match=\regexp{\Ahttps?://}, replace={}]
    }
  }
}
\DeclareFieldFormat{url}{%
  \mkbibacro{URL}\addcolon\space
  \href{#1}{\nolinkurl{\thefield{verba}}}%
}

\usepackage{cleveref}
\crefname{equation}{}{}
\crefname{section}{\S}{\S\S}
\crefname{subsection}{\S}{\S\S}
\crefname{subsubsection}{\S}{\S\S}
\crefname{definition}{Def.}{Defs.}
\crefname{theorem}{Thm.}{Thms.}
\crefname{corollary}{Cor.}{Cors.}
\crefname{lemma}{Lem.}{Lems.}
\crefname{proposition}{Prop.}{Props.}
\crefname{remark}{Rem.}{Rems.}
\crefname{notation}{Ntn.}{Ntns.}
\crefname{fact}{Fact}{Facts}
\crefname{example}{Ex.}{Exs.}
\crefname{figure}{Fig.}{Figs.}
\crefname{table}{Tab.}{Tabs.}
\crefname{footnote}{ftn.}{ftns.}
\Crefname{footnote}{Ftn.}{Ftns.}

\usepackage{tikz}
\usetikzlibrary{
  cd,
  calc,
  arrows.meta,
  backgrounds,
  decorations,
  decorations.pathmorphing, 
}

\tikzcdset{
  arrow style=tikz, 
  diagrams={
    cramped,
    baseline=
    {([yshift=-2pt]current bounding box.center)},
    >={
      Computer Modern Rightarrow[
        length=4pt, width=4pt
      ]
    },
    hook/.style={
      {Hooks[right, length=2pt, width=8pt]}->
    },
    hook'/.style={
      {Hooks[left, length=2pt, width=8pt]}->
    },
    shorten=0pt,
  }
}

\usepackage{xcolor}
\definecolor{darkblue}{rgb}{0.05,0.25,0.65}
\definecolor{darkgreen}{RGB}{20,140,10}
\definecolor{lightgray}{rgb}{0.9,0.9,0.9}
\definecolor{darkorange}{RGB}{200,100,5}
\definecolor{darkyellow}{rgb}{.91,.91,0}
\definecolor{lightolive}{RGB}{225, 220, 185}

\let\originalsslash\sslash
\renewcommand{\sslash}{\mathord{\originalsslash}}

\makeatletter
\newcommand{\cpt}{\mathpalette\cpt@inner\relax}
\newcommand{\cpt@inner}[2]{%
  \scalebox{0.5}[0.9]{$#1\cup$}
  #1\{\infty\}
}
\makeatother

\makeatletter
\newcommand{\plus}{\mathpalette\sqcpt@inner\relax}
\newcommand{\sqcpt@inner}[2]{%
  \scalebox{0.5}[0.9]{$#1\sqcup$}
  #1\{\infty\}
}
\makeatother

\newcommand{\grayunderbrace}[2]{\mathcolor{gray}{\underbrace{\mathcolor{black}{#1}}}_{\mathcolor{gray}{#2}}}
\newcommand{\grayoverbrace}[2]{\mathcolor{gray}{\overbrace{\mathcolor{black}{#1}}}^{\mathcolor{gray}{#2}}}

\tikzset{
  snake left/.style={
    rounded corners,
    to path={
      let \p1 = (\tikztostart.east),
          \p2 = (\tikztotarget.west),
          \p3 = ($(\p1)!0.5!(\p2)$),
          \n1 = {8pt} 
      in
      (\p1)
      -- (\x1 + \n1, \y1)
      -- (\x1 + \n1, \y3)
      -- (\x2 - \n1, \y3) \tikztonodes
      -- (\x2 - \n1, \y2)
      -- (\p2)
    }
  }
}

\tikzset{
  uphordown/.style={
    rounded corners,
    to path={
      let \p1 = (\tikztostart.north),
          \p2 = (\tikztotarget.north),
          \n1 = {max(\y1,\y2) + 8pt}
      in
      (\p1)
      -- (\x1, \n1)
      -- (\x2, \n1) \tikztonodes 
      -- (\p2)
    }
  }
}

\tikzset{
  downhorup/.style={
    rounded corners,
    to path={
      let \p1 = (\tikztostart.south),
          \p2 = (\tikztotarget.south),
          \n1 = {min(\y1,\y2) - 8pt}
      in
      (\p1)
      -- (\x1, \n1)
      -- (\x2, \n1) \tikztonodes 
      -- (\p2)
    }
  }
}

\tikzset{
  rightvertleft/.style={
    rounded corners,
    to path={
      let \p1 = (\tikztostart.east),
          \p2 = (\tikztotarget.east),
          \n1 = {max(\x1,\x2) + 8pt}
      in
      (\p1)
      -- (\n1, \y1)
      -- (\n1, \y2) \tikztonodes 
      -- (\p2)
    }
  }
}

\tikzset{
  leftvertright/.style={
    rounded corners,
    to path={
      let \p1 = (\tikztostart.west),
          \p2 = (\tikztotarget.west),
          \n1 = {min(\x1,\x2) - 8pt}
      in
      (\p1)
      -- (\n1, \y1)
      -- (\n1, \y2) \tikztonodes 
      -- (\p2)
    }
  }
}

\newcommand{\inlinetikzcd}[1]{\begin{tikzcd}[sep=small, ampersand replacement=\&]#1\end{tikzcd}}

\renewcommand{\setminus}{-}

\newcommand{\defneq}{\mathrel{\equiv}}

\newcommand{\HilbertSpace}{%
  \mathcal{H}%
}

\newcommand{\truncation}[2]
  {\bracket[{#2}]_{#1}}

\colorlet{IntBraneColor}{darkblue}
\colorlet{FracBraneColor}{darkorange}

\newcommand
  {\toroidification}
  {\mathrm{Trd}}

\begin{document}

\setlength{\abovedisplayskip}{3.5pt}
\setlength{\belowdisplayskip}{3.5pt}
\setlength{\abovedisplayshortskip}{-5pt}
\setlength{\belowdisplayshortskip}{3pt}


\title{On the Derivation of \\ Twisted K-Theory from M-Theory}

\author[1]
{\NameEmail
  {Nikita Golub}
  {ngolub@nyu.edu}
}

\author[1,2]
{\NameEmail
  {Hisham Sati}
  {hsati@nyu.edu}
}

\author[1]
{\NameEmail
  {Urs Schreiber}
  {us13@nyu.edu}
}

\affil[1]{Mathematics Program and Center for Quantum and Topological Systems (CQTS), \newline New York University Abu Dhabi, UAE}
\affil[2]{The Courant Institute of Mathematical Sciences, \newline New York University, New York, USA}

\maketitle

\begin{abstract}
In a restricted untwisted sector,
Diaconescu, Moore \& Witten \cite{DMW2000,DMW2003} gave a quantum M-theoretic consistency check, at the level of partition functions, of what one may call \emph{Hypothesis K}, namely that magnetic and electric type IIA brane charges are jointly quantized in \emph{K}-theory.

We show that the more recently discussed \emph{Hypothesis H} --- that magnetic and electric charges in 11D are jointly
quantized in co\emph{H}omotopy --- together with a topological smallness condition on the M-theory circle, systematically implies that IIA charges are quantized in a nonabelian deformation of twisted K-theory by the previously neglected non-linear NS1-brane charge. 

This is essentially a derivation of twisted K-theory from IR-completed 11D SuGra, and conversely a further consistency check on \emph{Hypothesis H} from \emph{Hypothesis K}. 

We motivate and precisely state this result, discuss its structural implications, and outline the elaborate algebro-topological proof.
\end{abstract}

\keywords{
  IIA string theory,
  11D supergravity,
  M-theory,
  dimensional reduction,
  higher gauge theory,
  flux quantization,
  D-brane charge,
  RR-flux,
  twisted K-theory,
  nonabelian cohomology,
  Cohomotopy.
}

\MSCcodes{
  Primary:
 81T30, 
 81T70, 
 19L50, 
 55N20, 
 55P35, 
 Secondary:
 55Q55, 
 55S25, 
 53C08, 
 55P62. 
}

\funding{
  by \textit{Tamkeen UAE} under the \textit{Abu Dhabi Research Institute Grant} \texttt{CG008}
}

\newpage

\tableofcontents

\section
{Motivation}

\subsection
{IR-Completion of SuGra}
\label
{IRCompletionOfSuGra}

It is an open secret that higher gauge fields (in the sense of \parencites{Szabo2013}[\S~2.1]{SS24-Phase}[\S~II]{MooreSaxena2025}{SS26-HigherGauge}, cf. also  \parencites{Alfonsi2025HigherGeometry}{BorstenEtAl2025}) tend to be discussed in the (typically: Lagrangian) field and string theory literature (such as \parencites{HenneauxTeitelboim1992}{Costello2011}{Rejzner2016} and \parencites{Polchinski1998-Vol2}{BlumenhagenTheisenLuest2013}{Ortin2015}) only locally on charts of spacetime, where gauge potentials may be regarded as plain differential forms (cf. \cref{TransitionData}). 

This is the case notably in 
\begin{enumerate}
\item
the usual presentation of (the higher gauge sectors of) higher-dimensional supergravity theories (SuGra, cf. \parencites
{DuffNilssonPope1986}
{SalamSezgin1989}
{CDF1991}
{FreedmanVanProeyen2012}
{Ortin2015}
{Sezgin2023}%
), 
\item
such as that of 11D  SuGra (\parencites{CremmerJuliaScherk1978}, cf. \parencites[\S~I]{Duff1999World}{MiemiecSchnakenburg2006}[\S~10]{FreedmanVanProeyen2012}[\S~22.1.1]{Ortin2015}{DuffEtAl2026}), 
\item
and its dimensional reduction to 10D IIA SuGra (\parencites{GianiPernici1984}{HuqNamazie1985}{CampbellWest1984}, cf. \parencites[\S~12.1]{Polchinski1998-Vol2})
that we are concerned with here (following \parencites{GSS24-SuGra}{GiotopoulosSati2026}, cf. \parencites{GSS26-SuGra}). 
\end{enumerate}

Beyond that, the \emph{IR-completion} of the field content---capturing and identifying the nature of topological brane charges---involves a choice of generalized \emph{flux/charge quantization} law%
\footnote{%
\label{MeaningOfQuantization}%
  Beware that the term ``quantization'' in ``flux/charge quantization'' is in the sense of \emph{discretization} (as in \emph{flux quanta} and \emph{charge units}), not in the sense of \emph{quantum physics} (even if both are related: flux quantization of background fluxes tends to provide quantum anomaly cancellation for probe brane sigma-models coupled to these fluxes, cf. \parencites[\S~5]{WenWitten1985}[\S~1]{Alvarez1985} and \parencites{FSS21-Hopf}): All charges discussed here are classical observables; their possible quantum observable algebra is a separate discussion (for aspects of which see instead \parencites{FreedMooreSegal2007}{FreedMooreSegal2007b}{SS24-Obs}).
}
(cf. \parencites{Freed2002}{SS25-Flux}{SS26-HigherGauge}{GSS26-SuGra}) that describes how the gauge potentials may be glued by (higher) gauge transformations across charts (cf. again \cref{TransitionData}).  

\begin{figure}[htb]
\caption{\label{TransitionData}
The data of (higher) gauge field configurations $\widehat{A}$ are only locally given by \emph{gauge potential} differential forms $A$ (here short for tuples of forms in various degrees). Beyond that, the IR-completed field content involves a tower of further (\v Cech cocycle) data that enforce \emph{flux quantization laws} and thereby identify the topological charges of the branes that sourced these higher gauge fields (cf. \parencites[\S~4.2]{GSS26-SuGra}).  Conversely, the entire tower of field data turns out to be encoded by the \emph{classifying space} $\mathcal{A}$ of corresponding brane charges (cf. \cref{ProperFluxQuantization}).
}
\centering
\adjustbox
{rndfbox=4pt}{
  \includegraphics[width=16.1cm]{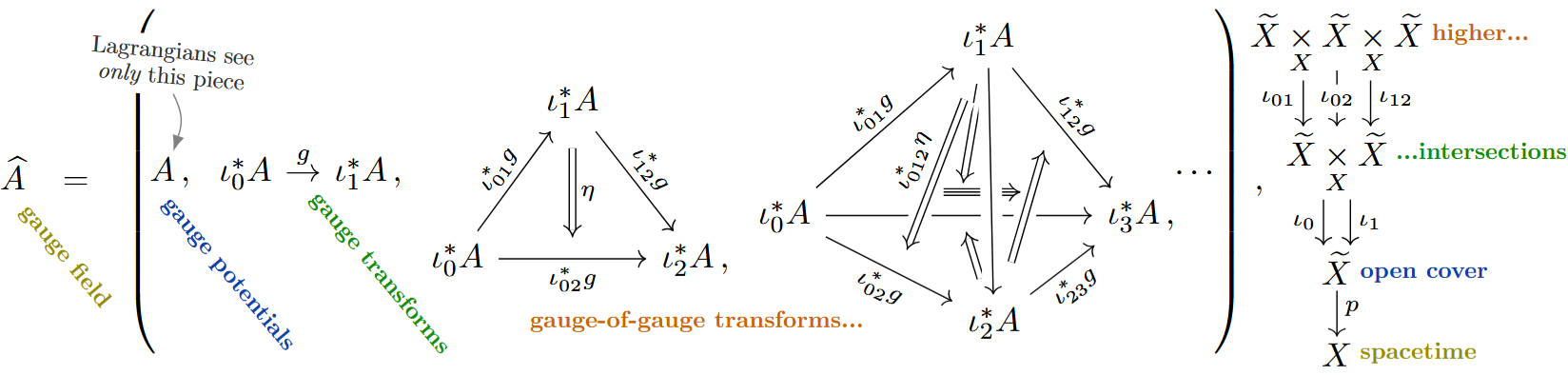}%
}
\end{figure}

\paragraph
{Motivating examples}
This issue is widely appreciated for simple examples:
\begin{enumerate}
  \item 
  The classical choice for IR-completion of 4D Maxwell theory, going back to Dirac 1931 (\cite{Dirac1931}, cf. \parencites{Alvarez1985}[\S~2]{Freed2002}[\S~16.4e]{Frankel2011}[\S~3.1]{SS24-Phase}[\S~2.1]{SS25-Flux}), declares that the electromagnetic field is a cocycle in differential ordinary cohomology in degree 2, with classifying space $B^2 \mathbb{Z}$
  (cf. \cite[\S~2.1]{FSS23-Char}).%
  \footnote{%
    Following \cite{SS25-Flux}, for discrete abelian groups $A$ we abbreviate by ``$B^n A$'' the traditional notation ``$K(A,n)$'' for Eilenberg--MacLane spaces, classifying ordinary cohomology, $H^n\bracket({-; A}) \simeq \pi_0\, \mathrm{Map}\bracket({-,B^n A})$.
    In this notation, we have in particular $B^{n+1}\mathbb{Z} \simeq B^n \mathrm{U}(1)$, which the reader may take as the definition of the latter expression.
  } 

  \item 
  \label{PlainBFieldFluxQuantization}
  The common choice of IR-completion for the NS B-field in 10D IIA supergravity declares it to be a cocycle in differential ordinary cohomology in degree 3 (\parencites{Gawedzki1988}{FreedWitten1999}{Gawedzki2002}{CareyJohnsonMurray2004}{BonoraFerrariSavelli2008}), with classifying space $B^3 \mathbb{Z}$ (cf. \cite[Ex. 3.10]{SS25-Flux}).
\end{enumerate}

It is only after such IR-completion (which is a non-unique choice of physical input, beyond the usual Lagrangian density) that topological brane charges are generally defined; in the above two examples: the charge of magnetic monopoles and of black NS5-branes, respectively.

\paragraph
{Including Electric Charge Quantization}
The above two examples concern only the quantization of \emph{magnetic} charges. More generally one may and should consider quantizing also the \emph{electric} charges. For instance:
\begin{enumerate}
\item 
Duality-symmetric vacuum Maxwell theory admits an IR-completion by joint magnetic and electric flux quantization, cf. \cref{GaussLawOfVacuumEM},
classified by $B^2 \mathbb{Z}_{\mathrm{mg}} \times B^2 \mathbb{Z}_{\mathrm{el}}$
(this is the classical sector \parencites[(1.26)]{FreedMooreSegal2007}, as per \cref{MeaningOfQuantization}, of \parencites[(3.27)]{FreedMooreSegal2007b}, 
also considered in \parencites[Rem. 2.3]{BeckerBeniniSchenkelSzabo2017}{LazaroiuShahbazi2018}{LazaroiuShahbazi2022a}{LazaroiuShahbazi2022b}).
\item 
\label{DualitySymmetricBFieldFluxQuantization}
Analogously, the pure NS sector of 10D IIA SuGra admits IR-completion by electromagnetic flux/charge quantization classified by $B^3 \mathbb{Z}_{\mathrm{mg}} \times B^7 \mathbb{Z}_{\mathrm{el}}$, cf. \cref{GaussLawOfBField,NSBraneChargesDerived}, quantizing not just the magnetic NS5-brane but also the electric NS1-brane (string) charges (cf. \parencites[Ex. 3.8]{Freed2002}[Ex. 2.18]{FreedMooreSegal2007}).

This brings in a quantization law for the flux $H_7 = \star H_3$ in IIA SuGra, which has not received much attention before (exceptions are \parencites{GiotopoulosSati2026}{BaSS26-UnstableK}), but which is the 10D shadow of the famous analogous non-linearity \cref{ElectricGaussLawIn11D}, and which plays a key role below in \cref{KTheoryEmerging}.
\end{enumerate}

Of course, there are further higher gauge fields in IIA SuGra, beyond the NS sector:

\paragraph
{Hypothesis K: RR-Flux quantization in K-theory}
A famous proposal beyond these ordinary examples is what we want to call \emph{Hypothesis K} (following \parencites[\S~4.1]{SS23-Defect}): that the RR-field in 10D IIA SuGra is globally a cocycle in twisted differential (complex topological rank-zero) K-cohomology%
\footnote{%
  By ``K-theory'' here we mean the cohomological version of complex topological K-theory (\cite{Atiyah1967}, cf. \parencites{Hilton1971}[\S~9]{AguilarGitlerPrieto2002}), and we will focus on degree zero (physically: type IIA, cf. \cref{MeasuringDBraneCharge}) and rank zero (physically: without Romans mass), where the classifying space may be denoted $B \mathrm{U} := \bigcup_n B \mathrm{U}(n)$, cf. \parencites[Def. 9.2.8]{AguilarGitlerPrieto2002}: this is \emph{2-connective K-theory} $B \mathrm{U} \simeq \Omega^\infty \Sigma^2 \mathrm{ku}$.
  
  Moreover, by ``twisted K-theory'' we here mean specifically the degree-3 twist of this situation, where the classifying space is the homotopy quotient (Borel construction) of the form $\mathrm{Fred}\bracket({\HilbertSpace})_0 \sslash \mathrm{PU}(\HilbertSpace) \simeq  B \mathrm{U} \sslash B \mathrm{U}(1)$, fibered over $B \mathrm{PU}(\HilbertSpace) \simeq  B^2 \mathrm{U}(1)$ (\parencites{AtiyahSegal2004}, cf. \parencites[Ex. 6.2.5]{SS26-Bun}[\S~5]{SS26-Orb} and \cref{TwistedKTheoryAsNonabelianCohomology} below), where on the right $B \mathrm{U}(1)$ acts on $B \mathrm{U}$ by the universal tensoring of complex vector bundles by complex line bundles \parencites[p. 3]{AntieauGepnerGomez2014}[Thm. 1.5, Prop. 7.2, Thm. 7.9]{HebestreitSagave2020}.
}
(\parencites{MinasianMoore1997}{Witten1998}{FreedWitten1999}{MooreWitten2000}{FreedHopkins2000}{BouwknegtMathai2000}, cf. \parencites[\S~3.7]{Freed2001}{Freed2002}{Szabo2002}{Fredenhagen2008}{DistlerFreedMoore2011}{GS22-KTheory}[\S~4.1]{SS25-Flux})---with relevant classifying space $B \mathrm{U} \sslash B \mathrm{U}(1)$, cf. \cref{NSRRFluxGaussLaw,MeasuringDBraneCharge} below.

Here $B \mathrm{U} := \bigcup_{n \in \mathbb{N}} B \mathrm{U}(n)$ by itself classifies both magnetic and electric RR-charges (cf. \cref{11DAnd10DFluxSpecies,DBraneChargeInOmegaBUCohomology}), while the canonical projection $\inlinetikzcd{ B \mathrm{U} \sslash B \mathrm{U}(1) \ar[r] \& B^3 \mathbb{Z} }$ extracts the magnetic B-field charges as in the above \cref{PlainBFieldFluxQuantization}. In contrast, the \emph{electric} NS1-brane charges of the above  \cref{DualitySymmetricBFieldFluxQuantization} are \emph{not} reflected in twisted K-theory (nor in \emph{any abelian} twisted generalized cohomology theory, cf. \cref{TheH7GaussLaw} below).

This particular proposal of \emph{Hypothesis K}  (one among infinitely many other admissible choices of IR-completing IIA SuGra \cite{GSS26-SuGra}) is of course motivated by phenomena expected in type IIA string theory (cf. \parencites{Polchinski1998-Vol2}{Ortin2015}). But this raises the following question:

\paragraph!
{How does this lift to M-Theory?}
The \emph{M-theory conjecture}  (\parencites{DuffHoweInamiStelle1987}{Townsend1995}[\S~2.2--2.3]{Witten1995}, cf. \parencites{Duff1996-Formerly}{Schwarz1998}{Witten1998}{Duff1999World}) --- that type IIA string theory is the dimensional reduction on a small circle fiber of a completion of 11D SuGra --- suggests that:
\begin{standout}
  The appropriate flux quantization law in 10D IIA should be the dimensional reduction of an M-theoretic flux quantization law in 11D. 
\end{standout}
In other words, once an M-theoretic IR-completion in 11D is agreed upon, the IR-completion in 10D ought to be \emph{implied} and \emph{derived} by a systematic process of dimensional reduction. In the jargon of phenomenologists, the question is:
\begin{standout}
Is there a \emph{top-down} (beyond the historical \emph{bottom-up}) justification of \emph{Hypothesis K}?
\end{standout}

\subsection
{The Observation by DMW}
\label{TheObservationByDMW}

The desire for such a  ``derivation of K-theory from M-theory'' motivated the title of \cite{DMW2000,DMW2003} (in the following ``DMW'', for short). The result presented there is a consistency check that such a derivation might exist, of which we briefly outline the part relevant to us:

\paragraph
{The Subsector}

DMW restrict attention to the sector of trivial twist (vanishing NS5-brane charge, $H_3$-flux) and initially to a trivial M-circle bundle (vanishing D6-brane charge, $F_2$-flux),
leaving open the 11D origin of D2-brane charge and its $F_6$-flux and disregarding NS1-brane charge and its $H_7$-flux.
(Beware that the latter are the electric partners to $F_4$ and $H_3$ and that we are trying to derive in 10D a quantization law for both magnetic and electric charges/fluxes.)

Under these assumptions, the only flux density in 10D obtained from 11D is $F_4$.

\paragraph
{The Conceptual Issue}

This highlights a notorious general issue with lifting \emph{Hypothesis K} to 11D (cf. \cref{11DAnd10DFluxSpecies,ReductionTo9D}): 
\begin{standout}
Naive dimensional reduction does not produce all the required flux species in 10D: notably, $F_8$ has no plain 11D origin.
\end{standout}

Taken at face value, this means: \emph{There is no natural lift through ordinary dim-reduction of Hypothesis K to 11D.}  We resolve this in \cref{MorseBottRegularity} by showing that the missing fluxes appear when imposing a topological smallness condition on the circle reduction.

\begin{table}[htb]
\caption{%
\label{11DAnd10DFluxSpecies}
Under ordinary dimensional reduction, not all the flux species subject to \emph{Hypothesis K} in 10D have lifts to 11D (cf. also \cref{ReductionTo9D}). 
In particular, the electric $F_8$-flux of non-massive type IIA does not.
(Here ``$[S^1_{\mathrm{M}}]$'' is for M-circle \emph{fibration}: $F_2$ in 10D is the Chern form of the M-circle bundle.) 
For the black brane species sourcing these fluxes cf. \cref{DBraneAndSourcedFluxes}.
}
\centering
\adjustbox{rndfbox=4pt}{
\def\arraystretch{1.2}
\begin{tabular}{c||ccc|c|c|ccc}
  \textbf{Fluxes on $X^{D-1}$}
  &
  \multicolumn{4}{c|}
    {\textbf{magnetic}}
  &
  \multicolumn{4}{c}
    {\textbf{electric}}
  \\
  \hline
  \hline
  11D SuGra
  &
  ---
  &
  $[S^1_{\mathrm{M}}]$
  & 
  $(G_4)_{\mathrm{bas}}$
  &
  $(G_4)_{\mathrm{fib}}$
  &
  $(G_7)_{\mathrm{bas}}$
  &
  $(G_7)_{\mathrm{fib}}$
  &
  ---
  &
  ---
  \\
  \hline
  IIA 10D
  & ---
  & $F_2$
  & $F_4$
  & $H_3$
  & $H_7$
  & $F_6$
  & $F_8$
  & ---
  \\
  \hline
  Massive 10D
  & $F_0$
  & $F_2$
  & $F_4$
  & $H_3$
  & $H_7$
  & $F_6$
  & $F_8$
  & $F_{10}$
  \\
  \hline\hline
  \textbf{Sector}
  & \multicolumn{3}{c|}{\textbf{RR}}
  & \multicolumn{2}{c|}{\textbf{NS}}
  & \multicolumn{3}{c}{\textbf{RR}}
\end{tabular}
}
\end{table}

DMW introduce the missing flux species by \emph{fiat}, as follows:

\paragraph
{The DMW Lift}

DMW consider the equivalence classes $Q$ of field content in 10D to be lifts of the D4-brane charge $f_4 \in H^4\bracket({X^{9};\mathbb{Z}})$ through the Atiyah--Hirzebruch spectral sequence (AHSS, cf. \cite[Thm. 4.2.5]{Kochman1996}) for K-theory.
By the filtration underlying the AHSS, this is the step that tacitly postulates the appearance of the missing flux species $F_8$ (and brings in the previously neglected $F_6$) in the image of the Chern character%
\footnote{%
  The string theory literature traditionally includes a factor $\sqrt{\widehat{A}}$ with the Chern character (\parencites[(2.11)]{GreenHarveyMoore1996}[(5.1)]{CheungYin1997}[(1.1)]{MinasianMoore1997}). This is a historical normalization choice \parencites[ftn. 12]{FreedHopkins2000}[(2.8)]{Freed2002}[(1.8)]{BrodzkiMathaiRosenbergSzabo2008}[ftn. 8]{DistlerFreedMoore2011} and, as such, may even be absorbed \cite[(1.11)]{BrodzkiMathaiRosenbergSzabo2008} into the cohomological pairing that plays no role for our purpose here. Therefore, we need not and will not include this factor. Remarkably, without it we obtain integrality of the Chern character components in \cref{UniversalChernChInDMWSector} of \cref{ComparisonInDMWSector} below (in the DMW sector, on globally hyperbolic 10D spacetimes, under \emph{Hypothesis H}), which may be of more intrinsic physical relevance, cf. \cref{OnIntegralityOfChernChComponents} in \cref{PhysicalImpactOfDMWSectorResult} below.
}
$\inlinetikzcd{ \mathrm{ch} : \mathrm{KU}(-) \ar[r] \& \oplus_{k} H^{2k}({-; \mathbb{R}}) }$ (\parencites[\S~1.10]{AtiyahHirzebruch1961}[Thm. 5.8]{Hilton1971}, cf. \parencites[Ex. 7.2 \& \S~2]{FreedHopkinsTeleman2002}[Prop. 10.1]{FSS23-Char}):
\begin{equation}
\label{TheDMWLift}
  \begin{tikzcd}[
    column sep=-5.5pt,
    row sep=25pt,
    /tikz/column 3/.append style={anchor=base west},
  ]
    Q
    \ar[
      r,
      |->,
      "{
        \substack{
          \text{Chern}
          \\
          \text{character}
        }
      }",
      "{
        \mathrm{ch}
      }"'
    ]
    &[60pt]
    {[F_4} 
    &
    {+ F_6 + F_8]} 
    \\
    f_4
    \ar[
      u,
      |->,
      "{
        \substack{
          \text{ad hoc lift}
          \\
          \text{through AHSS}
          \\
          \text{to 4-connective}
          \\
          \text{K-theory}
        }
      }"{xshift=-2pt}
    ]
    \ar[
      r,
      |->,
      "{
          \text{de Rham}
      }",
      "{
          \text{map}
      }"'
    ]
    &
    {[F_4}
    &
    {]} 
    \,.
  \end{tikzcd}
\end{equation}

\paragraph
{The Remaining Gap}
This is where the DMW account falls short of being a ``derivation'': 
K-theory is assumed at this point \cref{TheDMWLift}, and the matching of the partition functions becomes a consistency check verifying the compatibility condition of this ansatz. It shows that, under the assumption of \emph{Hypothesis K}, the partition function in 10D is compatible with the 11D M-theoretic perspective.

However, beyond that check, the situation has remained unresolved: Even assuming \emph{Hypothesis K}, DMW lift deformation classes to deformation classes without explaining where the actual field configurations come from. 
Moreover, the generalization to the twisted situation with $h_3 \neq 0$ has remained less developed (cf. \parencites[p. 79]{DMW2003}[p. 13]{MathaiSati04-E8}[p. 44]{DFM2007}).

\medskip

Our aim is to fill this gap. Our method (cf. \parencites{SS26-HigherGauge}{GSS26-SuGra}) is to:
\begin{description}
\item[\cref{ProperFluxQuantization}:] properly formulate electromagnetic flux-quantization applicable to SuGras,

\item[\cref{TheCFieldIn11D}:] choose in 11D the initial proper flux quantization law  (\emph{Hypothesis H}),

\item[\cref{MCircleReduction}:] make precise the dimensional reduction of flux-quantization laws,

\end{description}
and then discover twisted K-theory emerging (\cref{Results}).

\section
{Method}

\paragraph
{Homotopy Theory}
\begin{enumerate}
\item
We make free use of basic concepts of \emph{homotopy theory} (cf. \parencites{AguilarGitlerPrieto2002}{Fomenko2016}; for an introduction aimed at physicists cf. \parencites{Schwarz1994}, and for a digest specific to our context cf. \cite[\S~1]{FSS23-Char}), especially in \cref{Results} below. 

\item
We also allude to basics of \emph{dg-algebraic rational} homotopy theory (cf. \parencites{Hess2007}{FelixHalperin2017} for general background, while for exposition and digest specific to our context cf. \parencites{FSS2019}[\S~3.1]{SS25-Flux} and \cite[\S~5]{FSS23-Char}, respectively).

\item
In the proofs (and only there) we appeal profusely and without further ado to power tools of \emph{algebraic topology} (cf. \parencites{Switzer1975}{Whitehead1978}{Kochman1996}{tomDieck2008}), notably to various spectral sequences (cf. \parencites{McCleary2010}).
\end{enumerate}

As usual, where we speak of \emph{spaces} (as in \emph{classifying spaces}, in \cref{ProperFluxQuantization}) the reader may read ``topological spaces'' in any convenient (cartesian closed) category of topological spaces (\parencites{Steenrod1967}, cf. \parencites[\S~2.1.2]{SS26-Bun}), but only their homotopy type (their \emph{shape} $\infty$-groupoid) ever matters. In particular, all commuting diagrams in the following are understood as either resolved or filled by homotopies (such as in \cref{SlicedMappingSpace}). Notably,   
the fiber products appearing in \cref{SlicedMappingSpace,KTheoryEmerging} are \emph{homotopy} fiber products (cf. \cite[Def. 1.15]{FSS23-Char}).

\paragraph
{Mapping Spaces}

Throughout, we make substantial use of understanding cohomology classes as homotopy classes of classifying maps (recalled in \cref{ProperFluxQuantization} and heavily applied in \cref{KTheoryEmerging}).
For a pair of spaces $\mathcal{A}_1$, $\mathcal{A}_2$, we write
\begin{equation}
\label{MappingSpace}
  \mathrm{Map}\bracket({
    \mathcal{A}_1
    ,\,
    \mathcal{A}_2
  })
  \,\defneq\,
  \{
  \begin{tikzcd}[
    column sep=15pt
  ]
    \mathcal{A}_1
    \ar[r, dashed]
    &
    \mathcal{A}_2
  \end{tikzcd}
  \}
\end{equation}
for the space of (continuous) maps between them (cf. \parencites[\S~1.2, 1.3]{AguilarGitlerPrieto2002}[(1.79)]{FSS23-Char}[\S~2.1.2]{SS26-Bun}). We use dashed arrows to highlight maps that remain unspecified. By $\inlinetikzcd{\mathcal{A}_1 \ar[r, "{ \sim }"] \& \mathcal{A}_2}$ we denote maps that are (weak homotopy) equivalences (cf. p. \pageref{Notation}).

More generally, for a pair of fibered spaces, hence equipped with maps  $\inlinetikzcd{ \mathcal{A}_i \ar[r, "{ p_i }"] \& \mathcal{B} }$ to a joint base space $\mathcal{B}$, we write
\begin{equation}
\label
{SlicedMappingSpace}
  \mathrm{Map}\bracket({
    \mathcal{A}_1
    ,\,
    \mathcal{A}_2
  })_{\mathcal{B}}
  :=
  \mathrm{Map}\bracket({
    \mathcal{A}_1
    ,\,
    \mathcal{A}_2
  })
  \underset
    {
      \mathrm{Map}
      \scaledbracket({
        \mathcal{A}_1
        ,\,
        \mathcal{B}
      })
    }
    {\times}
  \{ p_1 \}
  \,\defneq\,
  \left\{
  \begin{tikzcd}[
    column sep=10pt,
    row sep=0pt
  ]
    \mathcal{A}_1
    \ar[
      dr,
      "{ p_1 }"{swap,pos=.35}
    ]
    \ar[
      rr,
      dashed
    ]
    &&
    \mathcal{A}_2
    \ar[
      dl,
      "{ p_2 }"{pos=.45}
    ]
    \\
    &
    \mathcal{B}
  \end{tikzcd}
  \right\}
\end{equation}
for their \emph{slice} or \emph{relative} mapping space (cf. \parencites[\S~4.1.8]{SS26-Orb}[Def. 4.2.66]{SS26-Bun}).

\subsection
{Flux Quantization}
\label
{ProperFluxQuantization}

A key observation (\parencites{SS25-Flux}{SS26-HigherGauge}, cf. \cref{AbelianCohDoesNotQuantizeNonlinear,NonLinearGaussMeansNonLinearCoh}): one needs to take flux quantization laws beyond \emph{abelian} differential generalized cohomology (\parencites{HopkinsSinger2005}, cf. \parencites{Freed2002}{Bunke2012}) such as (differential) K-theory, to (differential) \emph{nonabelian} cohomology theories (cf. \cite[\S~2]{FSS23-Char}) in order to describe electromagnetic flux quantization with non-linear Bianchi/Gauss laws, such as those that appear in 11D SuGra \cref{S4ValuedGaussLaws}.

\begin{SCtable}[1.2][htb]
\caption{%
\label{NonLinearGaussMeansNonLinearCoh}
  The Gauss laws on fluxes quantized by a cohomology theory $H^1({-; \Omega \mathcal{A}})$ \cref{NonabelianCohomology} are the differential relations in the \emph{minimal Sullivan model} of $\mathcal{A}$ \parencites[\S~3.2]{SS25-Flux}[\S~4.1.1]{SS26-HigherGauge}, see \cref{AdmissibleFluxQuantizationLaw}. When these Gauss laws are non-linear (like \cref{TheH7GaussLaw,ElectricGaussLawIn11D} and many in \cref{ReductionTo9D}), then this \emph{cannot} (by \cref{AbelianCohDoesNotQuantizeNonlinear}) be an abelian generalized cohomology theory \cref{WhiteheadGeneralizedCohomology} like K-theory. 
}
\adjustbox{rndfbox=4pt}{
\begin{tabular}{c|c}
  \textbf{Fluxes} & \textbf{Charges}
  \\
  \multicolumn{2}{c}
    {\text{subject to}\phantom{--------------}}
  \\
  \textbf{Gauss laws} 
    & 
  \textbf{cohomology theory}
  \\
  \hline\hline
  linear  &  possibly abelian
  \\
  non-linear & necessarily non-abelian
\end{tabular}
}
\end{SCtable}

\paragraph
{Nonabelian Cohomology: Brane Charges}
 Abelian \textbf{Whitehead generalized cohomology} theories $E$  (\parencites{KonoTamaki2006}, cf. \parencites[\S~3.4]{Kochman1996}[Ex. 2.10]{FSS23-Char}) are those whose pointed classifying spaces $E_n$, with
 \begin{equation}
 \label
 {WhiteheadGeneralizedCohomology}
  E^n\bracket({
    X
  })
  \,\simeq\,
  \pi_0\, \mathrm{Map}\bracket({
    X, 
    E_n
  })
  \mathrlap{\,,}
 \end{equation}
 form a \emph{spectrum of spaces}, $\inlinetikzcd{ E_n \ar[r, "{ \sigma_n }", "{\sim}"'] \& \Omega E_{n+1} }$, making them the homotopy-theoretic analogues of abelian groups. This class includes:
 \begin{enumerate}
 \item 
 \textbf{Ordinary cohomology} $H^n(-;A)$ 
 
 with $(HA)_n \defneq B^n A$.
 
 \item
 \textbf{Complex topological K-theory} $\mathrm{KU}^n$, 
 
 with $\mathrm{KU}_{2k} \simeq B \mathrm{U} \times \mathbb{Z}$ and $\mathrm{KU}_{2k+1} \simeq \mathrm{U}$.
 
 \item The \textbf{2-connective complex K-theory} $\Sigma^2 \mathrm{ku}$ with which we will be mostly concerned, 
 
 where $\bracket({ \Sigma^2\mathrm{ku} })_0 \simeq B \mathrm{U}$ and $\bracket({\Sigma^2 \mathrm{ku}})_1 \simeq \mathrm{SU}$.
\end{enumerate}

More generally, for \emph{any} (say connected, just for brevity) space $\mathcal{A}$, it still makes sense to regard the connected components of the mapping space \cref{MappingSpace} into it,
\begin{equation}
\label{NonabelianCohomology}
  H^1\bracket({
    X; \Omega\mathcal{A}
  })
  \,\defneq\,
  \pi_0\, \mathrm{Map}\bracket({  
    X, \mathcal{A}
  })
  \mathrlap{\,,}
\end{equation}
as the \textbf{nonabelian cohomology} of $X$ (\parencites[Def. 6.0.6]{Toen2002}[Def. 2.3]{Schreiber2009OWR}[Def. 6]{Lurie2014}[\S~2]{FSS23-Char}[\S~1]{SS25-TEC}; under suitable conditions this is the \emph{non-abelian Poincar{\'e} dual} of \emph{factorization homology} \parencites{Lurie2014}[\S~4]{AyalaFrancis2015})
with coefficients in the loop-$\infty$-group $\Omega \mathcal{A}$.
This general notion subsumes all abelian cohomology theories; for instance: 
\begin{enumerate}
\item 
Ordinary \emph{integral cohomology} is recovered as (cf. \parencites[\S~7.1]{AguilarGitlerPrieto2002}[Ex. 2.1]{FSS23-Char}):
\begin{equation}
  H^{n+1}\bracket({
    X
    ;\,
    \mathbb{Z}
  })
  \simeq
  H^1\bracket({
    X
    ;\,
    B^n \mathbb{Z}
  })
  \defneq
  \{
  \inlinetikzcd{
    X
    \ar[r, dashed]
    \&
    B^{n+1} \mathbb{Z}
  }
  \}_{/\mathrm{hmtp}}\,.
\end{equation}
\item 
Rank-zero \emph{K-theory} is recovered as (\cite[\S~1.3]{AtiyahHirzebruch1961}, cf. \parencites[Ex. 2.11]{FSS23-Char})
\begin{equation}
  \mathrm{KU}^0_{\mathrm{rk=0}}(X)
  \simeq
  H^1\bracket({
    X
    ;\,
    \Omega
    \,
    B \mathrm{U}
  })
  \defneq
  \{
  \inlinetikzcd{
    X
    \ar[r, dashed]
    \&
    B \mathrm{U}
  }
  \}_{/\mathrm{hmtp}}
  \mathrlap{\,.}
\end{equation}
\end{enumerate}
Beyond these, there are genuinely nonabelian cohomology theories, the most fundamental one being:
\begin{enumerate}[resume]
  \item
  \emph{Cohomotopy}, the cohomology theory whose classifying space is a sphere (\parencites{Borsuk1936}{Pontrjagin1938}{Spanier1949}, cf. \parencites[\S~VII]{STHu59}[Ex. 2.7]{FSS23-Char}):
  \begin{equation}
  \label{Cohomotopy}
    \pi^n(X)
    \,\defneq\,
    H^1\bracket({
      X
      ;\,
      \Omega S^n
    })
    \defneq
    \{
    \inlinetikzcd{
      X
      \ar[r, dashed]
      \&
      S^n
    }
    \}_{/\mathrm{hmtp}}
    \mathrlap{\,.}
  \end{equation}
  This is the unstable refinement of \emph{stable Cohomotopy}  (cf. \parencites[p. 245]{Adams1974}), whose classifying spaces are the stages of the sphere spectrum $\mathbb{S}$, the stabilized spheres:
  \begin{equation}
  \label{StableCohomotopy}
   \mathbb{S}^n(X)
   \simeq
   H^1\bracket({
    X
    ;\,
    \Omega^{\infty+1} \Sigma^\infty S^n
   })
   \defneq
    \{
    \inlinetikzcd{
      X
      \ar[r, dashed]
      \&
      \Omega^{\infty} \Sigma^\infty S^n
    }
    \}_{/\mathrm{hmtp}}
    \mathrlap{\,.}
  \end{equation}
\end{enumerate}

Below we make crucial use of the fact that \emph{nonabelian cohomology operations} (\cite[Def. 2.3]{FSS23-Char}, generalizing familiar cohomology operations between abelian Whitehead-generalized cohomology theories, cf. \parencites{MosherTangora2008}{Boardman1995})---namely natural transformations between nonabelian cohomology theories \cref{NonabelianCohomology},
\begin{equation}
\label
{NonabelianCohomologyOperation}
  \begin{tikzcd}
  H^1\bracket({
    -;
    \Omega
    \, 
    \mathcal{A}_1
  })
  \ar[r, "{ \Phi_\ast }"]
  &
  H^1\bracket({
    -;
    \Omega
    \, 
    \mathcal{A}_2
  })
  \mathrlap{\,,}
  \end{tikzcd}
\end{equation}
---are in natural correspondence with homotopy classes of maps between their classifying spaces,
\begin{equation}
  \label{MapsRepresentingCohomologyOperations}
  \begin{tikzcd}
    \mathcal{A}_1
    \ar[
      r,
      "{ \Phi }"
    ]
    &
    \mathcal{A}_2
    \mathrlap{\,,}
  \end{tikzcd}
\end{equation}
and hence with elements of the target cohomology theory applied to the source classifying space:
\begin{equation}
  [\Phi]
  \in
  H^1\bracket({
    \mathcal{A}_1
    ;\,
    \Omega\, \mathcal{A}_2
  })
  \mathrlap{\,.}
\end{equation}

\paragraph
{Twisted Nonabelian Cohomology: Background Charges}
More generally, consider a classifying space sitting in a homotopy fiber sequence (of connected spaces, for brevity)
\begin{equation}
  \begin{tikzcd}[row sep=small, column sep=large]
    \mathcal{A}
    \ar[d]
    \ar[r]
    \ar[
      dr,
      phantom,
      "{
        \lrcorner
      }"{pos=.1}
    ]
    &
    \mathcal{P}
    \ar[
      d,
      "{ \wp }"
    ]
    \\
    \ast 
    \ar[r]
    &
    \mathcal{B}
    \mathrlap{\,.}
  \end{tikzcd}
\end{equation}
Then for domain spaces $\mathcal{X}$ equipped with a  map $\inlinetikzcd{ \mathcal{X} \ar[r, "{ \tau }"] \& \mathcal{B}}$ to be called a \emph{twist}, the connected components of the slice mapping space \cref{SlicedMappingSpace} form the  \emph{$\tau$-twisted nonabelian cohomology} \parencites[\S~3]{FSS23-Char}[\S~4.1.8]{SS26-Orb}:
\begin{equation}
\label
{TwistedNonabelianCohomology}
  H^1_\tau\bracket({
    \mathcal{X}
    ;\,
    \Omega \mathcal{A}
  })
  :=
  \pi_0
  \,
  \mathrm{Map}\bracket({
    \mathcal{X}
    ,\,
    \mathcal{P}
  })_{\mathcal{B}}
  \defneq
  \left\{\;
  \begin{tikzcd}[
    row sep=20pt, 
    column sep=28pt
  ]
    &[-10pt] 
    \tau^\ast \mathcal{P}
    \ar[r]
    \ar[d]
    \ar[
      dr,
      phantom,
      "{ \lrcorner }"{pos=.1}
    ]
    &
    \mathcal{P}
    \ar[
      d,
      "{ \wp }"
    ]
    \\
    \mathcal{X}
    \ar[
      r,
      equals
    ]
    \ar[
      ur,
      dashed
    ]
    &
    \mathcal{X}
    \ar[
      r,
      "{ \tau }"
    ]
    &
    \mathcal{B}
  \end{tikzcd}
  \right\}
  \mathrlap{\,.}
\end{equation}
Again, this general notion subsumes all twisted abelian cohomology; for instance:
\begin{itemize}
  \item
  twisted K-theory at rank zero, for given twist $\big[{ \inlinetikzcd{ X \ar[r, "{ h_3 }"] \& B^3 \mathbb{Z} } }\big]$, is recovered as (cf. \cite[Prop. 3.5]{FSS23-Char}):
  \begin{equation}
  \label
  {TwistedKTheoryAsNonabelianCohomology}
    \mathrm{KU}^{h_3}_{\mathrm{rk}=0}
      (X)
    \simeq
    H^1_{h_3}\bracket({
      X
      ;\,
      \Omega B \mathrm{U}
    })
    \defneq
    \left\{
    \begin{tikzcd}[
      row sep=20pt,
      column sep=23pt
    ]
      &
      &[-15pt]
      B \mathrm{U}
      \sslash 
      B \mathrm{U}(1)
      \ar[
        d
      ]
      \\
      X
      \ar[
        urr,
        dashed,
        bend left=20,
        start anchor={[xshift=-5pt]}
      ]
      \ar[r, "{ h_3 }"]
      &
      B^3 \mathbb{Z}
      \ar[r, "{ \sim }"]
      &
      B^2 \mathrm{U}(1)
    \end{tikzcd}
    \right\}
    \mathrlap{\,.}
  \end{equation}
\end{itemize}

In an evident generalization of \cref{NonabelianCohomologyOperation}, cohomology operations on twisted nonabelian cohomology
\begin{equation}
\label
{TwistedCohomologyOperation}
  \begin{tikzcd}
    H^1_{\tau}\bracket({
      -
      ;\,
      \Omega \mathcal{A}_1
    })
    \ar[
      r,
      "{ \Phi_\ast }"
    ]
    &
    H^1_{\Phi_\ast\tau}\bracket({
      -
      ;\,
      \Omega \mathcal{A}_2
    })
  \end{tikzcd}
\end{equation}
are given by postcomposition with slice maps between classifying fibrations
\begin{equation}
  \begin{tikzcd}[
    sep=15pt
  ]
    \mathcal{A}_1
    \ar[d]
    \ar[
      rr,
      "{
        \Phi\vert_{\mathcal{A}_1}
      }"
    ]
    &&
    \mathcal{A}_2
    \ar[d]
    \\
    \mathcal{P}_1
    \ar[
      rr,
      "{ \Phi }"
    ]
    \ar[
      dr,
      "{ \wp_1 }"{swap,pos=.35}
    ]
    &&
    \mathcal{P}_2
    \ar[
      dl,
      "{ \wp_2 }"
    ]
    \\
    &
    \mathcal{B}
  \end{tikzcd}
\end{equation}
and hence by elements of the target twisted cohomology theory applied to the domain fibration:
\begin{equation}
  [\Phi]
  \in
  H^1_{\wp_1}\bracket({
    \mathcal{P}_1
    ;\,
    \Omega
    \mathcal{A}_2
  })_{\mathcal{B}}
  \mathrlap{\,.}
\end{equation}

\paragraph
{Nonabelian Character: Higher Gauss Laws}
Now, the rational (meaning: non-torsion) part of a nonabelian cohomology theory $\mathcal{A}$ \cref{NonabelianCohomology}%
\footnote{%
  Technically, we are assuming throughout that our classifying spaces $\mathcal{A}$, $\mathcal{B}$ are simply connected of finite rational type, so that the fundamental theorem of dg-algebraic rational homotopy theory applies (\cite{BousfieldGugenheim1976}, cf. \parencites{FHT2000}{Hess2007}[\S~5]{FSS23-Char}). The assumption of rational finite type is crucial for the relation to higher Gauss laws, while connectedness is assumed just for brevity, and nontrivial fundamental groups may also be allowed, though at a considerable cost of technicalities ($\pi_1$-fiberwise rationalization) when the group is not nilpotent and does not act nilpotently on the higher homotopy groups.
}
is encoded in an $L_\infty$-algebra $\mathfrak{l}\mathcal{A}$ of rational higher \emph{Whitehead brackets} (the Koszul dual to the \emph{minimal Sullivan model} of $\mathcal{A}$, cf. \cite[Prop. 5.11]{FSS23-Char}). 

We say (with \parencites[Rem. 2.4]{SS24-Phase}{SS25-Flux}) that $\mathcal{A}$ is an \emph{admissible} flux-quantization law for a given higher gauge theory if the solutions to the latter's higher Gauss laws on a Cauchy surface $X^d$ are equivalently the closed (meaning: flat, Maurer-Cartan) $\mathfrak{l}\mathcal{A}$-valued differential forms (cf. \parencites[Def. 6.1]{FSS23-Char}[\S~2.1]{SS25-Flux}[(329)]{GSS26-SuGra}):
\begin{equation}
\label
{AdmissibleFluxQuantizationLaw}
\adjustbox{rndfbox=4pt}{
  \text{$\mathcal{A}$ is admissible flux quantization}
  \;\;\;
  $\Leftrightarrow
  \;\;\;
  \left\{
  \substack{
    \text{fluxes on $X^d$ satisfying}
    \\
    \text{their higher Gauss laws}
  }
  \right\}
  \simeq
  \Omega^1_{\mathrm{cl}}\bracket({
    X^d; 
    \mathfrak{l}
    \mathcal{A}
  })
  \;
  $
  }
\end{equation}

\begin{remark}
\label[remark]{CEGenerators}
On the right of \cref{AdmissibleFluxQuantizationLaw}, the Chevalley--Eilenberg algebra of $\mathfrak{l}\mathcal{A}$ is generated as a graded-commutative algebra by abstract graded generators $\vec F \defneq \bracket\{{ F^{(i)} }\}_{i \in I}$ subject to differential relations
\begin{equation}
  \mathllap{
  \forall_{i \in I}
  \;\;\;
  }
  \mathrm{d} F^{(i)}
  =
  P^{(i)}
  \defneq
  P^{(i)}\bracket({ \vec F })
  \,,
\end{equation}
where the $P^{(i)}$ are graded-symmetric polynomials in the generators; and an element in $\Omega^1_{\mathrm{cl}}\bracket({X^d; \mathfrak{l}\mathcal{A} })$ is an image of this situation in the de Rham complex of $X$, hence a dg-algebra homomorphism $\inlinetikzcd{\mathrm{CE}\bracket({ \mathfrak{l}\mathcal{A} }) \ar[r] \& \Omega^\bullet_{\mathrm{dR}}(X) }$. We will use the same symbols for these differential forms as for the abstract algebra generators that they are images of, as the latter may be regarded as the \emph{universal} such forms, cf. \cref{D0TadpoleGeneratorAndCousins} below.  
\end{remark}

For example (cf. \parencites[p. 4 \& \S~3.1--2]{SS25-Flux}): 
\begin{enumerate}
\item 
The magnetic and electric Gauss laws of vacuum Maxwell theory are equivalently the closure condition for differential forms with coefficients in $B^2 \mathbb{Z} \times B^2 \mathbb{Z}$:
\begin{equation}
\label{GaussLawOfVacuumEM}
  \Omega^1_{\mathrm{cl}}\bracket({
    X^3;
    \mathfrak{l}\bracket({
      B^2 \mathbb{Z}
      \times
      B^2 \mathbb{Z}
    })
  })
  \;\simeq\;
  \left\{
  \begin{aligned}
    E_2 & 
    \in \Omega^2_{\mathrm{dR}}
    \bracket({ X^3 })
    \\
    B_2 & 
    \in \Omega^2_{\mathrm{dR}}
    \bracket({ X^3 })
  \end{aligned}
  \,\middle\vert\,
  \begin{aligned}
    \mathrm{d}\, E_2 & = 0
    \\
    \mathrm{d}\, B_2 & = 0
  \end{aligned}
  \right\}
  \mathrlap{.}
\end{equation}

\item 
The magnetic and electric Gauss laws of the NS-flux sector in IIA are equivalently the closure condition for differential forms with coefficients in $B^3 \mathbb{Z} \times B^7 \mathbb{Z}$:
\begin{equation}
\label{GaussLawOfBField}
  \Omega^1_{\mathrm{cl}}\bracket({
    X^9;
    \mathfrak{l}\bracket({
      B^3 \mathbb{Z}
      \times
      B^7 \mathbb{Z}
    })
  })
  \;\simeq\;
  \left\{
  \begin{aligned}
    H_3 & 
    \in \Omega^3_{\mathrm{dR}}
    \bracket({ X^9 })
    \\
    H_7 & 
    \in \Omega^7_{\mathrm{dR}}
    \bracket({ X^9 })
  \end{aligned}
  \,\middle\vert\,
  \begin{aligned}
    \mathrm{d}\, H_3 & = 0
    \\
    \mathrm{d}\, H_7 & = 0
  \end{aligned}
  \right\}
  \mathrlap{.}
\end{equation}

\item 
The Gauss laws of type IIA, \emph{excluding} $H_7$, are equivalently the closure conditions for differential forms with coefficients in the classifying space $\mathcal{A} \,\defneq\,  B \mathrm{U} \sslash B \mathrm{U}(1)$ for rank-0 twisted K-theory \cref{TwistedKTheoryAsNonabelianCohomology}:
\begin{equation}
\label{NSRRFluxGaussLaw}
  \Omega^1_{\mathrm{cl}}\bracket({
    X^{9}
    ;
    \mathfrak{l}
    \bracket({
      B \mathrm{U}
      \sslash
      B \mathrm{U}(1)
    })
  })
  \;\simeq\;
  \left\{
  \begin{aligned}
    H_3 & \in 
    \Omega^3_{\mathrm{dR}}
    \bracket({ X^9 })
    \\
    F_2 & \in 
    \Omega^2_{\mathrm{dR}}
    \bracket({ X^9 })
    \\
    F_4 & \in 
    \Omega^4_{\mathrm{dR}}
    \bracket({ X^9 })
    \\
    F_6 & \in 
    \Omega^6_{\mathrm{dR}}
    \bracket({ X^9 })
    \\
    F_8 & \in 
    \Omega^8_{\mathrm{dR}}
    \bracket({ X^9 })
  \end{aligned}
  \;\middle\vert\;
  \begin{aligned}
    \mathrm{d}\, H_3 & = 0
    \mathclap{\phantom{\bracket({ X^9 })}}
    \\
    \mathrm{d}\, F_2 & = 0
    \mathclap{\phantom{\bracket({ X^9 })}}
    \\
    \mathrm{d}\, F_4 & = H_3 \wedge F_2
    \mathclap{\phantom{\bracket({ X^9 })}}
    \\
    \mathrm{d}\, F_6 & = H_3 \wedge F_4
    \mathclap{\phantom{\bracket({ X^9 })}}
    \\
    \mathrm{d}\, F_8 & = H_3 \wedge F_6
    \mathclap{\phantom{\bracket({ X^9 })}}
  \end{aligned}
  \right\}
  \mathrlap{.}
\end{equation}
\end{enumerate}
From this \cref{NSRRFluxGaussLaw}, one sees at once that twisted K-theory:
\begin{enumerate}
\item 
is an admissible flux quantization law \cref{AdmissibleFluxQuantizationLaw} for IIA SuGra including electric $F_8$-flux, 
\item 
\emph{if} one ignores the electric $H_7$-flux and thus fails to quantize NS1-brane charge.
\end{enumerate}

Remarkably, the dominant \emph{Hypothesis K} about flux quantization in string theory fails to quantize the string charge.
This blind spot is arguably the result of a previous lack of tools,
because the $H_7$-Gauss law is non-linear (cf. \cite[\S~22.1.3]{Ortin2015}),
\begin{equation}
\label{TheH7GaussLaw}
  \mathrm{d}\,
  H_7 \,=\,
  \tfrac{1}{2} F_4 \wedge F_4
  -
  F_2 \wedge F_6
  \mathrlap{\,,}
\end{equation}
and \emph{no abelian twisted generalized cohomology theory is compatible with non-linear Gauss laws} (cf. \cref{NonLinearGaussMeansNonLinearCoh}):

\begin{proposition}[Abelian cohomology cannot quantize nonlinear Gauss laws]
\label[proposition]
{AbelianCohDoesNotQuantizeNonlinear}
\begin{enumerate}
\item
\label
{GaussLawsOfTwistedAbelianGeneralizedCoh}
 The Gauss laws induced, via \cref{AdmissibleFluxQuantizationLaw}, by any abelian twisted generalized cohomology theory are \textup{(up to field redefinition preserving the given twist)} linear in the twisted flux densities relative to the twisting flux densities.
 
 \textup{(For instance $\mathrm{d}F_{4} = H_3 \wedge F_2$ is linear in the $F_{2\bullet}$ relative to the twisting coefficient $H_3$.)}

\item
\label
{GaussLawsOfAbelianGeneralizedCoh}
For vanishing twist, an even stronger statement holds: As soon as the classifying space is a loop space $\mathcal{A} \defneq \Omega \mathcal{B}$ \textup{(let alone an infinite loop space classifying abelian generalized cohomology \cref{WhiteheadGeneralizedCohomology})}, the induced Gauss laws are all closure conditions, of the form \mbox{$\mathrm{d}F \!=\! 0$}.
\end{enumerate}
\end{proposition}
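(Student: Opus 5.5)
I would prove part (ii) first, since part (i) reduces to it, and I would work throughout in the minimal Sullivan model, using the fact recalled above that the Gauss laws of \cref{AdmissibleFluxQuantizationLaw} are exactly the differential relations $\mathrm{d}F^{(i)} = P^{(i)}(\vec F)$ of \cref{CEGenerators}.

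For (ii), suppose $\mathcal{A} \simeq \Omega \mathcal{B}$ is connected and of rational finite type. The classical route is that an H-space is rationally a product of Eilenberg--MacLane spaces; equivalently, its rational Whitehead products, and in fact all higher brackets of $\mathfrak{l}\mathcal{A}$, vanish. I would argue this in one of two ways.
\begin{enumerate}
\item \emph{Via cohomology.} The H-structure makes $H^\bullet(\mathcal{A};\mathbb{Q})$ a connected graded-commutative Hopf algebra. By Milnor--Moore/Hopf--Borel it is free graded-commutative on its primitives. Hence $\mathcal{A}$ is formal with free cohomology, so its minimal model has zero differential.
\item \emph{Via the $L_\infty$-algebra.} The $L_\infty$-algebra of $\Omega\mathcal{B}$ is the abelian shift of the homotopy groups, since $\pi_\bullet(\Omega\mathcal{B})\otimes\mathbb{Q}$ carries the Samelson bracket only as structure on $\mathfrak{l}\mathcal{B}$, while the Whitehead products on $\Omega\mathcal{B}$ itself vanish.
\end{enumerate}
Either way $P^{(i)}=0$ for all $i$, which gives $\mathrm{d}F=0$. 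I will take the Hopf-algebra argument as the main line.

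For (i), take a twisted abelian generalized cohomology theory: a parametrized spectrum over $\mathcal{B}$, whose zeroth space fibration $\mathcal{A}\to\mathcal{P}\to\mathcal{B}$ has fiber $E_0$, an infinite loop space. The relative minimal model of $\mathcal{P}$ over the minimal model of $\mathcal{B}$ has the form $(\mathrm{CE}(\mathfrak{l}\mathcal{B})\otimes \wedge V, \mathrm{d})$, with $V$ the fiber generators $\vec F$. Restricting $\mathrm{d}$ to the fiber recovers the model of $E_0$, so by part (ii) $\mathrm{d}F$ lies in the ideal generated by the base generators.

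The content to prove is that the dependence on $\vec F$ is linear, i.e.\ $\mathrm{d}F^{(i)} = \sum_j \omega^{i}{}_j(\vec H)\, F^{(j)} + \eta^{(i)}(\vec H)$. The key input is the stable structure: the fibration is the zeroth stage of the compatible family $\mathcal{P}_n \simeq \Omega_{\mathcal{B}}\mathcal{P}_{n+1}$ of fiberwise loop spaces. Rationally, a parametrized spectrum over $\mathcal{B}$ is a module over $\mathrm{CE}(\mathfrak{l}\mathcal{B})$, namely a local system of rational chain complexes. Its relative model is then the free algebra on the dual of that dg-module, whose differential is by construction linear in the fiber generators. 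The constant term $\eta$ is removed by the base-point section coming from the zero of the spectrum. Concretely, I would check that the fiberwise H-space structure (the fiberwise addition map $\mathcal{P}\times_{\mathcal{B}}\mathcal{P}\to\mathcal{P}$, in fact fiberwise infinite loop) forces the generators to be primitive over $\mathrm{CE}(\mathfrak{l}\mathcal{B})$. Applying the relative version of the Hopf argument, primitivity of $\mathrm{d}F$ implies linearity in $\vec F$ after a change of generators. Such a change of generators is a field redefinition that fixes the base generators, which is why the statement holds only ``up to field redefinition preserving the twist''.

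The main obstacle is this relative Hopf/primitivity step. It requires the fiberwise H-structure to be compatible with the relative minimal model, and it requires showing that the necessary change of generators stays over $\mathrm{CE}(\mathfrak{l}\mathcal{B})$. I would first try to settle it by quoting the rational equivalence between parametrized spectra and dg-modules over the base model, and fall back on the explicit primitive-element argument if needed. Finally, I would note that \cref{TheH7GaussLaw} has the quadratic term $F_4\wedge F_4$ in fiber generators, which is therefore excluded.
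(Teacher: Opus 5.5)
Your proposal is correct and follows essentially the same route as the paper. For (ii), the paper relies on the standard Sullivan-model fact that a loop space has a minimal model with zero differential; for (i), it quotes exactly the rational equivalence you name between parametrized spectra over $\mathcal{B}$ and dg-modules over the base minimal model $A$, under which the total space of the zeroth-space fibration has model $\mathrm{Sym}_A(N)$ with $\mathrm{d}N \subset A \otimes N$ (citing Braunack-Mayer and BMSS2019). Once that equivalence is quoted, your fallback relative-primitivity step is unnecessary.
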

\begin{proof}
\Cref{GaussLawsOfAbelianGeneralizedCoh} is a simple exercise in minimal Sullivan models, cf. \cite[Ex. 5.6]{FSS23-Char}.
For the general case, rational parameterized stable homotopy theory shows that the rational minimal models of underlying total spaces $\Omega^\infty_{\mathcal{B}}\bracket({ E \sslash \Omega \mathcal{B} })$
of parameterized spectra over a base $\mathcal{B}$ are gc-algebras of the form $\mathrm{Sym}_A(N)$, given by dg-modules $N$ over the minimal dg-algebra model $A$ of the base, with differential $\mathrm{d} N \subset A \otimes N$ (\cite[Thm. 2.19, Prop. 2.30]{BMSS2019}, based on \parencites[Thm. 2.7.42]{Braunack-Mayer2018}[Thm. 4.20 \& \S~5.1]{Braunack-Mayer2020II}). 
\end{proof}

This \emph{no-go theorem} (\cref{AbelianCohDoesNotQuantizeNonlinear}) is a substantial issue for attempts to derive IR-completions via electromagnetic flux/charge quantization laws under dimensional reduction, because non-abelian Gauss laws proliferate under dimensional reduction, cf. \cref{ReductionTo9D}. But if we embrace flux/charge quantization in nonabelian cohomology, then the theory runs smoothly:

\paragraph
{Differential Nonabelian Cohomology: IR-Complete Higher Gauge Fields}

Without recalling here the technical details needed (\emph{cohesive higher topos theory} \parencites[\S~3.1]{SSS12}{Sc13-dcct}, cf. \parencites[\S~1]{FSS23-Char}[\S~9.1]{SS26-Orb}[\S~2]{SS26-HigherGauge}) to make the following statement precise, the upshot of choosing an admissible classifying space $\mathcal{A}$ \cref{AdmissibleFluxQuantizationLaw} is that it completely characterizes IR-completed higher gauge fields that may be sourced by brane charges in $\Omega \mathcal{A}$-cohomology. Namely (\parencites[Def. 9.3]{FSS23-Char}, cf. \parencites[\S~3.3]{SS25-Flux}[\S~4.2]{SS26-HigherGauge}):
\begin{standout}
After flux quantization in $\mathcal{A}$, the IR-complete higher gauge fields on the Cauchy surface $X^d$ are the cocycles in \emph{differential} nonabelian $\Omega\mathcal{A}$-cohomology.
\end{standout}

For instance:
\begin{enumerate}
  \item
  For $\mathcal{A} \defneq B^2 \mathbb{Z}$ we obtain differential integral cohomology in degree 2 \cite[Ex. 9.4]{FSS23-Char}, whose cocycles may be identified with connections on $\mathrm{U}(1)$-principal bundles as familiar for the Maxwell field.

  \item 
  For $\mathcal{A} \defneq B^3 \mathbb{Z}$ we obtain differential integral cohomology in degree 3 \cite[Ex. 9.4]{FSS23-Char}, whose cocycles may be identified with connections/curving on $\mathrm{U}(1)$-bundle gerbes, as familiar for the NS B-field.

  \item 
  For $\mathcal{A} \defneq B \mathrm{U} \sslash B \mathrm{U}(1)$ we obtain differential twisted K-theory \cite[Ex. 9.2, Ex. 11.3]{FSS23-Char},  whose cocycles are the IR-completed RR-fields of IIA under \emph{Hypothesis K}.

   Forgetting the differential refinement, the underlying charges are in plain (non-differential) rank-zero twisted K-theory (\cite{AtiyahSegal2004}, cf. \parencites[Ex. 3.4]{FSS23-Char}), which in our ambient nonabelian notation \cref{TwistedNonabelianCohomology} looks like this:
\begin{equation}
\label
{TwistedKTheoryAsTwistedNonabe}
  \mathrm{KU}^{\tau}_{\mathrm{rk=0}}
  \bracket({
    X^9
  })
  \simeq
  H^1_{\tau}\bracket({
    X^9
    ;
    \Omega\,
    B \mathrm{U}
  })
  \defneq 
  \left\{
  \begin{tikzcd}[
    row sep=15pt, 
    column sep=40pt
  ]
    & 
    B \mathrm{U}
    \sslash
    B \mathrm{U}(1)
    \ar[
      d,
      ->>
    ]
    \\
    X^9
    \ar[
      r,
      "{
        \substack{
          \text{NS-charge}
        }
      }"'
    ]
    \ar[
      ur,
      dashed,
      "{
        \substack{
          \text{RR-charge\;\;}
        }
      }"{sloped}
    ]
    &
    B^2 \mathrm{U}(1)
    \ar[
      r,
      phantom,
      "{ \simeq }"
    ]
    &[-40pt]
    B^3 \mathbb{Z}
  \end{tikzcd}
  \right\}_{\!\!\big/\mathrm{hmtp}\mathrlap{\,.}}
\end{equation}

  \item
  Generally, for $\mathcal{A} \defneq E_n$ a stage in a spectrum of spaces, we obtain \cite[Ex. 9.1]{FSS23-Char} differential $E$-cohomology going back to \cite{HopkinsSinger2005}, here in the \emph{canonical} form of \cite[Def. 4.46]{Bunke2012}.

  \item 
  For $\mathcal{A} \defneq S^4$ we obtain differential Cohomotopy \cite[Ex. 9.3]{FSS23-Char}, whose cocycles are the IR-complete C-field configurations of 11D SuGra under \emph{Hypothesis H}. 

  This is what we now turn to in \cref{TheCFieldIn11D}.
\end{enumerate}

\subsection
{Model of the C-Field}
\label
{TheCFieldIn11D}

The non-linear Gauss law \cref{TheH7GaussLaw} is of course the dimensional reduction (cf. \cite[\S~4.2]{MathaiSati04-E8}) of the more widely appreciated non-linear Gauss law of the electric C-field flux (cf. \parencites[(3.23)]{MiemiecSchnakenburg2006}[(147)]{GSS24-SuGra}):
\begin{equation}
\label{ElectricGaussLawIn11D}
  \mathrm{d}\, G_7 
  \,=\,
  \tfrac{1}{2} G_4 \wedge G_4
  \mathrlap{\,.}
\end{equation}

While this electric C-field Gauss law, too, 
is impossible to quantize in an abelian generalized cohomology theory (cf. again \cref{AbelianCohDoesNotQuantizeNonlinear}), it is in fact quantizable in the most fundamental example of a nonabelian generalized cohomology theory: \emph{Cohomotopy} \cref{Cohomotopy}, represented by a sphere, 
\begin{equation}
  \pi^n(X)
  \,:=\,
  H^1\bracket({
    X;
    \Omega S^n
  })
  \defneq
  \pi_0
  \mathrm{Map}\bracket({
    X,
    S^n
  })
  \mathrlap{\,,}
\end{equation}
here specifically by the 4-sphere (cf. \parencites{FSS17-Sphere}{FSS2019}):
\begin{equation}
\label
{S4ValuedGaussLaws}
  \Omega^1_{\mathrm{cl}}\bracket({
    X^{10};
    \mathfrak{l}S^4
  })
  \;\simeq\;
  \left\{
  \begin{aligned}
    G_4 & \in
    \Omega^4_{\mathrm{dR}}\bracket({X^{10}})
    \\
    G_7 & \in
    \Omega^7_{\mathrm{dR}}\bracket({X^{10}})
  \end{aligned}
  \,\middle\vert\,
  \begin{aligned}
  \mathrm{d}\, G_4 & = 0
  \\
  \mathrm{d}\, G_7 & = 
  \tfrac{1}{2} G_4 \wedge G_4
  \end{aligned}
  \right\}
  \mathrlap{.}
\end{equation}

There are (always infinitely many) other admissible electromagnetic flux quantization laws $\mathcal{A}$ \cref{AdmissibleFluxQuantizationLaw} for 11D SuGra \cref{ElectricGaussLawIn11D}; see for instance \cite{BaSS26-UnstableK}. But the most fundamental choice (equipped with comparison operations into any other choice) is 4-Cohomotopy $\mathcal{A} \defneq S^4$ \cref{S4ValuedGaussLaws}, originally considered in \parencites[\S~2.5]{Sati2018}[\S~4]{FSS15-M5WZW}. 

The hypothesis that \emph{cohomotopical} flux quantization gives the correct IR-completion of 11D SuGra in view of M-theory (or rather its tangentially twisted version, if the first higher curvature correction is included, cf. \cite{Tsimpis2004-Ell3}), has been called \emph{Hypothesis H} in \parencites{FSS20-H}{FSS21-Hopf}{SS23-Mf}[\S~12]{FSS23-Char}.

This entails that the C-field is globally a cocycle in (tangentially twisted) \emph{differential 4-Cohomotopy} (cf. \parencites[\S~4]{FSS15-M5WZW}[\S~3.1]{GS21}[Ex. 9.3]{FSS23-Char}[Ex. 4.7]{GSS26-SuGra}), and that the M-brane charges, the sources of this field, are in (tangentially twisted) 4-Cohomotopy $\pi^4(-) \defneq H^1\bracket({-; \Omega S^4})$ \cref{Cohomotopy}. As a quick plausibility check, this immediately implies (cf. \cref{MeasuringDBraneCharge}):
\begin{enumerate}
  \item that magnetic M5-brane charge, as measured near the horizon, is in:
  \begin{equation}
  \label
  {M5ChargeAccordingToHypothesisH}
    H^1\bracket({
      \mathbb{R}^{10}
      \setminus 
      \mathbb{R}^5
      ;
      \Omega S^4
    })
    \simeq
    \pi_0 \mathrm{Map}\bracket({
      \mathbb{R}^5
      \times 
      \mathbb{R}_{> 0}
      \times
      S^4
      ,\,
      S^4
    })    
    \simeq
    \pi_4\bracket({
      S^4
    })
    \simeq
    \mathbb{Z}
    \mathrlap{\,,}
  \end{equation}
  as expected,

  \item but also that electric M2-brane charge (``Page charge'', \parencites[(8)]{Page1983}[(43)]{DuffStelle1990}, cf. \parencites{Moore2004}[p. 21]{BaggerLambertMukhiPapageorgakis2013}) is in:
  \begin{equation}
  \label{M2ChargeAccordingToHypothesisH}
    H^1\bracket({
      \mathbb{R}^{10}
      \setminus 
      \mathbb{R}^2
      ;
      \Omega S^4
    })
    \simeq
    \pi_0 \mathrm{Map}\bracket({
      \mathbb{R}^2
      \times 
      \mathbb{R}_{> 0}
      \times
      S^7
      ,\,
      S^4
    })    
    \simeq
    \pi_7\bracket({
      S^4
    })
    \simeq
    \mathbb{Z}
    \oplus
    \mathbb{Z}_{/12}
    \mathrlap{\,.}
  \end{equation}
\end{enumerate}
A list of further consistency checks against topological effects expected in M-theory has been made in \parencites{FSS20-H}{FSS21-Hopf}{SS21-M5Anomaly}, cf. also \cite[\S~12]{FSS23-Char}.

\begin{remark}[Torsion brane charges]
\label[remark]{TorsionMBraneCharges}
  The second summand on the right of \cref{M2ChargeAccordingToHypothesisH} is a first prediction of \emph{Hypothesis H}: that besides the familiar integrally charged flavor of M2-branes, there may also be \emph{torsion}%
  \footnote{%
    Here \emph{torsion} refers to the notion in group theory (not that in differential geometry or supergravity): The \emph{torsion subgroup} of an abelian group, $\mathrm{Tor}(A) \subset A$, is the subgroup of elements of finite order. For instance, $\mathbb{Z}_{/n}$ is pure torsion while $\mathbb{Z}$ has no torsion.
    When a brane species has charges in a torsion subgroup, this means, in physics imagery, that one such brane annihilates against a ``stack'' of $n-1$ such branes, for some finite number $n$.  
  }
  M2-brane charges taking values in $\mathbb{Z}_{/12}$, already as measured near their horizon. Below we see an induced proliferation of torsion brane species after dimensional reduction along a small circle to 10D, cf. \cref{HomotopyOfCyc1S4,HomotopyOfCyc1S4Table,HomotopyOfTheFiber}. 

  The appearance of torsion brane charges is a topological effect that is invisible on the level of flux densities, but generically exhibited by non-trivial flux quantization. This is a well-known phenomenon: 
  \begin{enumerate}
  \item 
  Torsion M2-charges seen in ordinary cohomology have been discussed in \cite[\S~2.2]{AharonyBergmanJafferis2008} (``fractional M2-branes''). 

  \item
  Torsion D-brane charges seen in K-theory have been discussed in \parencites{Braun2000}{BrunnerDistler2001}{BrunnerDistlerMahajan2001}{Mendez-DiezRosenberg2012}. 

  \item In particular, type I $p$-brane charges quantized in $\mathrm{KO}$-theory exhibit torsion already on Minkowski spacetime (cf. \parencites[p. 23, 36]{Witten1998}[p. 8,9]{Gukov2000}[p. 21]{OlsenSzabo2000}) and for the same reason as \cref{M2ChargeAccordingToHypothesisH} above: that their classifying space, $B \mathrm{O}$, has torsion already in its homotopy groups, cf. \cref{DBraneChargeInOmegaBOCohomology}.

  \end{enumerate}
  
  Nevertheless, the reader unhappy with this prediction \cref{M2ChargeAccordingToHypothesisH} of torsion M2-brane charge is free to IR-complete 11D SuGra using another flux quantization law $\mathcal{A}$ with $\mathfrak{l}\mathcal{A} \simeq \mathfrak{l}S^4$. But it is with the choice $\mathcal{A} \defneq S^4$ in 11D that twisted K-theory emerges in 10D, in \cref{KTheoryEmerging} below.
\end{remark}

\paragraph
{Model of the C-Field quantizing both magnetic M5 and electric M2-brane charges}
As such, \emph{Hypothesis H}  is a ``model of the C-field'' (where the hypothesis is that \emph{it is the right model} for the purpose of completing to M-theory), alternative to a previously proposed class of models \parencites[\S~2.7]{HopkinsSinger2005}{DFM2007}{FSS15-ModuliStack}{DonagiWijnholt2023}. 

A key novelty here is that Cohomotopy quantizes not just the magnetic C-field flux ($G_4$, sourced by black M5-branes) but compatibly also the electric C-field flux ($G_7$, sourced by black M2-branes), whose non-linear Gauss law \cref{ElectricGaussLawIn11D} prevents it from being quantized in any abelian generalized cohomology theory (\cref{AbelianCohDoesNotQuantizeNonlinear}). 

This is relevant for the present purpose, since also \emph{Hypothesis K} in 10D postulates that the electric fluxes are to be quantized (cf. \cref{11DAnd10DFluxSpecies}). To relate the two, we next discuss how electromagnetic flux/charge quantization behaves under dimensional reduction along circle fibers.

\subsection
{Dimensional Reduction}
\label
{MCircleReduction}

With a classifying space $\mathcal{A}$ thus encoding (by \cref{ProperFluxQuantization}) the entire IR-complete higher gauge field content, including the flux densities, the local gauge potentials and the corresponding brane charges, we may ask for the formulation of \emph{double dimensional reduction} (\parencites{DuffHoweInamiStelle1987}{Townsend1995}) at this refined level.

It turns out that this has a beautiful answer. Consider:
\begin{enumerate}
 \item $\;L \mathcal{A} := \mathrm{Map}\bracket({S^1, \mathcal{A}})$---the \emph{free loop space}

  (namely, the space of maps from the circle).
  
  \item 
  $\mathrm{Cyc}\,\mathcal{A} := \bracket({L \mathcal{A}}) \sslash S^1$---the \emph{cyclic loop space}

  (namely, the homotopy quotient by the circle group of rigid rotations of loops).

  \item 
  $c_1^{C} \in H^2\bracket({ \mathrm{Cyc}\,\mathcal{A}; \mathbb{Z} })$---the Chern class of the circle principal bundle $\inlinetikzcd{L \mathcal{A} \ar[r] \& \mathrm{Cyc} \mathcal{A}}$

  (classified by $\inlinetikzcd{ (L \mathcal{A} \to \ast)\sslash S^1 : \mathrm{Cyc}\mathcal{A} \ar[r] \& B S^1 }$).
\end{enumerate}
Then:
\begin{proposition}
\label[proposition]{DimReductionViaCyc}
  For $\inlinetikzcd{X^{d+1} \ar[r] \& X^d}$ a circle principal bundle with Chern class $c_1 \in H^2\bracket({X^{d}; \mathbb{Z}})$, there is a canonical bijection between (twisted) nonabelian cohomology sets \cref{NonabelianCohomology,TwistedNonabelianCohomology} of the form:
  \begin{equation}
  \label{CyclificationAdjointness}
    H^1\bracket({
      X^{d+1};
      \Omega \,\mathcal{A}
    })
    \;\simeq\;
    H^1_{c_1}
    \bracket({
      X^{d};
      \Omega 
      \,
      \mathrm{Cyc}\,\mathcal{A}
    })
    \mathrlap{\,.}
  \end{equation}
\end{proposition}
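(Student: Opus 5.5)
The plan is to view \cref{CyclificationAdjointness} as the $\pi_0$-shadow of an equivalence of mapping spaces, obtained from two standard adjunctions. The first is the equivalence between $S^1$-equivariant spaces and spaces fibered over $B S^1$. A principal circle bundle $\inlinetikzcd{X^{d+1} \ar[r] \& X^d}$ is, up to homotopy, the homotopy fiber of its classifying map $\inlinetikzcd{X^d \ar[r, "{c_1}"] \& B S^1}$. Equivalently, $X^d \simeq X^{d+1}\sslash S^1$, with $X^{d+1}$ carrying a free $S^1$-action. The second is the product/hom adjunction for $S^1$-spaces: for $S^1$ acting on $X^{d+1}$, and acting on $L\mathcal{A} = \mathrm{Map}(S^1,\mathcal{A})$ by rotation of loops, we have an $S^1$-equivariant correspondence
\[
  \mathrm{Map}\bracket({X^{d+1},\mathcal{A}})
  \;\simeq\;
  \mathrm{Map}^{S^1}\bracket({X^{d+1}, L\mathcal{A}})
  \mathrlap{\,.}
\]
Concretely, a map $f$ is sent to $x \mapsto \bracket({t \mapsto f(t\cdot x)})$. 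The inverse is evaluation at the basepoint $1 \in S^1$. Both directions are continuous in a convenient category of spaces, and they are inverse by a direct check.

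Next I would pass from equivariant maps to sliced maps over $B S^1$. Applying the Borel construction $(-)\sslash S^1$ turns an equivariant map $\inlinetikzcd{X^{d+1} \ar[r] \& L\mathcal{A}}$ into a map $\inlinetikzcd{X^{d+1}\sslash S^1 \ar[r] \& \mathrm{Cyc}\,\mathcal{A}}$ over $B S^1$. Because the action on $X^{d+1}$ is free, we have $X^{d+1}\sslash S^1 \simeq X^d$, and the structure map to $B S^1$ is $c_1$. On the other side, $\mathrm{Cyc}\,\mathcal{A} \to B S^1$ is the classifying map of $c_1^C$. Here I would invoke the standard homotopy-theoretic fact that, for a topological group $G$, the Borel construction is an equivalence from $G$-spaces (with weak equivalences detected on underlying spaces) to spaces over $BG$. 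Its inverse takes homotopy fibers, and so
\[
  \mathrm{Map}^{S^1}\bracket({X^{d+1},L\mathcal{A}})_{h}
  \;\simeq\;
  \mathrm{Map}\bracket({X^d,\mathrm{Cyc}\,\mathcal{A}})_{B S^1}
  \mathrlap{\,,}
\]
where the left side is the derived (homotopy-coherent) equivariant mapping space. The right side is exactly the slice mapping space \cref{SlicedMappingSpace} whose $\pi_0$ is the twisted cohomology \cref{TwistedNonabelianCohomology} with twist $c_1$.

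The main obstacle is to reconcile the strict equivariant mapping space in the first step with the derived one in the second step. I would first resolve $X^{d+1}$ as a free $S^1$-CW complex; principal bundles over CW bases admit such a structure up to equivalence. For free cofibrant domains the strict mapping space $\mathrm{Map}^{S^1}$ computes the derived one, and so $\mathrm{Map}^{S^1}\bracket({X^{d+1}, L\mathcal{A}})$ needs no correction. Composing the three equivalences and taking $\pi_0$ then gives the bijection \cref{CyclificationAdjointness}. Naturality in $X^{d+1}$ along maps of principal bundles makes it canonical.
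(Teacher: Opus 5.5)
Your proposal is correct and is essentially the argument behind the references the paper cites for this proposition (the paper gives no proof of its own). There the bijection is $\pi_0$ of the adjunction $\mathrm{hofib} \dashv \mathrm{Cyc}$, obtained just as you do: from the equivalence between $S^1$-spaces and spaces over $B S^1$, together with the fact that the cofree action on $L\mathcal{A} = \mathrm{Map}(S^1,\mathcal{A})$ is right adjoint to forgetting the action. Your treatment of strict versus derived equivariant mapping spaces, via a free $S^1$-CW structure on $X^{d+1}$ (or alternatively by replacing $X^{d+1}$ with $E S^1 \times X^{d+1}$), correctly supplies the point-set detail that the $\infty$-categorical formulation suppresses.
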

On the right we have classifying maps $\inlinetikzcd{f : X^{d} \ar[r] \&  \mathrm{Cyc}\mathcal{A}}$ with $f^\ast c_1^{C} \simeq c_1$, hence where the class of the $S^1$-bundle on the left is remembered as a cohomology datum on $X^d$: the \emph{KK-monopole} charge.

\paragraph
{Comparison to the Literature}
\Cref{DimReductionViaCyc} is due to \parencites[\S~4]{Schreiber2017}[\S~2.2]{BMSS2019}[\S~2.2]{SS24-Cyc}, following the analogous rational observation in \parencites[\S~3]{FSS17-Sphere}[\S~3]{FSS18-TD}[\S~2]{FSS18-TDR}[\S~8]{FSS2019}, for which see also \parencites[\S~2.4]{SatiVoronov2024}[\S~2.2]{GSS25-TD}. 
The special case when $\mathcal{A} \simeq B G$ is the classifying space of a connected Lie group $G$ was essentially considered in \cite{BergmanVaradarajan2004}, with focus on $G \defneq E_8$; before that, the further specialization to  $c_1 = 0$ was considered in \parencites{MathaiSati04-E8}, cf. also \parencites{Sati2009}{Sati2010-E8}---all this in our context of M/IIA reduction (to which we come in a moment, \cref{PlainReductionToIIA}). Much earlier, the special case $\mathcal{A} = B G$, $c_1 = 0$ of \cref{DimReductionViaCyc} was called the \emph{caloron correspondence} in \parencites{GarlandMurray1988}, cf. \parencites{MurrayVozzo2009}{HekmatiMurrayVozzo2011}.

\begin{example}
\label[example]{CycOfBnZ}
  For $\mathcal{A} \simeq B^{n+1} \mathbb{Z}$, inducing via \cref{AdmissibleFluxQuantizationLaw} a single closed 
  flux form,
  \[
    \Omega^1_{\mathrm{cl}}\bracket({
      X^{d+1};
      \mathfrak{l}
      B^{n+1} \mathbb{Z}
    })
    \,\simeq\,
    \left\{
    \begin{aligned}
      G_{n+1} & \in 
      \Omega^{n+1}_{\mathrm{dR}}\bracket({
        X^{d+1}
      })
    \end{aligned}
    \,\middle\vert\,
    \begin{aligned}
      \mathrm{d}\, G_{n+1} & = 0
    \end{aligned}
    \right\}
    \mathrlap{\,,}
  \]
  on the total space of a circle bundle 
  $\inlinetikzcd{ X^{d+1} \ar[r] \& X^d }$,
  we have:
  \begin{equation}
    \label{CycOfBnZAsHomotopyFiber}
    \mathrm{Cyc} 
    \bracket({ B^{n+1} \mathbb{Z} })
    \,\simeq\,
    \mathrm{hofib}\bracket({
    \inlinetikzcd{
      B^2 \mathbb{Z} 
      \times
      B^n \mathbb{Z}
      \ar[
        r,
        "{
          \cup
        }"
      ]
      \&
      B^{n+2} \mathbb{Z}
    }
    })
    \mathrlap{\,,}
  \end{equation}
  hence sitting in
  \begin{equation}
    \label{PastingDiagramForCyc}
    \begin{tikzcd}[
      column sep=25pt
    ]
    B^n \mathbb{Z}
    \times
    B^{n+1} \mathbb{Z}
    \ar[r]
    \ar[
      d,
      ->>
    ]
    \ar[
      dr,
      phantom,
      "{ \lrcorner }"{pos=.1}
    ]
    & 
    \mathrm{Cyc}\, B^{n+1} \mathbb{Z}
    \ar[r]
    \ar[d]
    \ar[
      dr,
      phantom,
      "{ \lrcorner }"{pos=.1}
    ]
    &
    \ast
    \ar[d]
    \\
    B^n \mathbb{Z}
    \ar[
      r,
      hook
    ]
    &
    B^2 \mathbb{Z}
    \times
    B^n \mathbb{Z}
    \ar[
      r,
      "{ \cup }"
    ]
    &
    B^{n+2} \mathbb{Z}
    \mathrlap{\,,}
    \end{tikzcd}
  \end{equation}
  whose rational model (\cite{VigueBurghelea1985}, cf. \parencites[Prop. 3.2]{FSS17-Sphere}[\S~2.5]{SatiVoronov2024}) is characterized via \cref{CEGenerators} by:
  \begin{equation}
    \Omega^1_{\mathrm{cl}}\bracket({
      X^{d};
      \mathfrak{l} 
      \mathrm{Cyc} B^{n+1}
      \mathbb{Z}
    })
    \;\simeq\;
    \left\{
    \begin{aligned}
      F_{2} & 
      \in
      \Omega^{2}_{\mathrm{dR}}
      \bracket({
        X^{d}
      })
      \\
      G_{n} & 
      \in
      \Omega^{n}_{\mathrm{dR}}
      \bracket({
        X^{d}
      })
      \\
      G_{n+1} & 
      \in
      \Omega^{n+1}_{\mathrm{dR}}
      \bracket({
        X^{d}
      })
    \end{aligned}
    \,\middle\vert\,
    \begin{aligned}
      \mathrm{d}\, F_2 & = 0
      \mathclap{\phantom{\bracket({ X^d})}}
      \\
      \mathrm{d}\, G_n & = 0
      \mathclap{\phantom{\bracket({ X^d})}}
      \\
      \mathrm{d}\, G_{n+1} & = F_2 \wedge G_n
      \mathclap{\phantom{\bracket({ X^d})}}
    \end{aligned}
    \right\}
    \mathrlap{.}
  \end{equation}
  We see: Under dimensional circle-reduction, the flux density form splits as expected into its basic part and its fiber integral (the \emph{double} dimensional reduction, going back to \cite{DuffHoweInamiStelle1987}), and picks up a companion 2-form whose de Rham class is identified, under \cref{DimReductionViaCyc}, with the real Chern class of the circle fibration.
\end{example}

\begin{example}[Topological T-duality via Cyclification]
\label[example]
{CycB3ZInTopTDuality}
  The special case $n =2$ of \cref{CycOfBnZ}, hence the space $\mathrm{Cyc}\, B^3 \mathbb{Z} \simeq \mathrm{Cyc}\, B^2 \mathrm{U}(1)$, is (\parencites[\S~3]{BunkeRumpfSchick2006}, cf. \parencites[p. 7]{SS24-Cyc}) the classifying space for the NS-sector of  \emph{topological T-duality} (\parencites{BouwknegtEvslinMathai2003}{BunkeSchick2005}) and compatibly so with the structure of type II supergravity (\parencites[Rem. 7.2]{FSS18-TD}[Ex. 2.27 \& Lem. 3.19]{GSS25-TD}): According to \cref{DimReductionViaCyc} it classifies NS5-brane charges after circle dimensional reduction to 9D.

  Concretely, there is an auto-equivalence $\inlinetikzcd{ \mathrm{Cyc}\, B^3 \mathbb{Z} \ar[r, "{ \tau }"] \& \mathrm{Cyc}\, B^3 \mathbb{Z} }$ swapping the two degree-2 classes, covered by an equivalence of cyclifications of the twisted K-theory classifying spaces for type IIA and IIB,%
  \footnote{%
    On the right of \cref{TopTDualityViaCyclification}
    the subscript $\mathrm{wd} = 0$ denotes the 
    0-component in 
    $\pi_0\bracket({
      \mathrm{Cyc}\bracket({
        \mathrm{KU}_1 
          \sslash 
        B \mathrm{U}(1)
      })
    })
      \simeq
    \pi_1\bracket({\mathrm{U}})
      \simeq 
    \mathbb{Z}
    $,
    hence the IIB sector where $m \defneq \int_{S^1_{\mathrm{B}}} F_1^{\mathrm{IIB}} = 0$, cf. \cref{ReductionTo9D}. This is the T-dual condition of masslessness, $m \defneq F_0 = 0$, in IIA on the left.
  }
  \begin{equation}
  \label
  {TopTDualityViaCyclification}
    \begin{tikzcd}[
      column sep=10pt
    ]
      \mathrm{Cyc}\bracket({
        \bracket({
          \mathrm{KU}_0
          \sslash
          B \mathrm{U}(1)
        })_{\mathrm{rk=0}}
      })
      \ar[
        rr,
        <->, 
        "{ \sim }"
      ]
      \ar[
        d,
        hook,
      ]
      &&
      \mathrm{Cyc}\bracket({
        \bracket({
          \mathrm{KU}_1
          \sslash
          B \mathrm{U}(1)
        })
      })_{\mathrm{wd}=0}
      \ar[
        d,
        hook,
      ]
      \\
      \mathrm{Cyc}\bracket({
        \mathrm{KU}_0
        \sslash
        B \mathrm{U}(1)
      })
      \ar[
        rr,
        <->, 
        "{ \sim }"
      ]
      \ar[
        d,
        ->>,
        "{
          \mathrm{Cyc}\, h^K_3
        }"{swap}
      ]
      &&
      \mathrm{Cyc}\bracket({
        \mathrm{KU}_1
        \sslash
        B \mathrm{U}(1)
      })
      \ar[
        d,
        ->>,
        "{
          \mathrm{Cyc}\, h^K_3
        }"
      ]
      \\
      \mathrm{Cyc}\, B^3 \mathbb{Z}
      \ar[
        rr,
        <->,
        "{ \sim }",
        "{ \tau }"{swap}
      ]
      &&
      \mathrm{Cyc}\, B^3 \mathbb{Z}
      \mathrlap{\,,}
    \end{tikzcd}
  \end{equation}
  such that the equivalence between charges that this equivalence of classifying spaces induces in 9D corresponds, under \cref{DimReductionViaCyc}, to topological T-duality in 10D. (This statement is shown at the rational level in \parencites{FSS18-TDR}, following the supergravity analysis in \parencites[\S~5]{FSS18-TD}, cf. \parencites[\S~3.2]{GSS25-TD}, but it holds beyond the rational approximation as stated.)

  In a somewhat analogous but rather more intricate manner, we find in \cref{KTheoryEmerging} that $\mathrm{Cyc}\, B^4 \mathbb{Z}$---the classifying space for the circle dimensional reduction of isolated M5-brane charges---controls the dimensional reduction of M-brane charges to 10D; see \cref{BraneChargeFromCohomotopy,LEMtau4BUModBU1IsCycB4Z,BraneChargesFromTwistedKTheory,GroupStructureOnBraneCharges},
  used in \cref{CycB4ZInProofOfComparisonMap} of the proof of \cref{Comparison_map_to_twisted_K_theory} below.  
\end{example}

After these preliminaries,  our goal now is to analyze this formulation of double dimensional reduction for the more interesting case of the supergravity C-field, and to discover K-theory along the way.

\section
{Results}
\label
{Results}

We focus now on the IR-completion of 11D SuGra (cf. \cite{GSS26-SuGra}) given by C-field flux quantization (\cref{ProperFluxQuantization}) in 4-Cohomotopy, $\mathcal{A} \defneq S^4$, according to \cref{TheCFieldIn11D} (``Hypothesis H''). Our goal is to identify the corresponding IR-completion of 10D IIA and its relation to flux quantization in twisted K-theory (``Hypothesis K'').

\paragraph
{Notation}
\label{Notation}
We use fairly standard notation in algebraic topology, but for clarity we highlight that: 
\begin{enumerate}
\item
$R\langle S\rangle$ denotes the linear space of a set/list $S$ of symbols over a ring $R$.

\item 
\label{nTruncation}
$\truncation{n}{\mathcal{A}}$ denotes the \emph{$n$-truncation} (or 
\emph{$n$th Postnikov stage}, also denoted $\tau_{\leq n}\bracket({\mathcal{A}})$ or similar) of a space $\mathcal{A}$ (cf. \parencites[\S~IV]{Whitehead1978}[Cor. 3.7]{GoerssJardine2009}[Prop. 1.9]{FSS23-Char}). This is in particular the aspect of $\mathcal{A}$ that is detectable by homotopy classes of maps out of $n$-manifolds $X^n$ (cf. \parencites[Prop. 1.20]{FSS23-Char}):
\begin{equation}
  \pi_0\,\mathrm{Map}\bracket({
    X^n
    ,
    \mathcal{A}
  })
  \simeq
  \pi_0\,\mathrm{Map}\bracket({
    X^n
    ,
    \truncation{n}{\mathcal{A}}
  })
  \mathrlap{\,.}
\end{equation}

Readers of M-theory literature may be familiar with an allusion to this operation in the observation that homotopy groups of the exceptional Lie group $E_8$ are those of $B^3 \mathbb{Z}$ up to degree 14 (\parencites[p. 4]{Witten1997Flux}[p. 5]{DMW2003}[(3.2)]{DFM2007}), which means:
$
  \truncation
    {15}
    {B E_8}
  \,\simeq\,
  B^4 \mathbb{Z}
$.

\item
\label{WeakHomotopyEquivalence}
$\inlinetikzcd{f : \mathcal{A} \ar[r, "\sim"] \& \mathcal{B}}$ denotes a weak homotopy equivalence (or just ``equivalence'', for short),
hence a map that induces: 
\begin{enumerate}
\item 
a bijection on connected components, $\inlinetikzcd{\pi_0(f) : \pi_0\bracket({\mathcal{A}}) \ar[r, "{ \sim }"] \& \pi_0\bracket({\mathcal{B}})  }$, 
\item
isomorphisms on all homotopy groups:
$\inlinetikzcd{ \pi_{k+1}(f,a) : \pi_{k+1}\bracket({ \mathcal{A}, a }) \ar[r, "{ \sim }"] \& \pi_{k+1}\bracket({\mathcal{B}, f(a)}) }$, $\forall a \in \mathcal{A}, k \in \mathbb{N}$.
\end{enumerate}

\item
$\inlinetikzcd{ f : \mathcal{A} \ar[r, "\sim_n"] \& \mathcal{B} }$ denotes an \emph{$(n+1)$-connected map} or \emph{$(n+1)$-equivalence} (cf. \cite[p. 144]{tomDieck2008}), a map that:
\begin{enumerate}
\item 
is an equivalence (\cref{WeakHomotopyEquivalence}) under $n$-truncation (\cref{nTruncation}), 
$\inlinetikzcd{ \truncation{n}{f} : \truncation{n}{\mathcal{A}} \ar[r, "{ \sim }"] \& \truncation{n}{\mathcal{B}} }$, 
\item
surjects on $(n+1)$-homotopy: $\inlinetikzcd{\pi_{n+1}(f,a) : \pi_{n+1}\bracket({\mathcal{A},a}) \ar[r, ->>] \& \pi_{n+1}\bracket({\mathcal{B}, f(a)}) }$, $\forall a \in \mathcal{A}$.
\end{enumerate}
In particular, such maps induce bijections on homotopy classes of maps out of $n$-manifolds $X^n$:
\begin{equation}
  \inlinetikzcd{
    f_\ast :
    \pi_0
    \,
    \mathrm{Map}\bracket({
      X^n
      ,
      \mathcal{A}
    })
    \ar[
      r,
      "{ \sim }"
    ]
    \&    
    \pi_0
    \,
    \mathrm{Map}\bracket({
      X^n
      ,
      \mathcal{B}
    })
    \mathrlap{\,.}
  }
\end{equation}
\end{enumerate}

\subsection
{
 \texorpdfstring
  {Plain Reduction M$\to$IIA}
  {Plain Reduction M to IIA}
}
\label
{PlainReductionToIIA}

 Upon \emph{plain} dimensional circle reduction according to \cref{MCircleReduction}, \emph{Hypothesis H} induces an IR-completion of 10D IIA SuGra given by NS/RR flux quantization classified by $\mathrm{Cyc}\, S^4 \defneq \bracket({ L S^4 }) \sslash S^1$ \cref{CyclificationAdjointness}. This construction of IR-complete 10D SuGra has been discussed in detail in \parencites{GiotopoulosSati2026} (see also \parencites{Banerjee2026-D4}).
 \begin{equation}
 \label{MIIAreductionSchematics}
 \begin{tikzcd}
   S^1
   \ar[r]
   &[-15pt]
   X^{10}
   \ar[
     d,
     ->>
   ]
   \ar[
     rr,
     dashed,
     "{
       \text{C-field}
     }",
     "{
       \text{charges}     
     }"{swap,yshift=1pt}
   ]
   &&
   S^4
   \\
   &
   X^{9}
   \ar[
     rr,
     dashed,
     "{
       \text{NS/RR-charges}
     }"{yshift=-1pt},
     "{
       \text{excluding D0}      
     }"{swap, yshift=1pt}
   ]
   \ar[
     dr,
     "{ c_1 }"'
   ]
   &&
   \mathrm{Cyc}\,S^4
   \mathrlap{\,.}
   \ar[
     dl,
     "{
       c_1^{C}
     }"
   ]
   \\[-12pt]
   &&
   B S^1
 \end{tikzcd}
\end{equation}
Using a classical expression for the rational minimal model of cyclic loop spaces (\cite{VigueBurghelea1985}, cf. \parencites[Prop. 3.2]{FSS17-Sphere}[\S~2.5]{SatiVoronov2024}) 
one finds, in particular:
\begin{proposition}[{\parencites[\S~3]{FSS17-Sphere}, cf. \parencites[Ex. 2.26]{GSS25-TD}}]
\label[proposition]{ObtainingGaussLawsForCycS4}
  The Gauss laws corresponding, via  \cref{AdmissibleFluxQuantizationLaw}, to $\mathrm{Cyc}\, S^4$ are:
  \begin{equation}
  \label{GaussLawsForCycS4}
    \Omega^1_{\mathrm{cl}}\bracket({
      X^{9};
      \mathfrak{l}
      \mathrm{Cyc}\,
      S^4
    })
    \;\simeq\;
    \left\{
    \begin{aligned}
      H_3 & \in
      \Omega^3_{\mathrm{dR}}\bracket({X^9})
      \\
      H_7 & \in
      \Omega^7_{\mathrm{dR}}\bracket({X^9})
      \\
      F_2 & \in
      \Omega^2_{\mathrm{dR}}\bracket({X^9})
      \\
      F_4 & \in
      \Omega^4_{\mathrm{dR}}\bracket({X^9})
      \\
      F_6 & \in
      \Omega^6_{\mathrm{dR}}\bracket({X^9})
    \end{aligned}
    \,\middle\vert\,
    \begin{aligned}
      \mathrm{d}\, H_3 & = 0
      \mathclap{\phantom{\bracket({X^9})}}
      \\
      \mathrm{d}\, H_7 & =
      \tfrac{1}{2}F_4 \wedge F_4
      - F_2 \wedge F_6
      \mathclap{\phantom{\bracket({X^9})}}
      \\
      \mathrm{d}\, F_2 & = 0
      \mathclap{\phantom{\bracket({X^9})}}
      \\
      \mathrm{d}\, F_4 & = H_3 \wedge F_2
      \mathclap{\phantom{\bracket({X^9})}}
      \\
      \mathrm{d}\, F_6 & = H_3 \wedge F_4
      \mathclap{\phantom{\bracket({X^9})}}
    \end{aligned}
    \right\}
    \mathrlap{.}
  \end{equation}
\end{proposition}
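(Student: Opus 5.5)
The plan is to compute the minimal Sullivan model of $\mathrm{Cyc}\,S^4 \defneq (LS^4)\sslash S^1$ from that of $S^4$, using the Vigu{\'e}-Burghelea construction, and then read off the Gauss laws via \cref{CEGenerators}. Start from the minimal model of the 4-sphere, as in \cref{S4ValuedGaussLaws}: generators $G_4, G_7$ with $\mathrm{d}G_4 = 0$ and $\mathrm{d}G_7 = \tfrac12 G_4\wedge G_4$.

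\textbf{Step 1: the free loop space.} The recipe for $LS^4$ adjoins shifted generators $sG_4$ (degree 3) and $sG_7$ (degree 6). Here $s$ is the unique degree $-1$ derivation with $s(G_i) = sG_i$ and $s(sG_i) = 0$, and the differential on the shifted generators is fixed by requiring $\mathrm{d}s + s\mathrm{d} = 0$. This gives $\mathrm{d}(sG_4) = 0$ and
\begin{equation*}
  \mathrm{d}(sG_7) = -s\bigl(\tfrac12 G_4\wedge G_4\bigr) = -G_4\wedge sG_4 .
\end{equation*}

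\textbf{Step 2: the cyclic quotient.} For the homotopy quotient by $S^1$, the recipe adjoins a degree-2 generator $\omega$, modelling $c_1^{C}$, with $\mathrm{d}\omega = 0$. It also deforms the differential to $\mathrm{d}_{\mathrm{Cyc}} = \mathrm{d}_L + \omega\wedge s$, up to a sign convention. Explicitly:
\begin{align*}
  \mathrm{d}_{\mathrm{Cyc}}\, G_4 &= \omega \wedge sG_4, \\
  \mathrm{d}_{\mathrm{Cyc}}\, G_7 &= \tfrac12 G_4\wedge G_4 + \omega\wedge sG_7, \\
  \mathrm{d}_{\mathrm{Cyc}}\, sG_7 &= -G_4\wedge sG_4 .
\end{align*}
One should check $\mathrm{d}_{\mathrm{Cyc}}^2 = 0$. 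This uses that $s$ is a derivation anticommuting with $\mathrm{d}_L$ and that $\omega$ is even and closed, so the cross terms cancel.

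\textbf{Step 3: renaming.} Set $F_2 := \omega$, $H_3 := sG_4$, $F_4 := G_4$, $F_6 := \pm sG_7$ and $H_7 := \pm G_7$, with signs chosen to match \cref{GaussLawsForCycS4}. The relations of Step 2 then become $\mathrm{d}F_4 = H_3\wedge F_2$, $\mathrm{d}F_6 = H_3\wedge F_4$ and $\mathrm{d}H_7 = \tfrac12 F_4\wedge F_4 - F_2\wedge F_6$, together with $\mathrm{d}H_3 = \mathrm{d}F_2 = 0$. Since all differentials are decomposable, the model is minimal, so by \cref{AdmissibleFluxQuantizationLaw} and \cref{CEGenerators} its dg-maps to $\Omega^\bullet_{\mathrm{dR}}(X^9)$ are exactly the stated tuples of forms.

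\textbf{Main obstacle.} The hard part is justifying the model formula for $\mathrm{Cyc}$, i.e.\ for the Borel construction. The cleanest route is to cite \cite{VigueBurghelea1985} directly, as permitted. If a self-contained argument is wanted, I would model $LS^4\to \mathrm{Cyc}\,S^4\to BS^1$ as a relative Sullivan algebra over $\mathbb{Q}[\omega]$. The $\omega\wedge s$ term is then identified by comparing the fibration's transgression with the infinitesimal rotation action on $L$, which is precisely $s$. Finally, the signs have to be fixed so that they match the stated convention, for instance by rescaling $F_6$.
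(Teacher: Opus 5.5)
Your proposal is correct and follows the same route as the paper, which obtains this proposition by applying the Vigu\'e-Burghelea model of the cyclic loop space (\cite{VigueBurghelea1985}) to the minimal model of $S^4$ and renaming the generators. With your convention $\mathrm{d}_{\mathrm{Cyc}} = \mathrm{d}_L + \omega\wedge s$, the consistent sign choice in Step~3 is $F_6 := -sG_7$ and $H_7 := +G_7$, which reproduces all five stated relations exactly.
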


We observe:
\begin{enumerate}
  \item As far as they go, these data 
  in \cref{GaussLawsForCycS4} are the correct Gauss laws of 10D IIA SuGra, as obtained by dim-reduction from 11D (cf. \cite[(4.13--4.15)]{MathaiSati04-E8}), in particular \emph{including} the nonlinear Gauss law of $H_7$ \cref{TheH7GaussLaw}, which is missed by flux quantization in twisted K-theory (or in any other abelian generalized cohomology theory, cf. \cref{AbelianCohDoesNotQuantizeNonlinear}).

  \item But the electric flux $F_8$ sourced by black D0-branes (cf. \cref{11DAnd10DFluxSpecies,DBraneAndSourcedFluxes,NSRRFluxGaussLaw}) does not appear. Indeed, the charge sourcing $F_8$ is not meant to have an 11D origin via plain circle reduction, and this is confirmed here by the cyclification procedure. Accordingly, flux quantization of 10D IIA in $\mathrm{Cyc}\, S^4$, while a valid IR-completion, does not enforce quantization of black D0-brane charge.
\end{enumerate}

This failure of $F_8$ to originate from 11D under ordinary circle reduction is of course the long-familiar issue (cf. \cref{11DAnd10DFluxSpecies}) --- which around \cref{TheDMWLift} above we noticed was tacitly addressed by DMW by declaring/postulating that the dim-reduced field content should be lifted to K-theory, somehow.

Our next and main step is to give this idea of ``\emph{dimensionally reducing but then adjoining $F_8$}'' a precise and conceptually satisfactory meaning and explanation against the backdrop of IR-completion by nonabelian electromagnetic flux quantization.  (Other approaches to addressing this issue  were considered previously in \parencites{BMSS2019}{BaSS26-UnstableK}, but only the present approach turns out to relate to actual twisted K-theory, in \cref{KTheoryEmerging} below.)

\subsection
{Small Circle Reductions}
\label
{MorseBottRegularity}

To motivate the following \cref{SmallCycSpaces}, we observe:
\begin{enumerate}
\item In dimensional reduction from 11D to 10D IIA, the M-theory circle is supposed to be \emph{small} and higher KK-modes \emph{suppressed}. 

\item After reduction via cyclification (\cref{MCircleReduction}), the variation of the 11D charges over the M-theory circle becomes the trajectory of the loops $\gamma \in L S^4 \defneq \mathrm{Map}\bracket({ S^1, S^4 })$, hence of their classes $[\gamma] \in \mathrm{Cyc}\, S^4 \defneq (L S^4) \sslash S^1$.

\item 
In order to express that these loops are \emph{small} and \emph{suppressed}, we should hence \emph{bound their variation}, the latter expressed by the Dirichlet functional (cf. \cref{MeaningOfDirichletFunctional} below):
\begin{equation}
\label{EnergyFunctional}
  E(\gamma)
  :=
  \tfrac{1}{2}
  \int_{S^1}
  \bracket\vert{ \dot \gamma(s) }\vert^2
  \,
  \mathrm{d}s
  \mathrlap{\,.}
\end{equation}
\end{enumerate}

We proceed to make this idea precise, using classical concepts from infinite-dimensional Morse theory (cf. \parencites{Palais1963}{Palais1966}).

\begin{lemma}
\label[lemma]{SobolevApproximation}
  The inclusion
  of the Sobolev space of loops for which the Dirichlet functional \cref{EnergyFunctional} is well-defined is a homotopy equivalence:
  \begin{equation}
    \inlinetikzcd{
    W^{1,2}\bracket({
      S^1, S^4
    })
    \sslash S^1
    =:
    \mathrm{Cyc}_\infty  S^4
    \ar[
      rr, 
      hook, 
      "{ \sim }"
    ]
    \&\&
    \mathrm{Cyc}\, S^4
    \mathrlap{\,.}
    }
  \end{equation}
\end{lemma}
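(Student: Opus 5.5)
The plan is to reduce the statement to the corresponding (non-equivariant) statement about free loop spaces, and then pass to homotopy quotients. Fix a Riemannian metric on $S^4$, for instance the round one, so that $W^{1,2}(S^1,S^4)$ is a Hilbert manifold in the sense of Palais. The Sobolev embedding $W^{1,2}(S^1) \hookrightarrow C^0(S^1)$ gives a continuous injection $\iota : W^{1,2}(S^1,S^4) \hookrightarrow L S^4$. Rigid rotation of the parameter $s \in S^1$ preserves the Sobolev norm, so $\iota$ is $S^1$-equivariant. The first step is to show that $\iota$ is a homotopy equivalence of underlying spaces.

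For this I would invoke the classical theorem of Palais \cite{Palais1966} that a continuous dense inclusion of Banach (here Hilbert/Fréchet-type) manifolds modelled on a dense inclusion of the model spaces is a homotopy equivalence. Concretely, both $W^{1,2}(S^1,S^4)$ and $L S^4 = C^0(S^1,S^4)$ are manifolds of sections, modelled on $W^{1,2}(S^1,\mathbb{R}^4)$ and $C^0(S^1,\mathbb{R}^4)$, with charts given by the exponential map around smooth loops. The inclusion $W^{1,2} \subset C^0$ is continuous and dense, and it is compatible with these charts. Palais' theorem then gives a weak homotopy equivalence. Since both spaces are metrizable ANRs (Banach manifolds), Whitehead's theorem upgrades this to a genuine homotopy equivalence.

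Alternatively, and more concretely, one can use finite-dimensional broken-geodesic approximation. Both spaces deformation retract onto nothing smaller in general, but the inclusions of piecewise-geodesic loops $\Lambda_N \subset W^{1,2} \subset C^0$ induce equivalences on $\pi_k$ of sublevel/compact pieces, as in Milnor's Morse theory. Taking colimits over $N$ yields the weak equivalence. This route is my fallback if the chart-compatibility in Palais' theorem is awkward to check.

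The second step is to pass to homotopy quotients. Since $\iota$ is an $S^1$-equivariant map that is an underlying weak equivalence, applying the Borel construction $(-) \times_{S^1} E S^1$ gives a map of fiber bundles over $B S^1$ with fibers $W^{1,2}(S^1,S^4)$ and $LS^4$. Comparing the long exact sequences of homotopy groups and applying the five lemma (noting that all spaces here are connected, because $S^4$ is simply connected), the induced map $\mathrm{Cyc}_\infty S^4 \to \mathrm{Cyc}\, S^4$ is a weak equivalence. As the paper declares that only homotopy types matter, this is the claimed equivalence. If one wants an honest homotopy equivalence, both Borel constructions have the homotopy type of CW complexes (Milnor), and Whitehead's theorem applies again.

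The main obstacle is the first step: verifying the hypotheses of Palais' density theorem for these nonlinear mapping manifolds, or equivalently controlling the approximation uniformly enough to handle families of loops (maps $S^k \to LS^4$). To handle families, I would use a smoothing operator: convolution with a mollifier on $S^1$ in an ambient embedding $S^4 \subset \mathbb{R}^5$, followed by nearest-point projection onto $S^4$. This operator maps $C^0$-loops continuously into $W^{1,2}$-loops, commutes with rotations, and is homotopic to the identity on both spaces via the mollification parameter. That directly produces an $S^1$-equivariant homotopy inverse, which would also render the second step immediate.
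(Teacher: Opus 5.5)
Your proposal is correct and follows the paper's route: the paper takes the non-equivariant equivalence $W^{1,2}(S^1,S^4)\hookrightarrow LS^4$ from \cite[Thm.~1.2.10]{Klingenberg1978}, a Palais-type density result of the kind you invoke, and then concludes because $(-)\sslash S^1$ preserves equivalences of $S^1$-spaces, which is what your Borel-construction and five-lemma argument verifies. One caveat on your smoothing fallback: a fixed mollification scale does not keep the mollified loop inside a tubular neighborhood of $S^4$ for every $\gamma\in LS^4$ (rapidly oscillating loops can average to $0\in\mathbb{R}^5$), so the scale must be chosen continuously and rotation-invariantly in $\gamma$, e.g.\ from its modulus of continuity.
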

\begin{proof}
   Before quotienting by $S^1$ this is the statement of \cite[Thm. 1.2.10]{Klingenberg1978}; our statement follows immediately since $(-)\sslash S^1$ is an $\infty$-functor on $S^1$ $\infty$-actions.
\end{proof}

\begin{definition}[{cf. \parencites{Ziller1977}{Oancea2015}}]
\label[definition]{SmallCycSpaces}
For $n \in \mathbb{N}$, consider:
\begin{enumerate}
\item $E_1 \in \mathbb{R}$ --- the value $E(\gamma)$ \cref{EnergyFunctional} of a loop $\gamma$ that is a great circle: a geodesic that goes once around the 4-sphere (the numerical  value depends on how one parameterizes $S^1$ and $S^4$, which is irrelevant as long as it is fixed once and for all),

\item $E_n :=  n^2 E_1$ --- the energy of a geodesic loop that goes $n$ times around the 4-sphere,

\item 
\label{Ln}
$L_n S^4 \subset W^{1,2}\bracket({S^1, S^4}) \subset L S^4$ --- the subspace of loops $\gamma$ with $E_n \geq E(\gamma)$ \cref{EnergyFunctional},

\item 
$\mathrm{Cyc}_n S^4 := \bracket({ L_n S^4 }) \sslash S^1 \subset \mathrm{Cyc}_\infty S^4$ --- its cyclification (using that the Dirichlet functional \cref{EnergyFunctional} is $S^1$-invariant).

\item
The resulting filtration:
\begin{equation}
\label{EnergyFiltration}
  \begin{tikzcd}
    S^4 
    \times
    B S^1
    \simeq
    \mathrm{Cyc}_0 S^4
    \ar[r, hook]
    &
    \mathrm{Cyc}_1 S^4 
    \ar[r, hook]
    &
    \mathrm{Cyc}_2 S^4
    \ar[r, hook, dotted]
    &
    \mathrm{Cyc}_\infty S^4
    \simeq
    \mathrm{Cyc}\, S^4
    \mathrlap{.}
  \end{tikzcd}
\end{equation}
\end{enumerate}
\end{definition}
Here, the first stage has a particularly simple description (cf. \cref{PathOfCircles}):
\begin{lemma}
\label[lemma]{Cyc1AsAPushout}
  The first stage, $\mathrm{Cyc}_1 S^4$, in \cref{EnergyFiltration} is homotopy equivalent to:
  \begin{enumerate}
  \item 
  the space of unparameterized oriented round circles in $S^4$,
  where degenerate circles quotiented by rigid reparameterization are included as copies of $\ast \sslash S^1 \simeq B S^1$,
  
  \item 
  hence to the homotopy pushout
  \begin{equation}
  \label{TheHomotopyPushout}
    \begin{tikzcd}[
      row sep=30pt,
      column sep=70pt
    ]
     \left\{
      \substack{
        \textup{oriented great circles}
        \\
        \textup{around a given pole}
      }
      \right\}
      \ar[
        r,
        "{
          \substack{
            \textup{forget pole}
          }
        }"
      ]
      \ar[
        d,
        "{
          \substack{
            \textup{remember}
            \\
            \textup{pole}
          }
        }"'
      ]
      &
     \left\{
      \substack{
        \textup{oriented great}
        \\
        \textup{circles in $S^4$}
      }
      \right\}
      \ar[d]
      \ar[
        dl,
        Rightarrow,
        shorten=5pt,
        "{
          \substack{ 
            \textup{shrink circles}
            \\
            \textup{to pole}
          }
        }"{sloped, description}
      ]
      \\
      \big\{
      \substack{
        \textup{points} 
        \\
        \textup{in $S^4$}
      }
      \big\}
      \times 
      B S^1
      \ar[
        r,
        hook
      ]
      &
      \mathrm{Cyc}_1 S^4
      \mathrlap{\,,}
    \end{tikzcd}
  \end{equation}
  where the second component of the left map classifies the $S^1$-bundle of rigid parameterizations of the oriented great circles.
  \end{enumerate}
\end{lemma}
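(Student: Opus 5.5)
The plan is to use Morse--Bott theory for the energy functional $E$ on $W^{1,2}(S^1,S^4)$ in the sense of Palais and Klingenberg, following the classical analysis of the free loop space of round spheres by Ziller. The critical points of $E$ on the round $S^4$ are:
\begin{itemize}
\item the constant loops, at level $0$, forming a manifold $\simeq S^4$;
\item the closed geodesics, which are great circles traversed $n$ times, at level $E_n$.
\end{itemize}
Each positive level is a nondegenerate critical manifold diffeomorphic to the unit tangent bundle $T_1S^4 \cong V_2(\mathbb{R}^5)$, the Stiefel manifold of orthonormal 2-frames. The index of the $n$-fold great circle is $3(2n-1)$, and $E$ satisfies Condition (C). Since $E$ is $S^1$-invariant, the gradient flow and all deformation retractions can be chosen $S^1$-equivariantly. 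Hence everything below passes to homotopy quotients $(-)\sslash S^1$, exactly as in the proof of \cref{SobolevApproximation}.

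The first step describes the sublevel set $L_1 S^4=\{E\le E_1\}$ and its cyclification.
\begin{itemize}
\item Below $E_1$ there are no critical values other than $0$. The gradient flow therefore deforms $\{E< E_1\}$ equivariantly onto the constant loops, giving $\{E<E_1\}\sslash S^1\simeq S^4\times BS^1$. This recovers $\mathrm{Cyc}_0S^4$, and the $BS^1$ factor reflects the trivial $S^1$-action on constant loops.
\item Passing the level $E_1$, Morse--Bott theory says $L_1S^4$ is obtained from $\{E<E_1\}$ by attaching the disk bundle $D(\nu^-)$ of the negative normal bundle over the critical manifold $V_2(\mathbb{R}^5)$, glued along its sphere bundle $S(\nu^-)$.
\end{itemize}

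The second step identifies $S(\nu^-)$ and its attaching map geometrically. Fix a great circle $\gamma$. Its negative directions are the variations shrinking $\gamma$ within the hemispheres of the 2-spheres containing it. The space of such hemispheres through $\gamma$ is $S^2$, matching the index $3$. Each unit negative direction thus determines a \emph{pole}, namely the center of such a hemisphere. The descending flow along that direction contracts the circle through round circles (latitudes) to that pole.

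So $S(\nu^-)$ is the space of oriented great circles with a chosen pole, i.e. $S(\nu^-)\cong\{(\gamma,p): p\perp \mathrm{span}(\gamma)\}$, and the attaching map is "shrink to the pole". This is precisely the left vertical map of \cref{TheHomotopyPushout}, with the $2$-cell given by the shrinking homotopy. Making this rigorous requires an $S^1$-equivariant deformation of the whole sublevel set onto the subspace of round circles (constant-speed parametrized, possibly degenerate). I would construct this explicitly from the Morse--Bott local model, or by citing Ziller's equivariant cell decomposition. This also yields item (i): the space of unparameterized oriented round circles, with degenerate ones contributing $\ast\sslash S^1$.

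The third step passes to homotopy quotients. The $S^1$-action on the critical manifold and on $S(\nu^-)$ is free, given by rotation of parametrization. Hence $V_2(\mathbb{R}^5)\sslash S^1$ is the space of oriented great circles, which is the Grassmannian $\widetilde{\mathrm{Gr}}_2(\mathbb{R}^5)$. Likewise $S(\nu^-)\sslash S^1$ is the space of oriented great circles with a pole. The map to $S^4\times BS^1$ records:
\begin{itemize}
\item the pole, as the limit point;
\item the classifying map of the free $S^1$-action, i.e. the bundle of rigid parametrizations.
\end{itemize}
Since $(-)\sslash S^1$ preserves homotopy pushouts (it is a colimit), the equivariant pushout presentation of $L_1S^4$ descends to \cref{TheHomotopyPushout}.

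I expect the main obstacle to be the second step. The Morse--Bott attaching data must be identified up to equivariant homotopy with the explicit "shrink round circles to a pole" map, rather than some abstract clutching of $\nu^-$. This means checking that the negative eigenspaces of the Hessian (the index form along $\gamma$) are spanned by the first nontrivial Jacobi-type modes $\sin(s)\,e$, with $e$ normal to the plane of $\gamma$, and that the descending manifold is swept out by round latitude circles.
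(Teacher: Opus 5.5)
Your proposal takes essentially the paper's route. The paper's proof is a one-line digest of Klingenberg's Morse--Bott theory for the energy functional: Thm.~2.4.10 gives the attachment of the negative disk bundle at a critical level, and Prop.~2.5.3 gives the great-circle critical manifolds of round spheres. Round circles are then identified with loops traversing them at unit rate, modulo rigid reparameterization. Your three steps spell out exactly this: retraction of $\{E<E_1\}$ onto constants, attachment of the index-$3$ bundle over $V_2(\mathbb{R}^5)$, and passage to $(-)\sslash S^1$ using freeness of the rotation action at level $E_1$. Your identification of $S(\nu^-)\sslash S^1$ with oriented great circles carrying a pole, mapped to $S^4\times BS^1$ by (pole, parametrization bundle), is also right.

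There is one concrete error, in the very check you single out as the crux. The negative eigenspace of the Hessian at a unit-speed great circle $\gamma$ is \emph{not} spanned by the modes $\sin(s)\,e$.
\begin{itemize}
\item Take a normal variation $V=f(s)\,e$, with $e\perp\mathrm{span}(\gamma)$ a parallel unit normal field. On the unit sphere its second variation is $\int\bigl(f'(s)^2-f(s)^2\bigr)\,\mathrm{d}s$.
\item This is negative only for the constant mode $f\equiv 1$. It vanishes for $f\in\{\cos s,\sin s\}$ and is positive for all higher Fourier modes.
\item So $\cos(s)\,e$ and $\sin(s)\,e$ are periodic Jacobi fields, i.e.\ null directions tangent to the critical manifold. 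They tilt the plane of $\gamma$, and together with $\dot\gamma$ they account for its $6+1=7$ dimensions, which is what makes it Morse--Bott nondegenerate.
\item The three negative directions are instead the \emph{constant} normal fields $e$.
\end{itemize}

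These constant modes are exactly what your geometric picture needs. The curve $t\mapsto\cos(t)\,\gamma+\sin(t)\,e$ is the family of latitude circles shrinking to the pole $e$. That this is the descending curve from $\gamma$ in direction $e$ follows by symmetry. The gradient flow preserves the loops fixed under two kinds of isometry: rotation of the plane of $\gamma$ compensated by a shift of the loop parameter, and isometries of $S^4$ fixing $\mathrm{span}(\gamma)$ and $e$ pointwise. Those fixed loops are precisely the latitude circles about $\pm e$. With $\sin(s)\,e$ replaced by the constant modes $e$, your argument goes through and agrees with the paper's.
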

\begin{proof}
This statement is a digest of \cite[\S~2.3, esp. Thm. 2.4.10 with Prop. 2.5.3]{Klingenberg1978}: The round circles correspond to loops that traverse these round circles at unit rate, up to rigid reparameterization (up to shift of their basepoint).
\end{proof}

\begin{figure}[htb]
\caption{\label{PathOfCircles}%
  According to \cref{Cyc1AsAPushout}, a generic point in the small-cyclification of the 4-sphere, $\mathrm{Cyc}_1 S^4$  (\cref{SmallCycSpaces,TowardsSmallCyclifiedCohomotopy}), is equivalently an embedded round circle in the 4-sphere, here playing the role of an image of the M-circle in the classifying space of M-brane charges. Indicated is a continuous path of such generic points in $\mathrm{Cyc}_1 S^4$. In addition, the circles may degenerate (shrink to vanishing diameter), in which case the path may continue through a copy of $\ast \sslash S^1 \simeq B S^1$ (the shape of a reparameterization orbifold singularity, which we do not try to visualize here) at the position of the degenerate circle.  
}
  \centering
    \includegraphics[width=12cm]
      {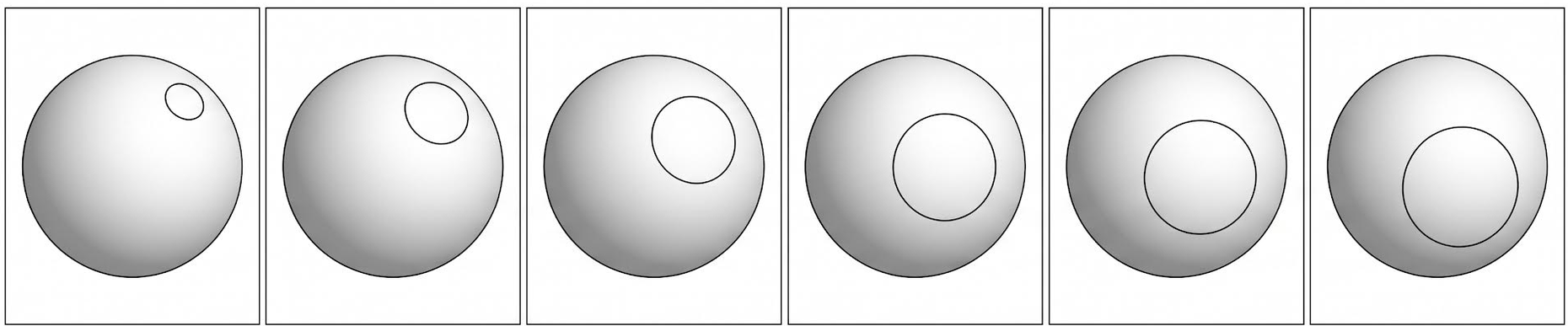}
\end{figure}

This is remarkable in view of the understanding of dimensional reduction of brane charges from \cref{MCircleReduction}:
\begin{remark}[Small-cyclified Cohomotopy]
\label[remark]
{TowardsSmallCyclifiedCohomotopy}
\begin{enumerate}
\item
  The homotopy type of loops $\gamma$ with bound $E(\gamma) \leq E$ for intermediate energy $E_n \leq E < E_{n+1}$ turns out to be equivalent to $L_n\, S^4$, hence its $S^1$-quotient to $\mathrm{Cyc}_n\, S^4$; therefore, the filtration \cref{EnergyFiltration} already shows all intermediate homotopy types of bounded $E(\gamma)$. 

\item
  Therefore, $\mathrm{Cyc}_1 S^4$ is unambiguously the \emph{first stage} after the trivial 0th stage (of vanishing M-circle). 
  
\item With \cref{Cyc1AsAPushout}, we may call $\mathrm{Cyc}_1 S^4 \subset \mathrm{Cyc}\, S^4 $ the \emph{small-cyclification} of $S^4$, equivalently consisting only of those loops that run once through a round circle on $S^4$, cf. \cref{PathOfCircles}.

\item
\label{HigherCycnTruncation}
  All higher $\mathrm{Cyc}_{n > 1} S^4$ already have the same 10-truncation as $\mathrm{Cyc}\, S^4$ itself; therefore $\mathrm{Cyc}_1 S^4$ is the only relevant stage in \cref{EnergyFiltration} for flux quantization in 10D.

\item
  Accordingly, we call the nonabelian cohomology \cref{NonabelianCohomology} with coefficients in $\Omega \mathrm{Cyc}_1 S^4$ the \textbf{small-cyclified 4-Cohomotopy}:
  \begin{equation}
    \label{SmallCyclifiedCohomotopy}
    H^1\bracket({
      -; \Omega\, \mathrm{Cyc}_1 S^4 
    })
    \,\defneq\,
    \pi_0
    \mathrm{Map}\bracket({
      -, \,
      \mathrm{Cyc}_1 S^4
    })
    \mathrlap{\,.}
  \end{equation}
\end{enumerate}
\end{remark}

\begin{remark}[KK-Modes and Small cyclification]
\label[remark]{MeaningOfDirichletFunctional}
  The Dirichlet functional \cref{EnergyFunctional} would be a physical \emph{energy functional} if $S^4$ here played the role of a physical space and $\gamma$ played the role of a physical field.
  Correspondingly, the filtration \cref{EnergyFiltration} would then be reminiscent of a KK-tower.
  But since $S^4$ here instead plays the role of a \emph{classifying space} and $\gamma$ of a \emph{classifying map}, this suggestive interpretation is at least subtle. 
\end{remark}

In any case, \cref{TowardsSmallCyclifiedCohomotopy} allows us to make precise the idea of dimensional reductions of topological charges along ``small'' circles, where the contribution from the M-theory circle to the 10D brane charges is in some sense bounded: We take these to be those reductions \cref{MIIAreductionSchematics} for which the classifying map takes values in loops of $E_1$-bounded variation, hence where it factors (up to homotopy) through the first filtration stage \cref{EnergyFiltration} of the cyclified classifying space, as shown by the following dashed lift in \cref{LiftingThroughCyc1}:
\begin{equation}%
\label{LiftingThroughCyc1}
  \begin{tikzcd}[
    column sep=35pt
  ]
    \mathrlap{\phantom{C}}
    \\
    X^{10}
    \ar[
      r,
      "{
        \substack{
          \text{M-brane} 
          \\
          \text{charges}
        }
      }"
    ]
    &
    S^4
    \\
    \mathrlap{\phantom{C}}
  \end{tikzcd}
  \;\;\;\;\;
  \begin{tikzcd}
    \mathrlap{\phantom{C}}
    \\[+12pt]
    \underset
      {\text{\cref{CyclificationAdjointness}}}
      {\longmapsto}
    \\
    \mathrlap{\phantom{C}}
  \end{tikzcd}
  \;\;\;\;\;
  \scalebox{.9}{their dim-reduction...}
  \;\;
  \begin{tikzcd}[
   column sep=15pt
  ]
    &\phantom{--}& 
    \mathrm{Cyc}_1 S^4
    \ar[
      d,
      hook
    ]
    \\[-10pt]
    X^9
    \ar[
      rr,
      "{ \text{...on any} }"{yshift=-2pt},
      "{ \text{M-circle} }"{swap, yshift=+1pt}
    ]
    \ar[
      dr,
      "{ c_1 }"'
    ]
    \ar[
      urr,
      dashed,
      "{
        \text{ ...on small}
      }"{sloped}
    ]
    &\phantom{--}&
    \mathrm{Cyc}\, S^4
    \mathrlap{\,.}
    \ar[
      dl,
      "{ c_1^{C} }"
    ]
    \\[-10pt]
    & 
    B S^1
  \end{tikzcd}
\end{equation}

For this notion we prove the following results \cref{Classification-of-lifts,The-full-IIA-EM-Gauss-laws,Comparison_map_to_twisted_K_theory}. Since the proofs are intricate, here we outline enough of their main steps so that the reader may reconstruct them.
\begin{lemma}
\label[lemma]{ScndStageInclusion}
The inclusion $\inlinetikzcd{ \mathrm{Cyc}_1 S^4 \ar[r, hook, "{ i }"] \& \mathrm{Cyc}_2 S^4 }$ \cref{EnergyFiltration}:
  \begin{enumerate}
  \item
  \label
  {ScndStageAsAttachmentOf9DiskBundle}
  is the
  attachment of a disk bundle $D^9_{B}$, of fiber dimension 9, 
  
  along its boundary sphere bundle $\partial D^9_B$, fiberwise over $B S^1$,
  
  \item
  \label
  {ScndStageAttachmentOverBaseB}
  over the space $B \defneq K_2 \sslash S^1$ of parameterized loops traversing great circles \emph{twice} at constant rate 
  
  \textup{(cf. \cref{Cyc1AsAPushout})};

  \item
  \label
  {ScndStageAttachmentEquivariance}
  obtained under $(-)\sslash S^1$ by an $S^1$-equivariant attachment $\inlinetikzcd{L_1 S^4 \ar[r, hook] \& L_2 S^4}$.

  \item
  \label
  {ScndStageAttachmentCofiber}
  The cofiber of $i$ over $B S^1$, the fiberwise Thom space of that disk bundle,
  \begin{equation}
    \mathrm{cof}_{B S^1}(i)
    \simeq
    D^9_B
        \underset
          {\mathclap{B S^1}}
          {/}
    \partial D^9_B
    \mathrlap{\,,}
  \end{equation}
  is fiberwise 8-connected, with the next fiberwise (stable) homotopy groups being:
  \begin{subequations}
    \begin{align}
    \pi_9\bracket({ 
      \bracket({
        D^9_B
        \smash{%
          \underset
            {\mathclap{B S^1}}
            {/}
        }
        \partial D^9_B 
      })_b
    }) 
    &
    \simeq \mathbb{Z}
    \\[+7pt]
    \label
    {pi10OfStabilizedCofiber}
    \pi_{10}\bracket({ 
      \bracket({
        \Sigma^\infty_{B S^1} 
        \bracket({
          D^9_B
          \smash{%
            \underset
              {\mathclap{B S^1}}
              {/}
          }
          \partial D^9_B 
        }) 
      })_b
    }) 
    & \simeq \mathbb{Z}_{/2}    
    \end{align}
  \end{subequations}
  for all $b \in B S^1$.
  \end{enumerate}
\end{lemma}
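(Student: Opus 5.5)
The plan is to read off all four items from the \emph{Morse--Bott theory of the energy functional} \cref{EnergyFunctional} on $W^{1,2}\bracket({S^1,S^4})$ for the round metric, in the form worked out by Ziller and Klingenberg (\parencites{Ziller1977}[\S~2.4--2.5]{Klingenberg1978}).

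\textbf{Critical set and indices.} The critical points of $E$ are the closed geodesics. On the round $S^4$ these are:
\begin{enumerate}
\item the constant loops, with $E=0$;
\item for each $n \geq 1$, the manifold $K_n$ of loops traversing a great circle $n$ times at constant rate, with $E = E_n = n^2 E_1$.
\end{enumerate}
Each $K_n$ is diffeomorphic to the unit tangent bundle $T^1S^4 \simeq V_2\bracket({\mathbb{R}^5})$, a 7-manifold. It is a non-degenerate critical manifold in the sense of Bott. Its Morse index is given by counting conjugate points along the $n$-fold geodesic:
\[
  \mathrm{ind}(K_n) = (2n-1)(4-1),
\]
so that $K_2$ has index $9$.

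\textbf{Items (i)--(iii).} By the Morse--Bott lemma on Hilbert manifolds (Palais), passing the critical level $E_2$ changes the homotopy type of the sublevel set. Concretely, $L_2S^4$ is obtained from $L_1S^4$ up to homotopy by attaching the negative-disk bundle $D^9 \to K_2$ of the Hessian along its sphere bundle. Here one uses \cref{TowardsSmallCyclifiedCohomotopy} to identify the sublevel sets for $E_1 \leq E < E_2$ with $L_1S^4$. Since $E$, its gradient flow and the Hessian are all invariant under rigid rotation of loops, the attachment can be chosen $S^1$-equivariantly; this is item (iii).

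Applying the $\infty$-functor $(-)\sslash S^1$ then yields items (i) and (ii), with base $B = K_2\sslash S^1$. I expect the \textbf{main technical obstacle} to lie here. The circle acts on $K_2$ with isotropy $\mathbb{Z}_{/2}$ (the half-period shift), so the equivariant Morse--Bott attachment must be justified with non-free isotropy. My approach is to verify that the negative bundle is an honest $S^1$-equivariant vector bundle and that the equivariant deformation retraction of Wasserman-type equivariant Morse theory applies. The homotopy quotient is insensitive to the isotropy, so the fiberwise description over $BS^1$ then follows.

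\textbf{Item (iv), connectivity and $\pi_9$.} The fiber over a point $b \in BS^1$ of an $S^1$-homotopy quotient is the underlying space. Hence the fiberwise cofiber is the Thom space $\mathrm{Th}\bracket({\nu})$ of the rank-9 negative bundle $\nu \to K_2 \simeq V_2\bracket({\mathbb{R}^5})$. A Thom space of a rank-9 bundle over a connected CW base is 8-connected. Its first homotopy group is
\[
  \pi_9 \simeq H_9\bracket({\mathrm{Th}(\nu)}) \simeq H_0\bracket({K_2}) \simeq \mathbb{Z},
\]
by Hurewicz and the Thom isomorphism; the latter is untwisted since $K_2$ is simply connected, so $\nu$ is orientable.

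\textbf{Item (iv), stable $\pi_{10}$.} I will run the Atiyah--Hirzebruch spectral sequence for stable homotopy, using the Thom isomorphism to shift by $9$. The relevant inputs are the integral homology groups of $V_2\bracket({\mathbb{R}^5})$:
\[
  H_0 = \mathbb{Z},\quad
  H_1 = H_2 = 0,\quad
  H_3 = \mathbb{Z}_{/2},\quad
  H_4 = 0,\quad
  H_7 = \mathbb{Z}.
\]
In total degree 10 the only nonzero $E^2$-term is $H_0 \otimes \pi_1^s \simeq \mathbb{Z}_{/2}$. The term $H_1 \otimes \pi_0^s$ vanishes, and the incoming $d_2$ from $H_2 \otimes \pi_0^s$ vanishes since $H_2=0$. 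Consequently no differential can hit or leave this term, and it survives to $E^\infty$. This gives $\pi_{10} \simeq \mathbb{Z}_{/2}$ in \cref{pi10OfStabilizedCofiber}.

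The one genuine input to check here is the homology of $V_2\bracket({\mathbb{R}^5})$ in low degrees, via the fibration $S^3 \to V_2\bracket({\mathbb{R}^5}) \to S^4$ with Euler class $2$. By Freudenthal, $\pi_{10}$ of the 8-connected Thom space already lies in the stable range, so the stabilization in the statement is harmless.
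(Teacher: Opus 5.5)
Your proposal is correct and follows essentially the paper's route. For (i)--(iii) you invoke Klingenberg's Morse--Bott analysis of the index-9 critical manifold $K_2 \simeq V_2(\mathbb{R}^5)$ with its $S^1$-equivariant attachment before applying $(-)\sslash S^1$, and you then read off (iv) from the low-dimensional topology of $K_2$. The only cosmetic difference is in (iv): the paper uses the CW structure of $K_2$ (cells in dimensions $0,3,4,7$) to see that the bottom cell $S^9 \hookrightarrow \mathrm{Th}(\nu)$ is 11-connected, which gives $\pi_9$ and even the unstable $\pi_{10}$ at once, whereas your Hurewicz/AHSS argument extracts the same information from $H_{\leq 2}(K_2)$.
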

\begin{proof}
  Before taking the $S^1$-quotient,
  \cref{ScndStageAsAttachmentOf9DiskBundle} is \cite[Thm. 2.4.10 \& Cor. 2.4.11]{Klingenberg1978}, 
  \cref{ScndStageAttachmentOverBaseB} is \cite[Prop. 2.5.3]{Klingenberg1978},
  and \cref{ScndStageAttachmentEquivariance} is a special case of 
  \cite[Lem. 2.4.7]{Klingenberg1978}. 
  
  From this, \cref{ScndStageAttachmentCofiber} follows using that $K_2$ has a CW structure with cells in dimensions $d \in \{0,3,4,7\}$, whence $\bracket({\mathrm{cof}_{B S^1}(i)})_b$ has non-basepoint cells in dimensions $d \in \{9, 12, 13, 16\}$, so that the inclusion of the bottom cell is 11-connected. Therefore $\pi_9\bracket({ \bracket({\mathrm{cof}_{B S^1}(i) })_b }) \simeq \pi_9\bracket({ S^9 }) \simeq \mathbb{Z}$ and $\pi_{10}^s\bracket({ \bracket({\mathrm{cof}_{B S^1}(i) })_b }) \simeq \pi^s_{10}\bracket({ S^9 }) \simeq \pi_1^s\bracket({S^0}) \simeq \mathbb{Z}_{/2}$.
\end{proof}

\begin{theorem}[Classification of lifts from plain to small-cyclified Cohomotopy]
\label[theorem]{Classification-of-lifts}
For connected oriented Cauchy surfaces $X^9$ and a given solid map in \cref{LiftingThroughCyc1}:
\begin{enumerate}
  \item
  \label{TheObstructionClass}
  The obstruction to the existence of a dashed lift in \cref{LiftingThroughCyc1} is the \emph{D0-brane tadpole}, namely the class $[H_3 \wedge F_6] \in H^9_{\mathrm{dR}}\bracket({ X^9 })$. 

  \item 
  \label{TorsorProperty}
  When this obstruction vanishes \textup{($[H_3 \wedge F_6] = 0$)} 
  the set of homotopy classes of lifts is a torsor over:
  \begin{enumerate}
    \item 
    \label{TorsorOfLiftsGenerally}
    generally: stable 8-Cohomotopy $\mathbb{S}^{6 + \rho_2 f_2 }\bracket({ X^9 })$ \cref{StableCohomotopy}, twisted by the M-circle bundle, 

    whose twist class over a 9-manifold is the mod-2 reduction $\rho_2$ of the D6-brane charge $f_2$, regarded as having coefficients in $\pi_2 B \mathrm{GL}_1(\mathbb{S}) \simeq \pi_1(\mathbb{S}) \simeq \mathbb{Z}_{/2}$ \textup{(cf. \cite[\S~1.4]{AndoBlumbergGepnerHopkinsRezk2014})},
    
    \item 
    \label{TorsorOfLiftsOverNoncompactManifolds}
    if $X^9$ is non-compact or has a boundary: $H^8\bracket({ X^9; \mathbb{Z} })$.
  \end{enumerate}
\end{enumerate}
\end{theorem}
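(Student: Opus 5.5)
The plan is to reduce the lifting problem in \cref{LiftingThroughCyc1} to a fiberwise obstruction problem over $B S^1$ and to use \cref{ScndStageInclusion} as the only geometric input. By \cref{TowardsSmallCyclifiedCohomotopy}, item \cref{HigherCycnTruncation}, we may replace $\mathrm{Cyc}\, S^4$ by $\mathrm{Cyc}_2 S^4$ without changing homotopy classes of maps out of $X^9$. Lifts of a given map $X^9 \to \mathrm{Cyc}_2 S^4$ (over $c_1$) through $i : \mathrm{Cyc}_1 S^4 \hookrightarrow \mathrm{Cyc}_2 S^4$ are then governed by the relative homotopy of $i$. Since $i$ is a fiberwise cell attachment of a 9-disk bundle (\cref{ScndStageInclusion}, items \cref{ScndStageAsAttachmentOf9DiskBundle}--\cref{ScndStageAttachmentEquivariance}), the homotopy fiber of $i$ is 7-connected. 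By Blakers--Massey, in the range relevant for a 9-dimensional domain, this homotopy fiber agrees with $\Omega$ of the fiberwise cofiber $D^9_B/_{B S^1}\partial D^9_B$, which is fiberwise 8-connected with bottom cell $S^9$ and next cells in dimension $\geq 12$.

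\textbf{Step 1 (obstruction).} Parameterized obstruction theory on the 9-dimensional complex $X^9$ has exactly one possibly nonzero primary obstruction. It lies in $H^9(X^9; \pi_9(\text{cofiber}))$, twisted by the action of $\pi_1$ of the base. That action is trivial here because $B S^1$ is simply connected and $X^9$ is oriented, so the group is $H^9(X^9;\mathbb{Z})$. Concretely, the obstruction is the pullback along the classifying map of a universal class $\theta \in H^9(\mathrm{Cyc}_2 S^4; \mathbb{Z})$, the fiberwise Thom class of the attached disk bundle. The task is to identify $\theta$ rationally: I expect its image in the Sullivan model of $\mathrm{Cyc}\, S^4$ (\cref{ObtainingGaussLawsForCycS4}) to be the closed 9-form $H_3 \wedge F_6$. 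The reason is that in the minimal model of $\mathrm{Cyc}_1 S^4$ the class $[H_3 F_6]$ must be killed by a new degree-8 generator $F_8$ with $\mathrm{d} F_8 = H_3 \wedge F_6$, which is where the D0-flux enters. This identification is the main obstacle. To carry it out, I would compute $H^\bullet(\mathrm{Cyc}_1 S^4;\mathbb{Q})$ from the homotopy pushout \cref{TheHomotopyPushout} using Mayer--Vietoris. The space of oriented great circles is $V_2(\mathbb{R}^5)/\mathrm{SO}(2) \simeq \widetilde{\mathrm{Gr}}_2(\mathbb{R}^5)$, and its cohomology is computed with a Gysin or Serre spectral sequence. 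I would then compare with the Vigu\'e--Burghelea model of $\mathrm{Cyc}\, S^4$ and check that the restriction kernel in degree 9 is spanned by $H_3F_6$ and that $\theta$ is integral and primitive there. Integrally the obstruction group is $H^9(X^9;\mathbb{Z})$, which is $\mathbb{Z}$ for closed $X^9$, so a rational identification suffices. For non-compact $X^9$ the group vanishes and there is no obstruction.

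\textbf{Step 2 (torsor).} When the obstruction vanishes, the difference classes of lifts form a torsor over homotopy classes of fiberwise maps into the fiberwise loop space of the cofiber, modulo the action of $\pi_1$ of the relevant mapping space. Because $X^9$ has dimension at most $9$ and the cofiber's fiber is 8-connected with next cells in dimension $\geq 12$, we are in the stable range. So this set is the twisted stable cohomotopy of $X^9$ in degree 8, twisted by the stable spherical fibration underlying the disk bundle, pulled back along $X^9 \to \mathrm{Cyc}_2S^4 \to B S^1$. Over a 9-complex only the first Postnikov piece of $B\mathrm{GL}_1(\mathbb{S})$ in degree 2 matters, and this is $\pi_1(\mathbb{S})\simeq\mathbb{Z}_{/2}$. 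I would identify this twist with $w_2$ of the 9-dimensional normal (Klingenberg index) bundle. Its restriction to $B S^1$ is the mod-2 reduction of $c_1$ times the $S^1$-weight parity: the doubly traversed circles carry the rotation action with even and odd weights, and I would check that the weights give odd total. This yields $\rho_2 f_2$ and hence item \cref{TorsorOfLiftsGenerally}, written with the degree shift as $\mathbb{S}^{6+\rho_2 f_2}$.

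For item \cref{TorsorOfLiftsOverNoncompactManifolds}, note that when $X^9$ is non-compact or has boundary it is homotopy equivalent to an 8-complex. The twisted stable cohomotopy in degree 8 then reduces to its bottom Postnikov layer, $H^8(X^9;\pi_0\mathbb{S})=H^8(X^9;\mathbb{Z})$. The twist plays no role here, since the $\mathbb{Z}/2$ piece of $\pi_1(\mathbb{S})$ would only contribute through $H^9$. Connectivity of $X^9$ is used to make the torsor statement independent of basepoint choices.
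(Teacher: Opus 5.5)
Your proposal is correct and follows the paper's proof in all essentials. You pass to $i:\mathrm{Cyc}_1 S^4\hookrightarrow\mathrm{Cyc}_2 S^4$, use Klingenberg's 9-disk-bundle attachment and Blakers--Massey to present lifts, in the range seen by 9-complexes, as governed by the fiberwise stabilized Thom object with $k$-invariant $\mathrm{Sq}^2+w_2\cup(-)$ and $w_2=\rho_2 c_1(L)$, identify the single obstruction in $H^9(X^9;\mathbb{Z})$ rationally with $H_3\wedge F_6$, and read off the twisted stable Cohomotopy torsor and its collapse to $H^8$ from the AHSS. The differences are only refinements: you get the obstruction directly from primary obstruction theory as the image of the Thom class, and you pin it down via the exact sequence of the pair together with $H^9(\mathrm{Cyc}_1 S^4;\mathbb{Q})=0$, which also makes explicit that it is rationally nonzero, a point the paper's ``only non-exact candidate'' argument leaves implicit.
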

\begin{proof}[Proof outline]
  The strategy is to first realize the inclusion of $\mathrm{Cyc}_1 S^4$ as the homotopy fiber, in the relevant degrees, of a map $o$, whence that map is then the universal obstruction class and the loops in its codomain give the torsor of lifts.

  To this end we proceed as follows:
  \begin{enumerate}
  \item
  By \cref{TowardsSmallCyclifiedCohomotopy} \cref{HigherCycnTruncation}, the lift is equivalently from the second Morse--Bott stage, through $\inlinetikzcd{ \mathrm{Cyc}_1 S^4 \ar[r, hook, "{ i }"] \& \mathrm{Cyc}_2\, S^4 }$ \cref{EnergyFiltration}. From 
  \cref{ScndStageInclusion} we know that  
  the stabilization of the cofiber of this inclusion has $\pi_{10} \simeq \mathbb{Z}_{/2}$ \cref{pi10OfStabilizedCofiber}. Hence the 10-truncation of the stabilization is a 2-stage Postnikov system of this form:
  \begin{equation}
  \label{PostnikovForCofi}
   \begin{tikzcd}[
     column sep=35pt
   ]
     \truncation{10}{
       \Sigma^\infty_{B S^1}
       \mathrm{cof}_{B S^1}(i)
     }     
     \ar[
       rr,
     ]
     \ar[d]
     \ar[
       dr,
       phantom,
       "{ \lrcorner }"{pos=.2}
     ]
     &[-15pt]
     &[+50pt]
     \ast
     \ar[d]
     \\
     \truncation{9}{
       \Sigma^\infty_{B S^1}
       \mathrm{cof}_{B S^1}(i)
     }
     \ar[
       r,
       "{
         U
       }",
       "{ \sim }"{swap}
     ]
     &
     \Sigma^9 H \mathbb{Z}
     \ar[
       r,
       "{
         \scaledbracket({
         \mathrm{Sq}^2
         \,+\,
         c \,\cup\, (-)
         }) \circ \rho_2
       }"
      ]
     &
     \Sigma^{11} H \mathbb{Z}_{/2}
     \mathrlap{\,,}
   \end{tikzcd}
  \end{equation}
  where: 
  \begin{enumerate}
  \item 
  the diagram is over $B S^1$, with $\Sigma^k H A$ denoting the constant bundles there,
  \item
  $U$ denotes the Thom class of the Thom space $D^9_B/_{_{B S^1}}\partial D^9_B$, 
  \item
  $c$ is some element of $H^2\bracket({B S^1; \mathbb{Z}_{/2}})$ pulled back to $B$, to be determined next.
\end{enumerate}

\item
In \cref{PostnikovForCofi} the coefficient of $\mathrm{Sq}^2$ is fixed by inspection of the situation over a point of $B S^1$, while the class $c$ is determined by Thom's formula:

$\mathrm{Sq}^2 U = w_2 \cup U$, where the above homotopy pullback \cref{PostnikovForCofi} also enforces $\mathrm{Sq}^2 U + c \cup U = 0$. Therefore $(w_2 - c)\cup U = 0$, and since $(-)\cup U$ is the Thom isomorphism, this implies 
\begin{equation}
  c = w_2
  \mathrlap{\,.}
\end{equation}

\item 
With Morse--Bott theory one finds that $w_2 = \rho_2 f_2^C = \rho_2 c_1(L)$
is (the mod-2 reduction of) the first Chern class of the universal complex line bundle $L$ over $B S^1$. With this, a directly analogous analysis reveals that the 10-truncation of the $(7+L)$-shifted sphere spectrum has the same Postnikov system \cref{PostnikovForCofi}, and hence is equivalent:
\begin{equation}
\label
{IdentifyingThomSpaceWithTwistedSpherical}
  \truncation{10}{
    \Sigma^\infty_{B S^1}
    \mathrm{cof}_{B S^1}(i)
  }
  \simeq
  \truncation{10}{
    \Sigma^{7+L}_{B S^1}
    \,
    \mathbb{S}
  }
  \mathrlap{\,.}
\end{equation}

  \item
   Thereby, we have obtained a cofibration sequence extended to a diagram of this form:
  \begin{equation}
    \begin{tikzcd}[
      column sep=20pt
    ]
      \mathrm{Cyc}_1 S^4
      \ar[r, hook, "{ i }"]
      &
      \mathrm{Cyc}_2 S^4
      \ar[
        rrr,
        downhorup,
        "{ o }"{description}
      ]
      \ar[
        r
      ]
      &
      \mathrm{cof}_{B S^1}(i)
      \ar[
        r,
        "{
          \sim_{10}
        }"
      ]
      &
      \Omega^\infty_{B S^1}
      \truncation{10}{
        \Sigma^\infty_{B S^1}
        \bracket({
          D^9_B
          \smash{
            \underset
              {\mathclap{B S^1}}
              {/}
          }
          \partial D^9_B
        })
      }
      \ar[
        r,
        "{ \sim }"{swap},
        "{        \text{\cref{IdentifyingThomSpaceWithTwistedSpherical}} 
        }"
      ]
      &
      \Omega^\infty_{B S^1}
      \truncation{10}{
        \Sigma^{7 + L}_{B S^1}
        \,
        \mathbb{S}
      }
      \mathrlap{\,.}
    \end{tikzcd}
  \end{equation}
  Dually, by the  Blakers--Massey theorem this exhibits $\mathrm{Cyc}_1 S^4$ as the homotopy fiber of the map $o$ in low dimensions:
  \begin{equation}
  \label{Cyc1S4AsApproximatedByAHomotopyFiber}
    \begin{tikzcd}
      \mathrm{Cyc}_1 S^4
      \ar[r, "{ \sim_{9} }"]
      &
      \mathrm{fib}_{/B S^1}
      \big(
        \mathrm{Cyc}_2 S^4
        \ar[r, "{ o }"]
        &[-10pt]
        \Omega^\infty_{B S^1}
        \truncation
          {10}
          {
           \Sigma^{7+L}_{B S^1}
           \,
           \mathbb{S}
          }
      \big)
      \mathrlap{\,.}
    \end{tikzcd}
  \end{equation}

  With this in hand as advertised, we conclude as intended:
  
  \item 
  The home of the obstruction class in  \cref{TheObstructionClass} follows by analysis of the corresponding Atiyah--Hirzebruch spectral sequence for twisted stable Cohomotopy, which in degree 9 shows the single term $H^9\bracket({X^9; \mathbb{Z}})$ due to 10-truncation of the coefficients. 
  
  This group is $\simeq \mathbb{Z}$ when $X^9$ is closed and oriented (and vanishes for non-compact $X^9$, in which case the obstruction class trivially vanishes, too). Therefore the nontrivial obstruction class is equivalently a generator of  $H^9\bracket({ X^9; \mathbb{R} })$. 
  
  To identify this,  \cref{ObtainingGaussLawsForCycS4} shows that the only candidate closed degree-9 elements (in the Sullivan model, cf. \cref{CEGenerators}) spanning this $\mathbb{R}$-rational cohomology group are:
  \begin{subequations}
  \label{D0TadpoleGeneratorAndCousins}
  \begin{align}
    \label
    {D0TadpoleAsCohomologyGenerator}
    & H_3 \wedge F_6
    \\
    &
    H_3 
      \wedge 
    F_2 
      \wedge 
    F_4
    =
    \mathrm{d}\bracket({
      \tfrac{1}{2}
      F_4 \wedge F_4
    })
    \\
    &
    H_3 
      \wedge 
    F_2 
      \wedge 
    F_2 
      \wedge
    F_2
    =
    \mathrm{d}\bracket({
      F_4 
        \wedge 
      F_2 
        \wedge 
      F_2
    })
  \end{align}
  \end{subequations}
  (while the remaining candidate $F_2 \wedge H_7$ is not closed).
  Of these, only \cref{D0TadpoleAsCohomologyGenerator} is not universally exact, and hence must be the universal obstruction class, whose pullback to $X^9$ is as claimed in \cref{TheObstructionClass}.

  \item 
  Finally, the torsor property of \cref{TorsorProperty} is now a formal consequence. That this is over the group claimed in \cref{TorsorOfLiftsGenerally}, and that it collapses to $H^8$ in \cref{TorsorOfLiftsOverNoncompactManifolds}, follows from the long exact sequence of the $k$-invariant $\bracket({\mathrm{Sq}^2 + \rho_2 f_2 \cup (-)}) \circ \rho_2$ of $\Sigma^{7 + L}\, \mathbb{S}$ discussed above.
  \qedhere
  \end{enumerate}
\end{proof}

\begin{remark}
To highlight some impact of \cref{Classification-of-lifts} on the physics of type IIA:
\begin{enumerate}
  \item
  In the physically realistic situation where space, $X^9$, is non-compact, the  obstruction \cref{TheObstructionClass}
  vanishes automatically, because then $H^9_{\mathrm{dR}}\bracket({ X^9 }) = 0$ (same when $X^9$ has a boundary).

  \item 
  The choice in \cref{TorsorProperty} encodes a topological charge that was missing in 11D: the charge of black D0-branes (sourcing $F_8$)---the electric duals of the more commonly considered magnetic black D6-branes (sourcing $F_2$), which in 11D are incarnated as the circle fibration itself, cf. \cref{DBraneAndSourcedFluxes}.
\end{enumerate}
\end{remark}
Hence, passing to the sector of dimensional reductions along small circles means demanding that dim-reduced classifying maps \emph{and their homotopies} (gauge transformations) factor through the small-cyclification $\mathrm{Cyc}_1 S^4$ \cref{LiftingThroughCyc1}. This makes $\mathrm{Cyc}_1 S^4$ the classifying space for flux quantization of the \emph{small circle reduction} of cohomotopical 11D SuGra. We find: 
\begin{proposition}[The full type IIA Gauss laws from small-cyclified 4-Cohomotopy]
\label[proposition]
{The-full-IIA-EM-Gauss-laws}
The Gauss laws induced, via \cref{AdmissibleFluxQuantizationLaw}, 
by $\mathrm{Cyc}_1 S^4$ \cref{EnergyFiltration} on any $X^9$ are:
\begin{equation}
\label{TheFullGaussLaw}
  \Omega^1_{\mathrm{cl}}
  \bracket({
    X^9;
    \mathfrak{l}
    \mathrm{Cyc}_1 S^4
  })
  \,\simeq\,
  \left\{
  \begin{aligned}
    H_3 & 
    \in \Omega^3_{\mathrm{dR}}\bracket({X^9})
    \\
    H_7 & 
    \in \Omega^7_{\mathrm{dR}}\bracket({X^9})
    \\  
    F_2 & 
    \in \Omega^2_{\mathrm{dR}}\bracket({X^9})
    \\
    F_4 & 
    \in \Omega^4_{\mathrm{dR}}\bracket({X^9})
    \\
    F_6 & 
    \in \Omega^6_{\mathrm{dR}}\bracket({X^9})
    \\
    F_8 & 
    \in \Omega^8_{\mathrm{dR}}\bracket({X^9})
  \end{aligned}
  \,\middle\vert\,
  \def\arraystretch{2}
  \begin{aligned}
    \mathrm{d}\, H_3 & =
    0
    \\
    \mathrm{d}\, H_7 & =
    \tfrac{1}{2} F_4 \wedge F_4 - F_2 \wedge F_6
    \mathclap{\phantom{\bracket({X^9})}}
    \\
    \mathrm{d}\, F_2 & = 0
    \mathclap{\phantom{\bracket({X^9})}}
    \\
    \mathrm{d}\, F_4 & = H_3 \wedge F_2
    \mathclap{\phantom{\bracket({X^9})}}
    \\
    \mathrm{d}\, F_6 & = H_3 \wedge F_4
    \mathclap{\phantom{\bracket({X^9})}}
    \\
    \mathrm{d}\, F_8 & = H_3 \wedge F_6
    \mathclap{\phantom{\bracket({X^9})}}
  \end{aligned}
  \right\}
  \mathrlap{.}
\end{equation}
\end{proposition}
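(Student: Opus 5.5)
The plan is to compute the minimal Sullivan model of $\mathrm{Cyc}_1 S^4$ from that of $\mathrm{Cyc}_2 S^4$, which has the same 10-truncation as $\mathrm{Cyc}\, S^4$ by \cref{TowardsSmallCyclifiedCohomotopy} \cref{HigherCycnTruncation}. I then read off the Gauss laws via \cref{AdmissibleFluxQuantizationLaw} and \cref{CEGenerators}. Only generators of degree $\leq 9$ matter for forms on $X^9$, so it suffices to know the rational homotopy type of $\mathrm{Cyc}_1 S^4$ through degree 9, plus the differentials landing in degree 9.

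The key input is the fiber sequence \cref{Cyc1S4AsApproximatedByAHomotopyFiber} from the proof of \cref{Classification-of-lifts}. It exhibits $\mathrm{Cyc}_1 S^4 \to \mathrm{Cyc}_2 S^4$ as $9$-equivalent to the homotopy fiber of $o$ into $\Omega^\infty_{B S^1}\truncation{10}{\Sigma^{7+L}_{B S^1}\mathbb{S}}$.

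Rationally, the target is the parameterized Eilenberg--MacLane object $\Sigma^{8} H\mathbb{Q}$ over $B S^1$: the twist by $L$ and the $\mathbb{Z}_{/2}$ in degree 10 both disappear rationally. So $o$ is rationally a single degree-9 class on $\mathrm{Cyc}_2 S^4$, and by the proof of \cref{Classification-of-lifts} this class is the D0-tadpole $H_3\wedge F_6$ (\cref{D0TadpoleAsCohomologyGenerator}), up to a nonzero rational factor.

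Taking the homotopy fiber of a map to $K(\mathbb{Q},9)$ adds to the Sullivan model one new generator $F_8$ of degree 8 with $\mathrm{d}F_8 = H_3\wedge F_6$. Starting from the model of $\mathrm{Cyc}\, S^4$ in \cref{ObtainingGaussLawsForCycS4}, which is valid through the relevant range, this gives exactly the system \cref{TheFullGaussLaw}. I rescale $F_8$ to absorb the constant.

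The main obstacle is the rational range. The equivalence in \cref{Cyc1S4AsApproximatedByAHomotopyFiber} is only a $9$-equivalence, so one must check that it pins down generators through degree 8 together with the differential of the degree-8 generator. A 9-connected map induces an isomorphism on $\pi_{\leq 8}$ and a surjection on $\pi_9$; dually, it gives an isomorphism of minimal models in degrees $\leq 8$ and an injection in degree 9. This suffices for generators in degrees $\leq 8$ and their differentials. The possibility of new degree-9 generators is irrelevant, since on $X^9$ they would be $9$-forms with differential landing in degree $10$, which vanishes.

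A second check is that no extra degree-8 generator arises from the Whitehead products of $\mathrm{Cyc}_2 S^4$ itself. This follows because $\mathrm{Cyc}_2 S^4$ agrees with $\mathrm{Cyc}\, S^4$ through degree 10, and the model of the latter, by \cite{VigueBurghelea1985}, has no degree-8 generator. As a cross-check, I would confirm from \cref{Cyc1AsAPushout} that $\pi_8(\mathrm{Cyc}_1 S^4)\otimes\mathbb{Q}\simeq\mathbb{Q}$, using the pushout's rational cohomology: the space of oriented great circles is the Grassmannian $\widetilde{\mathrm{Gr}}_2(\mathbb{R}^5)$.
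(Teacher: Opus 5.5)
\paragraph{A gap at degree 9}
Your route differs from the paper's. The paper uses the 8-connectivity of $\mathrm{Cyc}_1S^4\hookrightarrow\mathrm{Cyc}\,S^4$ for degrees $\leq 7$, and the Mayer--Vietoris vanishing $H^9(\mathrm{Cyc}_1S^4;\mathbb{Q})=0$ to force a primitive of $H_3\wedge F_6$. A count of degree-8 monomials then shows this primitive is a new generator, and \cref{HomotopyOfCyc1S4} is used to count generators in degrees 8 and 9. Building the model instead as the extension $M_{\mathrm{Cyc}_2S^4}\otimes\Lambda F_8$ along the obstruction map $o$ is a sound idea.

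However, your treatment of degree 9 fails. A degree-9 generator $y_9$ of the minimal model of $\mathrm{Cyc}_1S^4$ is not irrelevant on $X^9$. Its Gauss law $\mathrm{d}y_9=P_{10}$ is vacuous there, but $y_9$ is still sent to an arbitrary 9-form. So $\Omega^1_{\mathrm{cl}}(X^9;\mathfrak{l}\,\mathrm{Cyc}_1S^4)$ would be the displayed set times $\Omega^9_{\mathrm{dR}}(X^9)$, and the claimed isomorphism would be false. As you yourself say at the outset, generators through degree 9 matter; only those of degree $\geq 10$ drop out.

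With your reading of the comparison map (iso on $\pi_{\leq 8}$, onto on $\pi_9$), nothing in your argument excludes such a generator. You need $\pi_9(\mathrm{Cyc}_1S^4)\otimes\mathbb{Q}=0$, which the paper takes from the Hilton--Milnor computation in \cref{HomotopyOfCyc1S4}. There is also a cheap fix inside your own approach:
\begin{enumerate}
\item In the paper's notation, $\sim_9$ in \cref{Cyc1S4AsApproximatedByAHomotopyFiber} means $10$-connected, i.e.\ iso on $\pi_{\leq 9}$. This is what fiberwise Blakers--Massey gives, since the fibers $L_1S^4$ are 2-connected and the attached cells have dimension $\geq 9$.
\item Then use the rational long exact sequence of $\mathrm{fib}(o)\to\mathrm{Cyc}_2S^4\to K(\mathbb{Q},9)$, together with $\pi_9(\mathrm{Cyc}_2S^4)\otimes\mathbb{Q}\simeq\pi_9(\mathrm{Cyc}\,S^4)\otimes\mathbb{Q}=0$ (generators only in degrees $2,3,4,6,7$).
\item This gives $\pi_9(\mathrm{fib}(o))\otimes\mathbb{Q}=0$.
\end{enumerate}

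\paragraph{Two smaller issues}
First, the rationalized target is $\Sigma^9H\mathbb{Q}$, since $7+L$ has virtual rank 9; your subsequent use of $K(\mathbb{Q},9)$ is the correct one.

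Second, and more important, you need the coefficient $c$ in $o_{\mathbb{Q}}=c\,H_3\wedge F_6$ to be nonzero. Otherwise the new generator is closed and you get $\mathrm{d}F_8=0$ instead of $\mathrm{d}F_8=H_3\wedge F_6$. Citing the proof of \cref{Classification-of-lifts} for this is thin: there the rational class is only shown to lie in the span of $H_3\wedge F_6$, and its nontriviality is taken for granted. The clean justification is that $H_3\wedge F_6$ must die on $\mathrm{Cyc}_1S^4$, because $H^9(\mathrm{Cyc}_1S^4;\mathbb{Q})=0$ by Mayer--Vietoris for the pushout \cref{TheHomotopyPushout}. That is exactly the computation on which the paper's proof rests, so your route does not actually avoid it; you should carry it out or cite it explicitly.
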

\begin{proof}[Proof outline]
  The Morse attachment description   in \cref{ScndStageInclusion} shows that the inclusion $\inlinetikzcd{ \mathrm{Cyc}_1 S^4 \ar[r, hook] \& \mathrm{Cyc}\, S^4 }$ is 8-connected, hence induces isomorphisms on $\pi_{\leq 7}\bracket({ - }) \otimes \mathbb{Q}$  and on
  $H^{\leq 7}\bracket({-;\mathbb{Q}})$ (the second stage attaches cells of dimensions $\geq 9$).
  At the same time, the Mayer--Vietoris sequence  for the pushout in \cref{Cyc1AsAPushout} shows that in degree 9 we have 
  \begin{equation}
    H^9\bracket({ 
      \mathrm{Cyc}_1 S^4
      ; 
      \mathbb{Q} 
    }) 
      \simeq 
    0
    \mathrlap{\,,}
    \phantom{%
    \mathbb{Q}\bracket\langle{
      H_3 \wedge F_6
     }\rangle    
    }
  \end{equation}
  while \cref{ObtainingGaussLawsForCycS4} gives: 
  \begin{equation}
  \label
  {Rational9CohomologyOfCycS4}
    H^9\bracket({
      \mathrm{Cyc}\, S^4
      ;
      \mathbb{Q}
    })
      \simeq 
    \mathbb{Q}\bracket\langle{
      H_3 \wedge F_6
     }\rangle
    \mathrlap{\,.}
  \end{equation}
  Therefore, the induced morphism of minimal Sullivan models is an isomorphism on dg-subalgebras generated in degrees $\leq 7$. 
  
  But then the model for $\mathrm{Cyc}_1 S^4$ needs to feature an element $\omega_8$ with $\mathrm{d} \, \omega_8 = H_3 \wedge F_6$ \cref{Rational9CohomologyOfCycS4}. Since the differential of the linear span of available monomials of degree 8 does not contain this element:
  \begin{equation}
    \mathrm{d}
    \bracket({
    \mathbb{Q}
    \big\langle
      F_2^4
      ,\;
      F_2^2 F_4
      ,\;
      F_4^2 
      ,\;
      F_2 F_6
    \big\rangle
    })
    =
    \mathbb{Q}
    \big\langle{
      F_2^3 H_3
      ,\;
      F_2 H_3 F_4
    }\big\rangle
    \centernot\ni
    H_3 F_6
    \mathrlap{\,,}
  \end{equation}
  it must be a new generator, $\omega_8 = F_8$, as claimed.
  (Repeating this argument exhibits a further generator $F_{10}$, which becomes visible over 10-manifolds, satisfying $\mathrm{d}\, F_{10} = H_3 \wedge F_8$.)

  Finally, the (independent) \cref{HomotopyOfCyc1S4,HomotopyOfCyc1S4Table} show that 
  \[
    \begin{aligned}
      \pi_8
      \bracket({ 
        \mathrm{Cyc}_1\, S^4
      }) 
      \otimes 
      \mathbb{Q}
      & 
      \simeq 
      \mathbb{Q}
      \\
      \pi_9
      \bracket({ 
        \mathrm{Cyc}_1\, S^4
      }) 
      \otimes 
      \mathbb{Q}
      & 
      \simeq 
      0
      \mathrlap{\,,}
    \end{aligned}
  \] 
  whence $F_8$ must be the only additional Sullivan model generator through degree 9.
\end{proof}
\begin{remark}
\label[remark]{F8Appears}
  Interestingly, this system \cref{TheFullGaussLaw} is finally the \emph{complete} electromagnetic flux content of IIA (cf. \cite[\S~22.1.3]{Ortin2015}), which both:
  \begin{enumerate}
    \item retains the nonlinear $H_7$ from 11D \cref{GaussLawsForCycS4},
    \item includes the expected $F_8$  \cref{NSRRFluxGaussLaw} not obtained from 11D.

   This $F_8$ is the flux sourced by the D0-brane charges which form the torsor of lifts in \cref{Classification-of-lifts} \cref{TorsorProperty}. 
  \end{enumerate}

In fact, \cref{The-full-IIA-EM-Gauss-laws} generalizes to 10-manifolds, where a further flux species $F_{10}$ would appear in \cref{TheFullGaussLaw}, satisfying $\mathrm{d}\, F_{10} = H_3 \wedge F_8$.
\end{remark}

Accordingly, we find that the integral brane charges as seen (cf. \cref{MeasuringDBraneCharge}) in small-cyclified 4-Cohomotopy  subsume all supersymmetric type IIA brane species (cf. \cref{DBraneAndSourcedFluxes}), including the D0, together with a range of torsion charges, in proliferation of the phenomenon \cref{M2ChargeAccordingToHypothesisH}:
\begin{table}[htb]
\caption{%
\label{HomotopyOfCyc1S4Table}%
The low-degree homotopy groups of $\mathrm{Cyc}_1 S^4$ and their generators (\cref{HomotopyOfCyc1S4}). 
Under \emph{Hypothesis H}, $\pi_{(8-p)}({ \mathrm{Cyc}_1 S^4 })$ is the group of near-horizon $p$-brane charges 
(cf. \cref{GeneralFormulaForBraneCharge}) visible after dim-reduction from 11D on a small circle. We see that the {\color{IntBraneColor}non-torsion groups} reflect exactly the expected integral brane species, including the D0 and the NS1, cf.
\cref{RationalHomotopyGroups,DBraneAndSourcedFluxes}. The remaining {\color{FracBraneColor}torsion subgroups} (shown in classical notation for homotopy groups of spheres, cf. \cite{Toda1962}) reflect torsion brane species entailed by charge quantization under \emph{Hypothesis H}. 
}
\centering
\adjustbox{rndfbox=4pt, scale=0.9}{
\begin{tabular}{@{\hspace{0pt}}c@{\hspace{0pt}}}
\adjustbox{rndfbox=4pt}{
$
\begin{tikzcd}[
  column sep=-2pt,
  row sep=0pt,
  /tikz/column 3/.append style={anchor=base west},
  /tikz/column 5/.append style={anchor=base west},
  /tikz/column 7/.append style={anchor=base west},
  /tikz/column 9/.append style={anchor=base west},
  /tikz/column 11/.append style={anchor=base west},
  /tikz/column 13/.append style={anchor=base west},
]
  \pi_1\bracket({\mathrm{Cyc}_1 S^4})
  &\simeq&
  \!\!0
  \\
  \pi_2\bracket({\mathrm{Cyc}_1 S^4})
  &\;\simeq\;&
  \smash{%
  \overbrace{%
  \mathcolor{IntBraneColor}{%
  \mathbb{Z}\langle%
    \gamma_{2}%
  \rangle%
  }%
  }^{\text{D6}}%
  }
  &\phantom{\oplus}&
  \\
  \pi_3\bracket({\mathrm{Cyc}_1 S^4})
  &\simeq&
  &&
  \smash{%
  \overbrace{%
  \mathcolor{IntBraneColor}{%
  \mathbb{Z}_{\phantom{/12}}\langle{%
    \gamma_{3}%
  }\rangle%
  }%
  }^{\text{NS5}}%
  }
  \\
  \pi_4\bracket({\mathrm{Cyc}_1 S^4})
  &\simeq&
  &&
  \mathcolor{FracBraneColor}{%
  \mathbb{Z}_{/2\phantom{1}}\langle
    \gamma_{3}\eta
  \rangle}
  &\oplus&
  \smash{%
  \overbrace{%
  \mathcolor{IntBraneColor}{%
  \mathbb{Z}_{\phantom{/12}}\langle{%
    \gamma_{4}%
  }\rangle%
  }}^{\text{D4}}%
  }
  \\
  \pi_5\bracket({\mathrm{Cyc}_1 S^4})
  &\simeq&
  &&
  \mathcolor{FracBraneColor}{%
  \mathbb{Z}_{/2\phantom{1}}\langle
    \gamma_{3}\eta^2
  \rangle}
  &\oplus&
  \mathcolor{FracBraneColor}{%
  \mathbb{Z}_{/2\phantom{1}}\langle{
    \gamma_{4}\eta
  }\rangle}
  \\
  \pi_6\bracket({\mathrm{Cyc}_1 S^4})
  &\simeq&
  &&
  \mathcolor{FracBraneColor}{%
  \mathbb{Z}_{/12}\langle
    \gamma_{3}\nu'
  \rangle}
  &\oplus&
  \mathcolor{FracBraneColor}{%
  \mathbb{Z}_{/2\phantom{1}}\langle{
    \gamma_{4}\eta^2
  }\rangle}
  &\oplus&
  \smash{%
  \overbrace{%
  \mathcolor{IntBraneColor}{%
  \mathbb{Z}_{\phantom{/24}}\langle{%
    \gamma_{6}%
  }\rangle%
  }}^{\text{D2}}%
  }
  \\
  \pi_7\bracket({\mathrm{Cyc}_1 S^4})
  &\simeq&
  &&
  \mathcolor{FracBraneColor}{%
  \mathbb{Z}_{/2\phantom{1}}\langle
    \gamma_{3} \nu' \eta
  \rangle}
  &\oplus&
  \mathcolor{FracBraneColor}{%
  \mathbb{Z}_{/12}\langle{
    \gamma_{4} \Sigma \nu' 
  }\rangle}
  &\oplus&
  \mathcolor{FracBraneColor}{%
  \mathbb{Z}_{/2\phantom{4}}\langle{
    \gamma_{6}\eta
  }\rangle}
  &\oplus&
  \smash{%
  \overbrace{%
  \mathcolor{IntBraneColor}{%
  \mathbb{Z}_{\phantom{/2}}\langle{%
    \gamma_{7}%
  }\rangle%
  }}^{\text{NS1}}%
  }
  \\
  \pi_8\bracket({\mathrm{Cyc}_1 S^4})
  &\simeq&
  &&
  \mathcolor{FracBraneColor}{%
  \mathbb{Z}_{/2\phantom{1}}\langle
    \gamma_{3} \nu' \eta^2
  \rangle}
  &\oplus&
  \mathcolor{FracBraneColor}{%
  \mathbb{Z}_{/2\phantom{1}}\langle{
    \gamma_{4} \Sigma \nu' \eta
  }\rangle}
  &\oplus&
  \mathcolor{FracBraneColor}{%
  \mathbb{Z}_{/2\phantom{4}}\langle{
    \gamma_{6} \eta^2
  }\rangle}
  &\oplus&
  \mathcolor{FracBraneColor}{%
  \mathbb{Z}_{/2}\langle{
    \gamma_{7}\eta
  }\rangle}
  &\oplus&
  \smash{%
  \overbrace{%
  \mathcolor{IntBraneColor}{%
  \mathbb{Z}_{\phantom{/ }}\langle{%
    \gamma_{8}%
  }\rangle%
  }}^{\text{D0}}%
  }
  \\
  \pi_9\bracket({\mathrm{Cyc}_1 S^4})
  &\simeq&
  &&
  \mathcolor{FracBraneColor}{%
  \mathbb{Z}_{/3\phantom{1}}
  \langle
    \gamma_3 \alpha^2
  \rangle}
  &\oplus&
  \mathcolor{FracBraneColor}{%
  \mathbb{Z}_{/2\phantom{1}}
  \langle
    \gamma_4
    \Sigma \nu' \eta^2
  \rangle}
  &\oplus&
  \mathcolor{FracBraneColor}{%
  \mathbb{Z}_{/24}\langle
    \gamma_6 \nu
  \rangle}
  &\oplus&
  \mathcolor{FracBraneColor}{%
  \mathbb{Z}_{/2}\langle
    \gamma_7 \eta^2
  \rangle}
  &\oplus&
  \mathcolor{FracBraneColor}{%
  \mathbb{Z}_{/2}\langle
    \gamma_8 \eta
  \rangle}
  \\
  {}
  \\
\end{tikzcd}
$
}
\\
\\
where:
$
\left\{
\begin{tikzcd}[
  row sep=0pt,
  column sep=12pt,
  /tikz/column 3/.append style={anchor=base west},
]
  S^n 
    \ar[r, "{ \gamma_n }"]
  &[30pt]
  \mathrm{Cyc}_1 S^4
  &
  \text{a free generator, where it exists,} 
  \\
  S^7 
    \ar[
      r, 
      "{ 
        \gamma_7
        \,=\,
        \gamma_4
        \nu
      }"
    ]
  & 
  \mathrm{Cyc}_1 S^4
  &
  \text{composition with the 
    quaternionic Hopf fibration,
  }
  \\
  S^{3+n} 
    \ar[
      r, 
      "{ 
        \eta \,\defneq\, \eta_{2+n}  
      }"]
  &
  S^{2+n}
  &
  \text{the $n$-fold suspension of the complex Hopf fibration $\eta_2$,}
  \\
  S^{4+n} 
    \ar[
      r, 
      "{ 
        \eta^2 
        \,\defneq\,
        \eta \eta
      }"
    ]
  &
  S^{2+n}
  &
  \text{the composite of two consecutive ones,}
  \\
  S^{7+n} 
    \ar[
      r, 
      "{ 
        \nu 
          \,\defneq\,
        \nu_{4+n}
      }"
    ]
  &
  S^{4+n}
  &
  \text{the $n$-fold suspension of the quaternionic Hopf fibration $\nu_4$}
  \\
  S^{6} 
    \ar[
      r, 
      "{ \nu' }"
    ]
  &
  S^{3}
  &
  \text{a generator of $\pi_6(S^3) \simeq \mathbb{Z}_{/12}$,}
  \\
  S^{7} 
    \ar[r, "{ \Sigma \nu' }"]
  &
  S^{4}
  &
  \text{its suspension,}
  \\
  S^{6+n} 
    \ar[
      r, 
      "{ 
        \alpha 
        \,\defneq\,
        \alpha_1(3+n)
      }"
    ]
  &
  S^{3+n}
  &
  \text{notation for $\Sigma^n(4 \nu')$.}
\end{tikzcd}
\right.
$
\\
\\
For instance:
$
  \begin{tikzcd}
    S^8
    \ar[
      r, "{ \Sigma^5 \eta_2 }"
    ]
    \ar[
      rrr,
      uphordown,
      "{ 
        \gamma_4 \Sigma \nu' \eta 
      }"{description}
    ]
    &
    S^7
      \ar[r, "{\Sigma \nu'}"]
    &
    S^4 
     \ar[r, "{ \gamma_4 }"]
    &
    \mathrm{Cyc}_1 S^4
  \end{tikzcd}
  ,
  \quad
  \begin{tikzcd}
    S^6
    \ar[
      rrr,
      uphordown,
      "{ 
        \gamma_4 \eta^2 
      }"{description}
    ]
    \ar[r, "{ \Sigma^3 \eta_2 }"]
    &
    S^5
    \ar[r, "{ \Sigma^2 \eta_2 }"]
    &
    S^4
    \ar[
      r,
      "{ \gamma_4 }"
    ]
    &
    \mathrm{Cyc}_1 S^4.
  \end{tikzcd}
$
\end{tabular}
}
\end{table}
\begin{proposition}
\label[proposition]{HomotopyOfCyc1S4}
The homotopy groups of $\mathrm{Cyc}_1 S^4$ in degrees $\leq 9$ are those of a product of spheres\footnotemark\ with $B S^1$,
\begin{equation}
  \label
  {LowHomotopyOfCyc1AndProductOfSpheres}
    \forall_{k \leq 9}
    \;\;\;
    \pi_k\bracket({
      \mathrm{Cyc}_1 S^4
    })
    \simeq
    \grayunderbrace{
    \pi_k\bracket({
      B S^1
    })}{\mathrm{D6}}
    \oplus
    \grayunderbrace{
    \pi_k\bracket({
      S^4
    })}{ \mathrm{D4}, \mathrm{NS1} }
    \oplus
    \grayunderbrace{
    \pi_k\bracket({
      S^3
    })}{ \mathrm{NS5} }
    \oplus
    \grayunderbrace{
    \pi_k\bracket({
      S^6
    })
    }{ \mathrm{D2} }
    \oplus
    \grayunderbrace{
    \pi_k\bracket({
      S^8
    })}{ \mathrm{D0} }
    \mathrlap{\,,}
\end{equation}
\textup{as made explicit in \cref{HomotopyOfCyc1S4Table}}.
\end{proposition}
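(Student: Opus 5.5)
The plan is to compute $\pi_{\leq 9}$ of $\mathrm{Cyc}_1 S^4$ from the homotopy pushout of \cref{Cyc1AsAPushout}, after first passing to the fibration over $B S^1$. Since $\mathrm{Cyc}_1 S^4 = (L_1 S^4)\sslash S^1$, we have a fiber sequence $L_1 S^4 \to \mathrm{Cyc}_1 S^4 \to B S^1$. Because $S^1$ acts on $L_1 S^4$ with a fixed locus $S^4$ (the constant loops), the connecting map $\pi_2(BS^1)\to \pi_1(L_1S^4)$ vanishes, and the sequence splits on homotopy through any constant loop. This gives
\[
\pi_k(\mathrm{Cyc}_1 S^4)\simeq \pi_k(BS^1)\oplus \pi_k(L_1 S^4),
\]
and the $BS^1$ summand is the D6 term.

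It then remains to show that $\pi_k(L_1S^4)\simeq \pi_k(S^4)\oplus\pi_k(S^3)\oplus\pi_k(S^6)\oplus\pi_k(S^8)$ for $k\leq 9$. Unquotiented, \cref{Cyc1AsAPushout} exhibits $L_1 S^4$ as the homotopy pushout of $S^4 \leftarrow V \to K_1$. Here:
\begin{itemize}
\item $K_1\simeq V_2(\mathbb{R}^5)=\mathrm{SO}(5)/\mathrm{SO}(3)$ is the space of unit-speed parameterized great circles;
\item $V\to K_1$ is the unit sphere bundle of a vector bundle $\nu$ with fiber $\mathbb{R}^3$ (the normal directions to the great circle); in Morse--Bott terms, $L_1S^4$ is $L_0S^4\simeq S^4$ with the negative disk bundle of index $3$ attached over the critical manifold $K_1$.
\end{itemize}
The cofiber of $S^4\hookrightarrow L_1S^4$ is therefore the Thom space $\mathrm{Th}(\nu\to K_1)$. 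The space $K_1=V_2(\mathbb{R}^5)$ has cells in dimensions $0,3,4,7$, with the $4$-cell attached by degree $2$. Hence the Thom space has cells in dimensions $3,6,7,10$. These match the sphere summands $S^3$ and $S^6$, while the $7$-cell, attached by a degree-$2$ map to the $6$-cell, contributes only $\mathbb{Z}/2$-torsion in homology.

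To identify the homotopy type through dimension $9$, I would rather compare with the known splitting of the free loop space. The inclusion $L_1S^4\to LS^4$ is $8$-connected by \cref{ScndStageInclusion}, which gives $\pi_k(L_1S^4)\simeq\pi_k(LS^4)$ for $k\leq 7$ and a surjection for $k=8$. Moreover, $LS^4$ has a section $S^4\to LS^4$, so $\pi_k(LS^4)\simeq\pi_k(S^4)\oplus\pi_{k+1}(S^4)\simeq \pi_k(S^4)\oplus\pi_k(\Omega S^4)$. By the James/Hilton splitting, $\Omega S^4\simeq S^3\times\Omega S^7$ localized appropriately, so $\pi_k\Omega S^4\simeq\pi_k(S^3)\oplus\pi_k(S^6)\oplus\pi_k(S^9)\oplus\cdots$. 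This already yields the $S^4$, $S^3$ and $S^6$ summands in degrees $\leq 7$. The remaining $\pi_7$ free class $\gamma_7=\gamma_4\nu$ sits inside $\pi_7(S^4)$, consistent with \cref{HomotopyOfCyc1S4Table}.

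The main obstacle is degrees $8$ and $9$, where the passage from $LS^4$ to $L_1S^4$ changes things: the attached $9$-cell bundle kills the $\Omega S^7$ contribution (the $\pi_9(S^9)$, i.e.\ rational $H_3\wedge F_6$) and produces instead the D0 summand $\pi_k(S^8)$. For this step I would use the description in the proof of \cref{Classification-of-lifts} of $\mathrm{Cyc}_1S^4$ as the homotopy fiber \cref{Cyc1S4AsApproximatedByAHomotopyFiber} of $o$, which is valid through dimension $9$. Taken fiberwise over a point of $BS^1$, this gives a long exact sequence
\[
\pi_{k+1}(L_2S^4)\to\pi_{k+1}(\Sigma^{7}\mathbb{S})_{\leq 10}\to\pi_k(L_1S^4)\to\pi_k(L_2S^4).
\]
The term $\pi_{k+1}$ of the $7$-shifted truncated sphere spectrum equals $\pi^s_{k-7}$, namely $\mathbb{Z},\mathbb{Z}/2,\mathbb{Z}/2$ for $k=7,8,9$ after shifting.

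I would show that $o$ induces, on $\pi_9$, the map $\pi_9(LS^4)\supset\pi_{10}(S^4)\supset\pi_9(S^9)\to\mathbb{Z}$ that is an isomorphism onto its image. This is justified rationally by the fact that $o$ pulls back to the nonzero class $H_3\wedge F_6$ of \cref{Rational9CohomologyOfCycS4}, and integrally via the Thom class $U$ of the attached $9$-cell restricted to the bottom sphere. With that in place, the cokernel and kernel contributions are exactly $\pi_8(S^8)\oplus\pi_9(S^8)=\mathbb{Z}\oplus\mathbb{Z}/2$, and the remaining summands of $\pi_{\leq9}(L_2S^4)$ pass through unchanged.

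The splitting of the extensions would then follow by constructing explicit maps $\gamma_8:S^8\to\mathrm{Cyc}_1S^4$, for instance from the boundary sphere $\partial D^9$ of the attaching data in \cref{ScndStageInclusion} (a null-homotopy of the attaching composite in $\mathrm{Cyc}_2$). Combining these with $\gamma_3,\gamma_4,\gamma_6$ and the multiplicative structure on loops gives a map from the product of spheres and $BS^1$ that is a $\pi_{\leq 9}$-isomorphism. Finally, I would read off the table entries from Toda's tables.
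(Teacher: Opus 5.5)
Your splitting off of $BS^1$ via the constant loops is the same as the paper's. Your treatment of degrees $k\le 7$ is correct and differs from the paper's route in that range: you use the 8-connectivity of $L_1S^4\hookrightarrow LS^4$, then $\Omega S^4\simeq S^3\times\Omega S^7$ and $\pi_{k+1}(S^7)\cong\pi_k(S^6)$.

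The degree-8/9 step, however, rests on a false input and runs backwards.
\begin{itemize}
\item $\pi_9(LS^4)\cong\pi_9(S^4)\oplus\pi_{10}(S^4)$ is finite, since $\pi_{10}(S^4)\cong\mathbb{Z}_{/24}\oplus\mathbb{Z}_{/3}$; rationally, $S^4$ has homotopy only in degrees 4 and 7. There is no $\pi_9(S^9)$ summand, because $\Omega S^7$ agrees with $S^6$ through degree 10.
\item Fiberwise, $\Sigma^{7+L}_{BS^1}\mathbb{S}$ is $\Sigma^9\mathbb{S}$, since $L$ has real rank 2. With your $\Sigma^7$ indexing the $\mathbb{Z}$ would land in $\pi_7$, contradicting your own low-degree result.
\item With the correct shift, your sequence reads $\pi_9(L_2S^4)\to\mathbb{Z}\to\pi_8(L_1S^4)\to\pi_8(L_2S^4)\to 0$. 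The D0 class enters $\pi_8(L_1S^4)$ precisely because the first map is \emph{zero}: its source $\pi_9(L_2S^4)\cong\pi_9(LS^4)$ is finite.
\item So, up to torsion, $\gamma_8$ is the attaching sphere of the bottom 9-cell. It is a class of $L_1S^4$ that dies in $L_2S^4$, not a class of $LS^4$ that the 9-cells kill.
\item If $o$ were injective on a $\mathbb{Z}\subset\pi_9$, as you propose, its cokernel would be finite and there would be no free $\gamma_8$ at all.
\item The class $H_3\wedge F_6$ cannot justify that claim either. It is decomposable, so it pairs trivially with Hurewicz images; its actual role is to force the new degree-8 generator $F_8$.
\end{itemize}

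Even after this correction, you would still need two further facts: that $\pi_{10}(L_2S^4)\to\mathbb{Z}_{/2}$ vanishes, and that the extensions $\mathbb{Z}\to\pi_8(L_1S^4)\to\pi_8(LS^4)\cong(\mathbb{Z}_{/2})^4$ and $\mathbb{Z}_{/2}\to\pi_9(L_1S^4)\to\pi_9(LS^4)$ split. The route you propose for this cannot work. A map $BS^1\times S^4\times S^3\times S^6\times S^8\to\mathrm{Cyc}_1S^4$ inducing isomorphisms on $\pi_{\le 9}$ would induce an isomorphism on $H^7$. But $H^7(\mathrm{Cyc}_1S^4;\mathbb{Z})\cong\mathbb{Z}_{/2}$, while the product has free $H^7$; the paper explicitly warns that these 9-truncations differ. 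Moreover, $L_1S^4$ carries no loop multiplication, since concatenation leaves the energy bound.

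The missing idea is to compute the fiber of $\mathrm{ev}_0$ on $L_1S^4$ itself rather than on $LS^4$. A possibly degenerate round circle through $p$ is determined by its direction at $p$ and its center. Hence $\Omega_1S^4\simeq(S^3\times S^3)/(S^3\times\{p\})\simeq S^3\vee S^6$. Hilton--Milnor then splits $\Omega(S^3\vee S^6)$ as a product, giving $\pi_k(S^3)\oplus\pi_k(S^6)\oplus\pi_k(S^8)$ for $k\le 9$, with $S^8$ coming from the Whitehead product $[\iota_3,\iota_6]$. That is where the D0 sphere comes from, and it is why no extension problems arise.
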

\footnotetext{%
  The lead-in slogan of \cite{SS23-Mf} was that \emph{under Hypothesis H, M-brane charges organize into homotopy groups of spheres}. Curiously, \cref{HomotopyOfCyc1S4} shows that under small circle reduction this remains the case for type IIA brane charges, with a great variety of integral and torsion brane charges falling into the classical tables of homotopy groups of spheres (cf. \cite[p. 186]{Toda1962}). Notice how the D4- and NS1-charges are jointly classified by an $S^4$ in \cref{LowHomotopyOfCyc1AndProductOfSpheres},  the direct dimensional reduction of how the M5 and M2 charges are jointly classified by $S^4$ in \cref{M5ChargeAccordingToHypothesisH,M2ChargeAccordingToHypothesisH}.
  
  Beware that \cref{HomotopyOfCyc1S4} does \emph{not} claim that $\truncation{9}{\mathrm{Cyc}_1 S^4}$ is equivalent to $\truncation{9}{B S^1 \times S^4 \times S^3 \times S^6 \times S^8}$, which is indeed far from the case.}
\begin{proof}
  Since $B S^1$ has no homotopy above degree 2, and since the evaluation map on $L_1 S^4$ has a section given by constant loops, the computation immediately reduces to that of the homotopy of small based loop space $\Omega_1 S^4 := \mathrm{fib}({\inlinetikzcd{ L_1 S^4 \ar[r, "{ \mathrm{ev}_0 }"] \& S^4 }})$. As in the proof of  \cref{Cyc1AsAPushout}, this is equivalently the space of possibly degenerate round circles on $S^4$ which are based at a fixed point $p \in S^4$.  Such a round circle is exactly determined by its direction at $p$ and its center, and is a collapsed circle when the center is at $p$ itself, so that:
  \begin{equation}
    \label{ComputingBasedLoopsOfS4}
    \begin{aligned}
    \Omega_1 S^4
    &
    \simeq
    \frac
      { S^3 \times S^3 }
      { S^3 \times \{p\}  }
    \\
    & \simeq
    \bracket({S^3})_+
    \wedge
    S^3
    \\
    & \simeq
    \Sigma^3\bracket({
      S^3
      \vee
      S^0
    })
    \\
    & \simeq
    S^6 \vee S^3
    \mathrlap{\,.}
    \end{aligned}
  \end{equation}
  From this, the Hilton--Milnor theorem (cf. \cite[Ch. XI \S~6]{Whitehead1978}) gives 
  \begin{equation}
    \forall_{k \leq 9}
    \;\;
    \pi_{k}
    \bracket({
      \Omega_1 S^4
    })
    \simeq
    \pi_k\bracket({
      S^3
    })
    \oplus
    \pi_k\bracket({
      S^6
    })
    \oplus
    \pi_k\bracket({
      S^8
    })
    \mathrlap{\,.}
  \end{equation}
  In summary, this proves the claim \cref{LowHomotopyOfCyc1AndProductOfSpheres}; hence the table \cref{HomotopyOfCyc1S4Table} follows from the classical tabulation of these low degree homotopy groups of spheres (cf. \cite[p. 186]{Toda1962}).
\end{proof}

\begin{table}[htb]
\caption{%
\label{RationalHomotopyGroups}%
The low degree rationalized homotopy groups (discarding torsion) of the small-cyclification of $S^4$ (computed via Hilton--Milnor as in the proof of \cref{HomotopyOfCyc1S4}, cf. \cref{HomotopyOfCyc1S4Table}),
versus those of the classifying space for rank-0 twisted K-theory. 
In the range corresponding to IIA brane charges the only difference is in degree 7, corresponding to the NS1-brane charge which is missed by twisted K-theory (cf. \cref{AbelianCohDoesNotQuantizeNonlinear}). The higher homotopy groups correspond to higher global symmetries \cite[\S~4.2.1, p. 32]{SS26-HigherGauge}, and here the predictions increasingly diverge with $k$, cf. \cref{HigherGlobalSymmetries} in \cref{Vistas}.
}
\adjustbox{
  scale=.95,
  rndfbox=4pt
}
{%
\def\arraystretch{1.4}%
\def\tabcolsep{3pt}%
\begin{tabular}
{@{\hspace{0pt}}c||ccccccccc|ccccccccccc@{\hspace{0pt}}}
  $p = (8-k)$-brane
  & 
  \small{$-$}
  & 
  \small D6
  &
  \small NS5
  &
  \small D4
  &
  \small{$-$}
  &
  \small D2
  &
  \small NS1
  &
  \small D0
  &
  \small $-$
  &
  \multicolumn{11}{c}{
    Higher global symmetries
  }
  \\
  \hline
  $k$ 
  &
  $1$
  &
  $2$
  &
  $3$
  &
  $4$
  &
  $5$
  &
  $6$
  &
  $7$
  &
  $8$
  &
  $9$
  &
  $10$
  &
  $11$
  &
  $12$
  &
  $13$
  &
  $14$
  &
  $15$
  &
  $16$
  &
  $17$
  &
  $18$
  &
  $19$
  &
  $20$
  \\
  \hline
  \hline
  $\pi_k\bracket({
    \mathrm{Cyc}_1 S^4
  })
  \!\otimes\! 
  \mathbb{Q}$
  &
  $0$
  &
  $\mathbb{Q}$
  &
  $\mathbb{Q}$
  &
  $\mathbb{Q}$
  & 
  $0$
  &
  $\mathbb{Q}$
  &
  $\mathcolor{purple}{\mathbb{Q}}$
  &
  $\mathbb{Q}$
  &
  $0$
  &
  $\mathbb{Q}$
  &
  $\mathcolor{purple}{\mathbb{Q}}$
  &
  $\mathbb{Q}$
  &
  $\mathcolor{purple}{\mathbb{Q}}$
  &
  $\mathbb{Q}$
  &
  $\mathcolor{purple}{\mathbb{Q}^2}$
  &
  $\mathbb{Q}$
  &
  $\mathcolor{purple}{\mathbb{Q}^2}$
  &
  $\mathcolor{purple}{\mathbb{Q}^2}$
  &
  $\mathcolor{purple}{\mathbb{Q}^3}$
  &
  $\mathcolor{purple}{\mathbb{Q}^3}$
  \\
  \hline
  $\pi_k\bracket({
    B\mathrm{U}
    \sslash 
    B \mathrm{U}(1)
  })
  \!\otimes\! 
  \mathbb{Q}\,$
  &
  $0$
  &
  $\mathbb{Q}$
  &
  $\mathbb{Q}$
  &
  $\mathbb{Q}$
  &
  $0$
  &
  $\mathbb{Q}$
  &
  $\mathcolor{purple}{0}$
  &
  $\mathbb{Q}$
  &
  $0$
  &
  $\mathbb{Q}$
  &
  $\mathcolor{purple}{0}$
  &
  $\mathbb{Q}$
  &
  $\mathcolor{purple}{0}$
  &
  $\mathbb{Q}$
  &
  $\mathcolor{purple}{0}$
  &
  $\mathbb{Q}$
  &
  $\mathcolor{purple}{0}$
  &
  $\mathcolor{purple}{\mathbb{Q}}$
  &
  $\mathcolor{purple}{0}$
  &
  $\mathcolor{purple}{\mathbb{Q}}$
\end{tabular}
}
\end{table}

This match with the type IIA flux and brane content (\cref{The-full-IIA-EM-Gauss-laws,HomotopyOfCyc1S4})  suggests that over 9-manifolds the nonabelian cohomology \cref{NonabelianCohomology} classified by $\mathrm{Cyc}_1 S^4$ may be an M-theoretic nonabelian deformation of twisted K-theory. We check this next.

\subsection
{K-Theory Emerging}
\label
{KTheoryEmerging}

Before we state the main result (\cref{Comparison_map_to_twisted_K_theory,ComparisonInDMWSector}), we identify the low-degree universal brane charge cohomology classes (cf. \cref{MeasuringDBraneCharge}) on the classifying spaces under consideration (\cref{BraneChargeFromCohomotopy,BraneChargesFromTwistedKTheory}). The close similarity between these constructions on both sides, both controlled by maps to $\mathrm{Cyc}\, B^4 \mathbb{Z}$
(cf. \cref{LEMtau4BUModBU1IsCycB4Z}), is what drives the following \cref{Comparison_map_to_twisted_K_theory}.

\begin{definition}[Extracting brane charges from small-cyclified Cohomotopy]
\label[definition]{BraneChargeFromCohomotopy}
The universal
\begin{enumerate}
\item D4-brane charge, $f_4^C$,
\item NS5-brane charge, $h_3^C$, 
\item D6-brane charge, $f_2^C$,
\end{enumerate}
as seen in small-cyclified 4-Cohomotopy \cref{SmallCyclifiedCohomotopy}, are projected out via the following homotopy-commutative diagram (purple labels indicate composite maps defined thereby):
  \begin{equation}
  \label
  {ProjectingBraneChargesOutOfSmallCycCohomotopy}
    \begin{tikzcd}[
      row sep=20pt,
      column sep=30pt
    ]
      &
      L_1 S^4
      \ar[
         rrr,
         uphordown,
         "{ 
           \mathcolor{purple}{
             \scaledbracket({
               f_4^C,
               \,
               h_3^C
              })
           } 
         }"
      ]
      \ar[d]
      \ar[r, hook]
      &[-30pt]
      L\, S^4
      \ar[d]
      \ar[
        r,
        "{ L\, \iota_4 }"
      ]
      &
      L\, B^4 \mathbb{Z}
      \ar[
        r,
        "{ \sim }"
      ]
      \ar[d]
      &
      B^4 \mathbb{Z}
      \times
      B^3 \mathbb{Z}
      \ar[d]
      \\
      &
      \mathrm{Cyc}_1 S^4
      \ar[
        rr,
        bend right=15,
        "{
          \mathcolor{purple}{\phi^C}
        }"{description}
      ]
      \ar[
        r,
        hook
      ]
      \ar[
        d,
        "{ 
          \mathcolor{purple}{
            \scaledbracket({f_2^C, \, h_3^C})
          }
        }"'
      ]
      &
      \mathrm{Cyc}\, S^4
      \ar[
        r,
        "{
          \mathrm{Cyc}\,
          \iota_4
        }"
      ]
      &
      \mathrm{Cyc}\, B^4 \mathbb{Z}
      \ar[
        d,
        ->>,
      ]
      \ar[
        r
      ]
      \ar[
        dr,
        phantom,
        "{ \lrcorner }"{pos=.1}
      ]
      \ar[
        dr,
        phantom,
        "{
           \text{\cref{CycOfBnZAsHomotopyFiber}}
        }"{scale=.7}
      ]
      &
      \ast
      \ar[d]
      \\
      &
      B^2 \mathbb{Z}
      \times
      B^3 \mathbb{Z}
      \ar[
        rr,
        equals
      ]
      &&
      B^2 \mathbb{Z}
      \times
      B^3 \mathbb{Z}
      \ar[
        r,
        "{ \cup }"{description}
      ]
      &
      B^5 \mathbb{Z}
      \mathrlap{\,,}
    \end{tikzcd}
  \end{equation}
where the top rectangle is formed by homotopy pullback of the bottom rectangle along the map
$
  \inlinetikzcd{
    (0,\mathrm{id})
    :
    B^3 \mathbb{Z}
    \ar[r]
    \& B^2 \mathbb{Z} \times B^3 \mathbb{Z}
  }
$,
imposing vanishing of D6-brane charge, $f_2^C = c_1^C \overset{!}{=} 0$.
In the resulting diagram \cref{ProjectingBraneChargesOutOfSmallCycCohomotopy}:
\begin{enumerate}
\item 
the homotopy filling the bottom rectangle is a trivialization, $f_4^C$, of the cup product $f_2^C \cup h_3^C$ in integral cohomology; this is the integral refinement of the Gauss law $\mathrm{d}\, F_4 = H_3 \wedge F_2$.

\item Consequently, the top row is the integral lift $f_4^C$ of $F_4$ where it is closed due to vanishing D6-brane charge and $F_2 = 0$.
\end{enumerate}  
\end{definition}

We will see in \cref{BraneChargesFromTwistedKTheory} that a directly analogous extraction of these brane charges exists in twisted K-theory. For that we need the following result:
\begin{lemma}
\label[lemma]{LEMtau4BUModBU1IsCycB4Z}
  We have homotopy equivalences $\phi$ of this form:
  \begin{equation}
  \label{tau4BUModBU1IsCycB4Z}
  \begin{tikzcd}[
    column sep=10pt,
    row sep=20pt
  ]
    \truncation
      {4}
      { B \mathrm{U} }
    \sslash 
    B \mathrm{U}(1)
    \ar[
      dr,
      "{
        ({
          c_1
          ,\,
          \tau
        })
      }"{swap, pos=.5}
    ]
    \ar[
      rr,
      dashed,
      "{ \sim }"',
      "{ \phi }"
    ]
    &&
    \mathrm{Cyc}\, B^4 \mathbb{Z}
    \ar[
      dl,
      "{
        \text{\cref{CycOfBnZAsHomotopyFiber}}
      }"{pos=.5, sloped},
      "{
        (f_2, h_3)
      }"{pos=.4}
    ]
    \\
    & 
    B^2 \mathbb{Z}
    \times
    B^3 \mathbb{Z}
    \mathrlap{\,,}
  \end{tikzcd}
  \end{equation}
  whose homotopy classes over $B^2 \mathbb{Z} \times B^3 \mathbb{Z}$, for fixed fiber orientation, form exactly a $\mathbb{Z}$-torsor.
\end{lemma}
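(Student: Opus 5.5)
The plan is to recognize both sides of \cref{tau4BUModBU1IsCycB4Z} as principal $B^4\mathbb{Z}$-fibrations over the simply connected base $B^2\mathbb{Z}\times B^3\mathbb{Z}$. Such fibrations are classified by a single $k$-invariant in
\[
  H^5\bracket({B^2\mathbb{Z}\times B^3\mathbb{Z};\mathbb{Z}})\simeq\mathbb{Z}\langle \iota_2\cup\iota_3\rangle .
\]
This computation is by K\"unneth: $H^5(B^3\mathbb{Z};\mathbb{Z})=0$, $H^{\mathrm{odd}}(B^2\mathbb{Z};\mathbb{Z})=0$, and the Tor terms vanish in this degree. The equivalence $\phi$ then exists as soon as the two $k$-invariants agree up to sign. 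The torsor statement follows from the classification of maps of principal fibrations over a fixed base.

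\emph{The right-hand side.} For $\mathrm{Cyc}\,B^4\mathbb{Z}$ this is immediate from \cref{CycOfBnZAsHomotopyFiber} with $n=3$. The map $(f_2,h_3)$ is the homotopy fiber of $\cup:B^2\mathbb{Z}\times B^3\mathbb{Z}\to B^5\mathbb{Z}$. Hence it is the principal $B^4\mathbb{Z}$-fibration with $k$-invariant $\iota_2\cup\iota_3$, a generator.

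\emph{The left-hand side.} The truncation $\truncation{4}{B\mathrm{U}}$ has $\pi_2\simeq\pi_4\simeq\mathbb{Z}$. Its $k$-invariant lies in $H^5(B^2\mathbb{Z};\mathbb{Z})=0$, so $(c_1,c_2):\truncation{4}{B\mathrm{U}}\xrightarrow{\sim}B^2\mathbb{Z}\times B^4\mathbb{Z}$. Here $c_2$ evaluates to $\pm1$ on the generator of $\pi_4(B\mathrm{U})$, because the generator of $\widetilde{\mathrm{KU}}(S^4)$ has $c_2=\pm1$. Since $(-)\sslash B\mathrm{U}(1)$ is an $\infty$-functor, truncation commutes with the Borel construction. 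The action by tensoring with line bundles acts, at virtual rank zero, by
\[
  c_1(V\otimes L)=c_1(V),\qquad c_2(V\otimes L)=c_2(V)-c_1(V)\,c_1(L).
\]
So $c_1$ descends to $\truncation{4}{B\mathrm{U}}\sslash B\mathrm{U}(1)$. Together with the twist $\tau$, this exhibits $(c_1,\tau)$ as a fibration over $B^2\mathbb{Z}\times B^3\mathbb{Z}$ with fiber $B^4\mathbb{Z}$. It is principal because the base is simply connected.

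Its $k$-invariant is the transgression of the fiber class $c_2$. I would compute it in the Serre spectral sequence of the Borel fibration over $B^2\mathrm{U}(1)\simeq B^3\mathbb{Z}$. There the coaction formula above, together with the transgression $c_1(L)\mapsto h_3$ (the defining property of $B^2\mathrm{U}(1)$ as delooping), gives
\[
  d_5(c_2)=\pm\, c_1\cup h_3 .
\]
This too is a generator. As a cross-check, the rational shadow is precisely the Gauss law $\mathrm{d}F_4=H_3\wedge F_2$ of \cref{NSRRFluxGaussLaw}, matching the Sullivan model of $\mathrm{Cyc}\,B^4\mathbb{Z}$ in \cref{CycOfBnZ}. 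The rational comparison only pins the $k$-invariant down up to a rational scalar; the integral coaction computation is what shows it is exactly $\pm$ a generator rather than a multiple.

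\emph{The torsor.} Fiberwise maps between principal $B^4\mathbb{Z}$-fibrations over $\mathcal{B}\defneq B^2\mathbb{Z}\times B^3\mathbb{Z}$ that are compatible with $k$-invariants are determined by two pieces of data. The first is their effect on the fiber, an element of $\mathrm{End}(B^4\mathbb{Z})\simeq\mathbb{Z}$. For an equivalence this must be $\pm1$, and $\pm1$ is fixed by the chosen fiber orientation. Given that, the homotopy classes of such maps form a torsor over
\[
  H^4(\mathcal{B};\mathbb{Z})\simeq\mathbb{Z}\langle\iota_2^2\rangle .
\]
Here $H^4(B^3\mathbb{Z};\mathbb{Z})=0$ and $H^1(B^2\mathbb{Z})=0$ kill all other K\"unneth terms. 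This yields exactly a $\mathbb{Z}$-torsor.

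The main obstacle is the integral $k$-invariant computation on the twisted K-theory side. It requires making the $B\mathrm{U}(1)$-action on $\truncation{4}{B\mathrm{U}}$ precise as a homotopy-coherent action, and identifying the transgression in the Borel fibration integrally. I would try first to reduce this to the universal case by pulling back along $B\mathrm{U}(1)\to B^2\mathrm{U}(1)$ restricted to a $5$-skeleton. The coaction on $H^4$ then determines $d_5$ via the standard comparison with the path fibration of $B^2\mathrm{U}(1)$.
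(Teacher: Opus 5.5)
Your proposal is correct and has the same skeleton as the paper. Both sides are principal $B^4\mathbb{Z}$-fibrations over the simply connected base $\mathcal{B}=B^2\mathbb{Z}\times B^3\mathbb{Z}$, classified by $m\cdot\iota_2\cup\iota_3\in H^5(\mathcal{B};\mathbb{Z})\simeq\mathbb{Z}$. On the right, $m=1$ by the defining pasting diagram. On the left, $m$ is read off from the coaction $c_2(V\otimes L)-c_2(V)=-c_1(V)\,c_1(L)$.

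The difference is how that coaction is turned into the $k$-invariant. The paper pulls back along a generator $S^3\to B^3\mathbb{Z}$ and presents the gerbe by a clutching line bundle with Chern class $\ell$ on the equatorial $S^2$. It then pushes the transition class $-c_1\cup\ell$ through the Mayer--Vietoris connecting map, getting $m=-1$ without any spectral sequence. You use the Serre spectral sequence together with a comparison to the path fibration. Made precise, that comparison is the map of fibrations over $B^2\mathrm{U}(1)$ from (path fibration)$\times\truncation{4}{B\mathrm{U}}$ to the Borel fibration. It is induced by the action map, which is equivariant for translation on the $B\mathrm{U}(1)$ factor, and naturality then forces the differential on $c_2$. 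This is the same computation in different clothing: the clutching over $S^3$ is the geometric form of ``$c_1(L)$ transgresses to $h_3$''. The paper's version is more elementary; yours is more systematic.

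There is one bookkeeping slip. Over $B^3\mathbb{Z}$ there is no $d_5$ on $c_2$, since its target $H^5(B^3\mathbb{Z};\mathbb{Z})$ vanishes. What you actually get is $d_3(c_2)=\pm h_3\otimes c_1\in E_3^{3,2}=H^3\big(B^3\mathbb{Z};H^2(\truncation{4}{B\mathrm{U}};\mathbb{Z})\big)$. The formula $d_5(c_2)=\pm c_1\cup h_3$ belongs to the spectral sequence of $B^4\mathbb{Z}\to P\to\mathcal{B}$. To pass between the two, compute $H^5(P;\mathbb{Z})$ both ways:
\begin{itemize}
\item from the fibration over $\mathcal{B}$ it is $\mathbb{Z}/m$;
\item from the Borel fibration it is $E_3^{3,2}/d_3(E_3^{0,4})=0$, because all other total-degree-5 terms vanish and $d_3(c_1^2)=0$.
\end{itemize}
Hence $m=\pm1$.

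Your torsor argument supplies something the paper's proof leaves unargued, since the paper stops at $m=-1$. It can be sharpened. Maps $P_1\to P_2$ over $\mathcal{B}$ are lifts, so they form a torsor over $H^4(P_1;\mathbb{Z})$. The Serre exact sequence $0\to H^4(\mathcal{B})\to H^4(P_1)\to H^4(B^4\mathbb{Z})\xrightarrow{\,d_5\,}H^5(\mathcal{B})$, with $d_5$ hitting a non-torsion generator, gives $H^4(P_1;\mathbb{Z})\simeq\mathbb{Z}\langle\iota_2^2\rangle$. Naturality of transgression also forces the fiber degree to be $\pm1$, with the sign fixed by the two $k$-invariants. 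So every map over the base is an equivalence, and the fiber-orientation condition is automatic.
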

For the following discussion we pick once and for all any one of these equivalences \cref{tau4BUModBU1IsCycB4Z}.
\begin{proof}
\begin{enumerate}
\item
  The key is to observe that both sides of the triangle \cref{tau4BUModBU1IsCycB4Z} are $B^4 \mathbb{Z}$-principal bundles, hence classified (\parencites{DrorDwyerKan1980}, cf. \cite{NSS2015a}) by elements in 
  \begin{equation}
    \label{CohomologyOfB2ZTimesB3Z}
    H^1\bracket({
      B^2 \mathbb{Z}
      \times
      B^3 \mathbb{Z}
      ;\,
      B^4 \mathbb{Z}
    })
    \simeq
    H^5\bracket({
      B^2 \mathbb{Z}
      \times
      B^3 \mathbb{Z}
      ;\,
      \mathbb{Z}
    })
    \simeq
    \mathbb{Z}\langle
      \iota_2 \cup \iota_3
    \rangle
    \mathrlap{\,,}
  \end{equation}
  where the isomorphism on the right is the K{\"u}nneth theorem.
  \begin{enumerate}
  \item 
  For the right side this statement is \cref{PastingDiagramForCyc}, which also shows that the equivalence class here is the generator $\iota_2 \cup \iota_3$ in \cref{CohomologyOfB2ZTimesB3Z}.
  \item   
  For the left side we first observe that the homotopy fiber is
  \begin{equation}
    \truncation{4}{B \mathrm{SU}}
    \simeq
    B \truncation{3}{\mathrm{SU}}
    \simeq
    B \truncation{3}{\mathrm{SU}(2)}
    \simeq
    B \truncation{3}{S^3}
    \simeq
    B B^3 \mathbb{Z}
    \simeq
    B^4 \mathbb{Z}
    \mathrlap{\,.}
  \end{equation}

  Now, general $B^4 \mathbb{Z}$-fibrations are classified by maps into $B \mathrm{Aut}\bracket({B^4 \mathbb{Z}})$, but a general argument shows that over simply connected base spaces these factor through $B B^4 \mathbb{Z} \simeq B^5 \mathbb{Z}$: 
  Namely we observe that  
  \begin{equation}
    \pi_0
    \,
    \mathrm{Map}\bracket({ 
      B^4 \mathbb{Z}
      ,\, 
      B^4 \mathbb{Z} 
    })
    \simeq
    H^4\bracket({
      B^4 \mathbb{Z}
      ;\,
      \mathbb{Z}
    })
    \simeq
    \mathbb{Z}
    \mathrlap{\,,}
  \end{equation}
  labeled by the induced endomorphism on $\pi_4\bracket({ B^4 \mathbb{Z} }) \simeq \mathbb{Z}$, and
  that all these components are homotopy-equivalent to each other (by the group structure on $B^4 \mathbb{Z}$). The components corresponding to $\mathrm{Aut}\bracket({
    B^4 \mathbb{Z}
  })$ are hence those labeled by $n = \pm 1$, while the component labeled by $n = 0$ manifestly (and hence any other component $n$, too) has 
  \begin{equation}
    \pi_{k \geq 1} 
    \, 
    \mathrm{Map}_n\bracket({
      B^4 \mathbb{Z}
      ,\,
      B^4 \mathbb{Z}
    })
    \simeq
    H^{4-k}\bracket({
      B^4 \mathbb{Z}
      ;\,
      \mathbb{Z}
    })
    \simeq
    \begin{cases}
      \mathbb{Z}
      &
      \text{ if } k = 4
      \\
      0 & \text{ otherwise, }
    \end{cases}
  \end{equation}
  and is hence a $B^4 \mathbb{Z}$ itself. This means that $B \mathrm{Aut}\bracket({B^4 \mathbb{Z}}) \simeq B \bracket({ \mathbb{Z}_{/2} \ltimes B^4 \mathbb{Z} })$, with the higher connected cover visible over a simply connected base being $B B^4 \mathbb{Z} \simeq B^5 \mathbb{Z}$, classifying $B^4 \mathbb{Z}$-principal fibrations, as promised.
    
  With \cref{CohomologyOfB2ZTimesB3Z} this shows that the equivalence class of the $B^4 \mathbb{Z}$ fibration on the left of \cref{tau4BUModBU1IsCycB4Z} corresponds to a single integer $m \in \mathbb{Z}$ in
  \begin{equation}
    \label{tau4BUModBU1AsHomotopyFiber}
    \begin{tikzcd}[
      column sep=50pt
    ]
      \truncation{4}{B\mathrm{U}}
      \sslash
      B\mathrm{U}(1)
      \ar[
        d,
        "{
          (c_1, \tau)
        }"'
      ]
      \ar[r]
      \ar[
        dr,
        phantom,
        "{ \lrcorner }"{pos=.1}
      ]
      &
      \ast
      \ar[d, "{0}"]
      \\
      B^2 \mathbb{Z}
      \times
      B^3 \mathbb{Z}
      \ar[
        r,
        "{
          m 
            \,\cdot\, 
          \iota_2 \cup \iota_3
        }"
      ]
      &
      B^5 \mathbb{Z}
      \mathrlap{\,.}
    \end{tikzcd}
  \end{equation}
  \end{enumerate}

  \item 
  Thereby the proof is reduced to showing that $m = \pm 1$ in \cref{tau4BUModBU1AsHomotopyFiber}.
  \begin{enumerate}
  \item
  To this end, pull back the left fibration in \cref{tau4BUModBU1IsCycB4Z} to $B^2 \mathbb{Z} \times S^3$ along a generator $\inlinetikzcd{S^3 \ar[r, "{ e }"] \& B^3 \mathbb{Z}}$. Understanding $S^3$ as the gluing of a pair of hemispheres over the equatorial $S^2$, the gerbe $e^\ast \tau$ is encoded by its transition line bundle with Chern class a generator $\ell \in H^2\bracket({S^2; \mathbb{Z}})$. Under the action of $B \mathrm{U}(1)$ on $B \mathrm{U}$ by tensoring of represented rank-0 vector bundles $V$ with represented line bundles $L$, this in turn makes the transition class of the $B^4 \mathbb{Z}$ principal bundle over $B^2 \mathbb{Z} \times S^2$ be
  \begin{equation}
    c_2\bracket({
      V \otimes L
    })
    -
    c_2\bracket({ V })
    =
    - c_1(V) \cup c_1(L)
    =
    (- c_1 \cup \ell)(V)
    \mathrlap{\,.}
  \end{equation}

  \item 
  Finally, the actual class of the fibration is the image of this transition class under the connecting homomorphism $\delta$ in the Mayer--Vietoris sequence:
  \begin{equation}
    \begin{tikzcd}[
      sep=0pt
    ]
      H^4\bracket({
        B^2 \mathbb{Z}
          \times
        S^2
        ;
        \mathbb{Z}
      })
      \ar[
        rr,
        "{ \delta }"
      ]
      &&
      H^5\bracket({
        B^2 \mathbb{Z}
          \times
        S^3
        ;
        \mathbb{Z}
      })
      \\[20pt]
      \mathbb{Z}\bracket\langle{
        c_1^2
      }\rangle
      \oplus
      \mathbb{Z}\bracket\langle{
        c_1 \cup \ell
      }\rangle
      \ar[
        u, 
        "\sim"{sloped}
      ]
      \ar[
        rr
      ]
      &&
      H^5\bracket({
        B^2 \mathbb{Z}
          \times
        B^3 \mathbb{Z}
        ;
        \mathbb{Z}
      }) 
      \ar[
        u,
        "{
          (\mathrm{id},e)^\ast
        }"{swap},
        "{ \sim }"{sloped}
      ]
      \\
      c_1 \cup \ell
      &\longmapsto&
      c_1 \cup \iota_3
      \mathrlap{\,.}
    \end{tikzcd}
  \end{equation}
  This shows that $m = -1$ in \cref{tau4BUModBU1AsHomotopyFiber} and thus completes the proof.
  \qedhere
  \end{enumerate}
  \end{enumerate}
  \end{proof}

With \cref{LEMtau4BUModBU1IsCycB4Z} in hand, we obtain the K-theoretic analog of \cref{BraneChargeFromCohomotopy}:
\begin{definition}
[Brane charges from twisted K-theory]
\label[definition]
{BraneChargesFromTwistedKTheory}
The universal 
\begin{enumerate}
\item D4-brane charge, $f_4^K$,
\item NS5-brane charge, $h_3^K$, 
\item D6-brane charge, $f_2^K$,
\end{enumerate}
as seen in twisted K-theory, are projected out via the following homotopy-commutative diagram, analogous to \cref{ProjectingBraneChargesOutOfSmallCycCohomotopy}:
\begin{equation}
\label{DiagramOfBraneChargesInKTheory}
  \begin{tikzcd}[
    column sep=15pt,
  ]
    B \mathrm{SU}
    \sslash 
    B \mathrm{U}(1)
    \ar[d]
    \ar[
      rrr,
      "{
        \mathcolor{purple}{
          \scaledbracket({
            f_4^K
            ,\,
            h_3^K
          })
        }
      }"
    ]
    &&&
    B^4 \mathbb{Z}
    \times
    B^3 \mathbb{Z}
    \ar[d]
    \\
    B \mathrm{U}
    \sslash 
    B \mathrm{U}(1)
    \ar[
      rr,
      bend right=15pt,
      "{ 
        \mathcolor{purple}{\phi^K} 
      }"{description}
    ]
    \ar[
      d,
      "{
        \mathcolor{purple}{
        \scaledbracket({
          f_2^K
          ,\,
          h_3^K
        })
        }
      }"{swap}
    ]
    \ar[
      r,
      ->>
    ]
    &
    \truncation
      {4}
      { B \mathrm{U} }
    \sslash 
    B \mathrm{U}(1)
    \ar[
      r,
      "{ \sim }"',
      "{
        \text{\cref{tau4BUModBU1IsCycB4Z}}
      }"
    ]
    &
    \mathrm{Cyc}\, B^4 \mathbb{Z}
    \ar[
      d,
      ->>
    ]
    \ar[r]
    \ar[
      dr,
      phantom,
      "{ \lrcorner }"{pos=.1}
    ]
    \ar[
      dr,
      phantom,
      "{
         \text{\cref{CycOfBnZAsHomotopyFiber}}
      }"{scale=.7}
    ]
    &
    \ast
    \ar[d]
    \\
    B^2 \mathbb{Z}
    \times
    B^3 \mathbb{Z}
    \ar[
      rr,
      equals
    ]
    &&
    B^2 \mathbb{Z}
    \times
    B^3 \mathbb{Z}
    \ar[
      r,
      "{ \cup }"{description}
    ]
    &
    B^5 \mathbb{Z}
    \mathrlap{\,,}
  \end{tikzcd}
\end{equation}
where, just as in \cref{ProjectingBraneChargesOutOfSmallCycCohomotopy}, the top rectangle is formed by homotopy pullback of the bottom rectangle along the map
$
  \inlinetikzcd{
    (0,\mathrm{id})
    :
    B^3 \mathbb{Z}
    \ar[r]
    \& B^2 \mathbb{Z} \times B^3 \mathbb{Z}
  }
$,
imposing vanishing of D6-brane charge, $f_2^K  \overset{!}{=} 0$.
\end{definition}

In summary of these constructions, we may identify $\mathrm{Cyc}\, B^4 \mathbb{Z}$ as the classifying object for triples $(f_2, h_3, f_4)$ of brane charges:
\begin{equation}
\label
{BraneChargesClassifiedByCycC4Z}
    \begin{tikzcd}[
      row sep=20pt, column sep=large
    ]
      L\, B^4 \mathbb{Z}
      \ar[
        r,
        "{
          (f_4, h_3)
        }",
        "{ \sim }"{swap}
      ]
      \ar[d]
      &
      B^4 \mathbb{Z}
      \times
      B^3 \mathbb{Z}
      \ar[d]
      \\
      \mathrm{Cyc}\, B^4 \mathbb{Z}
      \ar[
        d,
        ->>,
        "{
          (f_2, h_3)
        }"{swap}
      ]
      \ar[
        r
      ]
      \ar[
        dr,
        phantom,
        "{ \lrcorner }"{pos=.1}
      ]
      \ar[
        dr,
        phantom,
        "{
           \text{\cref{CycOfBnZAsHomotopyFiber}}
        }"{scale=.7}
      ]
      &
      \ast
      \ar[d]
      \\
      B^2 \mathbb{Z}
      \times
      B^3 \mathbb{Z}
      \ar[
        r,
        "{ \cup }"{description}
      ]
      &
      B^5 \mathbb{Z}
      \mathrlap{\,.}
    \end{tikzcd}
\end{equation}

\begin{lemma}[Group structure on brane charges]
\label[lemma]{GroupStructureOnBraneCharges}
  The projection
  $
    \inlinetikzcd{
      \mathrm{Cyc}\, B^4 \mathbb{Z}
      \ar[
        r,
        "{
          h_3
        }"
      ]
      \&
      B^3 \mathbb{Z}
    }
  $
  \cref{BraneChargesClassifiedByCycC4Z}
  is an abelian $\infty$-group object over $B^3 \mathbb{Z}$, with fiberwise $n$-fold multiplication map, for $n \in \mathbb{Z}$, acting on brane charges  by:
  \begin{equation}
    \begin{tikzcd}[
      column sep=5pt,
      row sep=5pt
    ]
      \mathrm{Cyc} B^4 \mathbb{Z}
      \ar[
        dr,
        "{ h_3 }"'
      ]
      \ar[
        rr,
        "{ n \cdot  }"
      ]
      &&
      \mathrm{Cyc} B^4 \mathbb{Z}
      \mathrlap{\,,}
      \ar[
        dl,
        "{ h_3 }"
      ]
      \\
      & B^3 \mathbb{Z}
    \end{tikzcd}
    \qquad 
    \begin{tikzcd}[
      column sep=5pt,
      row sep=5pt
    ]
      \mathrm{Cyc} B^4 \mathbb{Z}
      \ar[
        dr,
        "{ 
          n \cdot f_2 
        }"'
      ]
      \ar[
        rr,
        "{ n \cdot  }"
      ]
      &&
      \mathrm{Cyc}  B^4 \mathbb{Z}
      \mathrlap{\,,}
      \ar[
        dl,
        "{ f_2 }"
      ]
      \\
      & B^2 \mathbb{Z}
    \end{tikzcd}
    \qquad 
    \begin{tikzcd}[
      column sep=5pt,
      row sep=5pt
    ]
      L B^4 \mathbb{Z}
      \ar[
        dr,
        "{ 
          n \cdot f_4 
        }"'
      ]
      \ar[
        rr,
        "{ n \cdot  }"
      ]
      &&
      L B^4 \mathbb{Z}
      \mathrlap{\,.}
      \ar[
        dl,
        "{ f_4 }"
      ]
      \\
      & B^4 \mathbb{Z}
    \end{tikzcd}
  \end{equation}
\end{lemma}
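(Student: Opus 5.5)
The approach is to exhibit $\mathrm{Cyc}\, B^4 \mathbb{Z}$ over $B^3 \mathbb{Z}$ as a homotopy fiber of a fiberwise homomorphism between abelian $\infty$-groups. Homotopy fibers of such homomorphisms are again abelian $\infty$-groups, so the group structure and the effect of $n\cdot$ on $(f_2,h_3,f_4)$ follow.

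First, regard $B^2\mathbb{Z}\times B^3\mathbb{Z} \to B^3\mathbb{Z}$ (projection to $h_3$) as the trivial bundle of abelian $\infty$-groups with fiber $B^2\mathbb{Z}$, under pointwise addition in $B^2\mathbb{Z}$. Likewise regard $B^5\mathbb{Z}\times B^3\mathbb{Z}\to B^3\mathbb{Z}$ as the trivial bundle with fiber $B^5 \mathbb{Z}$. The cup product map
\[
  (f_2,h_3)\mapsto (f_2\cup h_3,\, h_3)
\]
is then a map over $B^3\mathbb{Z}$. On fibers it is $f_2 \mapsto f_2\cup h_3$ for fixed $h_3$. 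By bilinearity of the cup product, which holds at the level of spectra via the pairing $H\mathbb{Z}\wedge H\mathbb{Z}\to H\mathbb{Z}$, this fiberwise map is a map of $H\mathbb{Z}$-module spectra: adjoint to $h_3$ it is $\Sigma^2 H\mathbb{Z}\to \Sigma^5 H\mathbb{Z}$, i.e.\ a point of $\mathrm{Map}_{H\mathbb{Z}}(\Sigma^2H\mathbb{Z},\Sigma^5H\mathbb{Z})\simeq B^3\mathbb{Z}$. Hence it is a fiberwise $E_\infty$-homomorphism parametrized over $B^3\mathbb{Z}$. Concretely, the universal family over $B^3\mathbb{Z}$ is the canonical map $B^3\mathbb{Z}\to \mathrm{Map}_{\mathrm{Ab}_\infty}(B^2\mathbb{Z},B^5\mathbb{Z})$, which exists since $B^3\mathbb{Z}\simeq \Omega^\infty$ of the mapping spectrum.

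By \cref{CycOfBnZAsHomotopyFiber}, $\mathrm{Cyc}\,B^4\mathbb{Z}$ is the homotopy fiber of $\cup$, hence over $B^3\mathbb{Z}$ it is the fiberwise homotopy fiber of this family of homomorphisms. It is therefore an abelian $\infty$-group object in the slice over $B^3\mathbb{Z}$, with fiber over $h_3$ given by $\Omega^\infty\mathrm{fib}(\Sigma^2H\mathbb{Z}\xrightarrow{h_3}\Sigma^5H\mathbb{Z})$. The fiberwise $n$-fold multiplication is induced by $n\cdot$ on the source and target spectra, which commute with the homomorphism. This yields the three triangles:
\begin{itemize}
\item $h_3$ is preserved, since we work fiberwise over $B^3\mathbb{Z}$.
\item $f_2$, being the projection to the source group, is multiplied by $n$.
\item Restricting along $f_2=0$ gives the fiber of the homomorphism on $L B^4\mathbb{Z}\simeq B^4\mathbb{Z}\times B^3\mathbb{Z}$. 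There the induced group structure is the one on $\Omega B^5\mathbb{Z}\simeq B^4\mathbb{Z}$, so $f_4$ is multiplied by $n$.
\end{itemize}

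The main obstacle is coherence: showing that the identification $\mathrm{Cyc}\,B^4\mathbb{Z}\simeq\mathrm{hofib}(\cup)$ of \cref{CycOfBnZ} is compatible with this parametrized spectrum description, rather than being merely an equivalence of spaces. My plan is to bypass this by noting that the claim only concerns the homotopy fiber of $\cup$, as in \cref{BraneChargesClassifiedByCycC4Z}. The diagram of brane charges in \cref{BraneChargesClassifiedByCycC4Z} is defined through that fiber, so nothing about loop spaces themselves is needed.

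A further check is that the cup product family is genuinely $E_\infty$-linear and not just an $H$-map. I would use the straightforward argument via $\mathrm{Map}_{H\mathbb{Z}}$ above. As a sanity check, since $\pi_0$ of the fiberwise mapping spectrum $\mathrm{Map}_{H\mathbb{Z}}(\Sigma^2 H\mathbb{Z},\Sigma^5H\mathbb{Z})$ has no higher obstructions in this range, the structure is unique up to contractible choice over the simply connected base.
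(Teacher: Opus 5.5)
Your proposal is correct and follows essentially the same route as the paper. The paper also writes $\mathrm{Cyc}\,B^4\mathbb{Z}\to B^3\mathbb{Z}$ as $\Omega^\infty_{B^3\mathbb{Z}}$ of the fiber of the constantly parameterized $H\mathbb{Z}$-module map $(-)\cup h_3:\Sigma^2 H\mathbb{Z}\to\Sigma^5 H\mathbb{Z}$, obtained by pulling back $(f_2,h_3)\mapsto(f_2\cup h_3,h_3)$ along the zero section and using that $\Omega^\infty_{B^3\mathbb{Z}}$ preserves limits; it then reads off that $n\cdot$ fixes $h_3$, scales $f_2$, and scales $f_4$ on the fiber over $f_2=0$. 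One small imprecision: it is $\mathrm{Map}_{H\mathbb{Z}}(\Sigma^2H\mathbb{Z},\Sigma^5H\mathbb{Z})$ that is equivalent to $B^3\mathbb{Z}$, not the mapping space of plain abelian $\infty$-groups, whose $\pi_0$ also contains the integral operation $\beta\,\mathrm{Sq}^2$; your argument only uses the canonical map from the former to the latter, so nothing breaks.
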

\begin{proof}
  By \cref{CycOfBnZ}, $\mathrm{Cyc}\, B^4 \mathbb{Z}$ is the fiber of the cup product $\inlinetikzcd{ B^2 \mathbb{Z} \times B^3 \mathbb{Z} \ar[r, "{ \cup }"] \& B^5 \mathbb{Z} }$. Regarded in the slice over $B^3 \mathbb{Z}$ this means that
  \begin{equation}
  \label
  {CycB^4ZAsFiberOverB3Z}
    \begin{tikzcd}
      \mathrm{Cyc}\, B^4 \mathbb{Z}
      \ar[
        rr,
        "{
          \mathrm{fib}_{/B^3 \mathbb{Z}}
        }"
      ]
      \ar[
        d,
        "{ h_3 }"
      ]
      &&
      B^2 \mathbb{Z}
        \times
      B^3 \mathbb{Z}
      \ar[
        rr,
        "{
          (f_2, h_3)
          \,\mapsto\,
          ({
            f_2 \cup h_3
            ,\,
            h_3
          })
        }"
      ]
      \ar[
        d,
        "{ \mathrm{pr}_2 }"
      ]
      &{\phantom{---}}&
      B^5 \mathbb{Z}
        \times
      B^3 \mathbb{Z}
      \ar[
        d,
        "{ \mathrm{pr}_2 }"
      ]
      \\
      B^3 \mathbb{Z}
      \ar[rr, equals]
      &&
      B^3 \mathbb{Z}
      \ar[rr, equals]
      &&
      B^3 \mathbb{Z}
      \mathrlap{\,,}
    \end{tikzcd}
  \end{equation}
  where $\mathrm{Cyc}\, B^4 \mathbb{Z}$ is 
  the pullback of the top right map along the inclusion $\inlinetikzcd{ B^3 \mathbb{Z} \ar[rr, "{ h_3 \mapsto (0,h_3) }"] \&\phantom{-}\& B^5 \mathbb{Z} \times B^3 \mathbb{Z} }$.
  But the top right map in \cref{CycB^4ZAsFiberOverB3Z} is also the image under $\Omega^\infty_{B^3 \mathbb{Z}}$ of a map of constantly parameterized $H \mathbb{Z}$-module spectra, and since $\Omega^\infty_{B^3 \mathbb{Z}}$ preserves limits over $B^3 \mathbb{Z}$, hence so is $\mathrm{Cyc}\, B^4 \mathbb{Z}$ as an object over $B^3 \mathbb{Z}$:
  \begin{equation}
    \left(
    \begin{array}{c}
     \mathrm{Cyc}\, B^4 \mathbb{Z}
     \\
     \downarrow
     \\
     B^3 \mathbb{Z}
    \end{array}
    \right)
    \,\simeq\,
    \Omega^\infty_{B^3 \mathbb{Z}}
    \,
    \mathrm{fib}
    \,
    \Big(
    \begin{tikzcd}[sep=40pt]
      \mathrm{cnst}_{B^3 \mathbb{Z}}
      \,
      \Sigma^2 H\mathbb{Z}
      \ar[
        r,
        "{
          (-) \cup h_3
        }"
      ]
      &
      \mathrm{cnst}_{B^3 \mathbb{Z}}
      \,
      \Sigma^5 H\mathbb{Z}
    \end{tikzcd}
    \Big)
    \mathrlap{\,.}
  \end{equation}
  Here the fiber is again a parameterized $H \mathbb{Z}$-module, so its fiberwise infinite-loop space is an abelian $\infty$-group object over $B^3 \mathbb{Z}$, as claimed. Multiplication by $n$ fixes the base coordinate $h_3$ and scales the projection $f_2$ by $n$. On the homotopy fiber of $f_2$, the remaining coordinate $f_4$ comes from $\Sigma^4 H\mathbb{Z}$ and hence is likewise scaled by $n$, as claimed.
\end{proof}

Now we may state our first main result:
\begin{theorem}[Comparison from small-cyclified Cohomotopy to twisted K-theory]
\label[theorem]{Comparison_map_to_twisted_K_theory}
\begin{enumerate}
\item
\label{TheComparisonMapAtEvenN}
A twisted cohomology operation \cref{TwistedCohomologyOperation}
from small-cyclified 4-Cohomotopy \cref{SmallCyclifiedCohomotopy} to twisted rank-0 K-theory \cref{TwistedKTheoryAsNonabelianCohomology},
\begin{equation}
\label{TheComparisonMap}
  \begin{tikzcd}[
    column sep=15pt,
    row sep=11pt
  ]
    \mathrm{Cyc}_1 S^4
    \ar[
      rr,
      dashed,
      "{ \Phi }"
    ]
    \ar[
      dr,
      "{ h_3^C }"'
    ]
    &&
    B \mathrm{U}
    \sslash 
    B\mathrm{U}(1)
    \ar[
      dl,
      "{
        h_3^K
      }"
    ]
    \\
    &
    B^3 \mathbb{Z}
    \mathrlap{\,,}
  \end{tikzcd}
\end{equation}
and acting on the low-degree charges \textup{(\cref{BraneChargeFromCohomotopy,BraneChargesFromTwistedKTheory})} as%
\footnote{%
  In stating \cref{D4ChargePullback}, we are observing that the restriction $\Phi_{\vert L_1 S^4}$ factors uniquely up to homotopy through the homotopy fiber of $f_2^K$, where $f_4^K$ is indeed defined. 
}
\begin{subequations}
\label{TheChargeMultiplication}
  \begin{align}
    \Phi^\ast h_3^K
    & =
    \phantom{n \cdot {}} h_3^C
    \\
    \Phi^\ast f_2^K
    & =
    n \cdot f_2^C
    \\
    \label{D4ChargePullback}
    \bracket({
      \Phi
        _{\vert L_1 S^4}
     })
      ^\ast 
    f_4^K
    & =
    n \cdot f_4^C
  \end{align}
\end{subequations}
exists precisely when the integer $n$ is even, $n = 2k$.

\item 
\label{ComparisonMapLiftsPhiCThroughPhiK}
Such a $\Phi$ \cref{TheComparisonMap} satisfying \cref{TheChargeMultiplication} is equivalently a lift of%
\textup{\footnote{%
The quadratic correction $\lambda f_2^C \cup f_2^C$ in \cref{ComparisonMapAsALift} corresponds to a choice of normalization of the degree-four component of the comparison map. It has no effect on the D4-brane charge, since $f_2^C$ restricts to zero on $L_1 S^4$. A precedent of this kind of quadratic flux redefinition appears in \cite[(8.19)]{DMW2003}.
}}
$n \cdot \phi^C + \lambda \cdot f_2^C \cup f_2^C$ 
\textup{(from \cref{BraneChargeFromCohomotopy,GroupStructureOnBraneCharges,PastingDiagramForCyc})} through $\phi^K$ \textup{(\cref{BraneChargesFromTwistedKTheory})},
\begin{equation}
\label{ComparisonMapAsALift}
  \begin{tikzcd}[
    column sep=35pt
  ]
    \mathrm{Cyc}_1 S^4
    \ar[
      dr,
      "{
        n \cdot \phi^C + 
      }"{sloped},
      "{
        \lambda \cdot f_2^C \cup f_2^C      
      }"{swap, sloped}
    ]
    \ar[
      rr,
      dashed,
      "{ \Phi }"
    ]
    \ar[
      ddr,
      bend right=15,
      "{ h_3^C }"{swap}
    ]
    &&
    B \mathrm{U}
    \sslash 
    B \mathrm{U}(1)
    \ar[
      ddl,
      bend left=15,
      "{
        h_3^K
      }"
    ]
    \ar[
      dl,
      "{
        \phi^K
      }"{description}
    ]
    \\
    &
    \mathrm{Cyc}\, B^4 \mathbb{Z}
    \ar[
      d,
      "{ h_3 }"
    ]
    \\
    &
    B^3 
    \mathbb{Z}
    \mathrlap{\,,}
  \end{tikzcd}
\end{equation}
for 
$\lambda \in n/2 + 2\mathbb{Z}$ uniquely determined by $\Phi$ \textup{(and the choice of $\phi^C$)}. Every such $\lambda$ is realized by some $\Phi$.%
\end{enumerate}
\end{theorem}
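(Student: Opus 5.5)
We would reduce the construction of $\Phi$ to an obstruction-theoretic lifting problem through the map $\phi^K : B\mathrm{U}\sslash B\mathrm{U}(1) \to \mathrm{Cyc}\,B^4\mathbb{Z}$ of \cref{BraneChargesFromTwistedKTheory}, working throughout in the slice over $B^3\mathbb{Z}$. By \cref{LEMtau4BUModBU1IsCycB4Z}, $\phi^K$ is the 4-truncation of the fiber, so its homotopy fiber is $\truncation{\geq 5}{B\mathrm{SU}}$, the 5-connective cover, with homotopy in even degrees $6,8,10,\dots$. Over 9-dimensional targets, and for maps out of $\mathrm{Cyc}_1 S^4$ truncated appropriately, only the stages in degrees $6$ and $8$ matter, and the $k$-invariants are determined by the Chern classes $c_3, c_4$ twisted by $h_3$.

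The first step is to show that any $\Phi$ satisfying \cref{TheChargeMultiplication} composes with $\phi^K$ to a map over $B^3\mathbb{Z}$ of the form $n\cdot\phi^C + \lambda\, f_2^C\cup f_2^C$. We would classify maps $\mathrm{Cyc}_1 S^4 \to \mathrm{Cyc}\,B^4\mathbb{Z}$ over $B^3\mathbb{Z}$. By \cref{GroupStructureOnBraneCharges} these form a torsor, and two maps with the same $(f_2,h_3)$ differ by a class in $H^4(\mathrm{Cyc}_1 S^4;\mathbb{Z})$. We would compute this group from the pushout of \cref{Cyc1AsAPushout}: it is spanned by $f_2^2$ and a class restricting to $f_4^C$ on $L_1S^4$. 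The conditions $\Phi^\ast f_2^K = n f_2^C$ and \cref{D4ChargePullback} then pin the composite down to $n\phi^C + \lambda f_2^2$ with $\lambda\in\mathbb{Z}$. Conversely, any lift of such a composite automatically satisfies \cref{TheChargeMultiplication}, which gives the equivalence in \cref{ComparisonMapLiftsPhiCThroughPhiK}.

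The main step, and the expected obstacle, is deciding when the lift exists. The first obstruction lies in $H^7(\mathrm{Cyc}_1 S^4;\mathbb{Z})$, pulled back from the $k$-invariant of $B\mathrm{U}\sslash B\mathrm{U}(1)\to \mathrm{Cyc}\,B^4\mathbb{Z}$. Rationally this is $\mathrm{ch}_3$-type data, and integrally it is the integrality condition relating $c_3$ to $\tfrac{1}{2}(c_1^3 - 3c_1c_2)$-type expressions, i.e.\ a $\mathrm{Sq}^2$ condition. Concretely, a genuine $K$-class requires $\mathrm{Sq}^2$ of the degree-4 class to be compatible with $c_1$: the Wu/Riemann–Roch integrality $c_3 \equiv c_1c_2 + \dots \pmod 2$.

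We would evaluate this obstruction on generators. Restricted to $L_1S^4 \simeq S^4\times(S^6\vee S^3)$ in low degrees (cf.\ the proof of \cref{HomotopyOfCyc1S4}), it reduces to the question of which multiples of the generator $\gamma_4$ together with the Hopf composites $\gamma_4\eta$ lift to $B\mathrm{SU}$. This is governed by $\pi_5$: $\eta$ acts nontrivially on $\pi_4(S^4)$ but trivially on $\pi_4(B\mathrm{SU})\to\pi_5(B\mathrm{SU})=0$. The torsion class $\gamma_4\eta\in\pi_5(\mathrm{Cyc}_1S^4)$ must therefore map to zero, which forces $n$ to be even. Equivalently, $\mathrm{Sq}^2$ of the D4 class must vanish on the image, as in the Diaconescu–Moore–Witten (DMW) $\mathrm{Sq}^3$ condition.

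For the part of the obstruction involving $f_2$, the same mod-2 computation on the $B S^1$ and $f_2^2$ directions forces $\lambda\equiv n/2 \pmod 2$. Here we would use $c_2(V\otimes L)$ expansions as in the proof of \cref{LEMtau4BUModBU1IsCycB4Z}, together with $\mathrm{ch}_2=\tfrac12c_1^2-c_2$.

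Once the degree-7 obstruction vanishes, the degree-9 obstruction lands in $H^9$ with $\pi_8$ coefficients. It is killed rationally by $F_8$ via \cref{The-full-IIA-EM-Gauss-laws}, and we would show it vanishes integrally using the absence of 9-cells beyond those accounted for in \cref{ScndStageInclusion}. Freedom in the choices at each stage then realizes every admissible $\lambda$, and $\lambda$ is recovered from $\Phi$ by reading off the $f_2^2$ component of $\phi^K\circ\Phi$.
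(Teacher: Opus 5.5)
Your argument for why $n$ must be even does not work. Since $\pi_5$ of $B\mathrm{SU}$ (and of $B\mathrm{U}\sslash B\mathrm{U}(1)$) vanishes, every map sends $\gamma_4\eta$ to zero, whatever $\Phi_\ast\gamma_4 = n\cdot(\text{generator})$ is. So no constraint on $n$ follows. Your reasoning also never uses the twist $h_3^C$, so it would apply verbatim in the untwisted DMW sector, where $\mathrm{ev}_0^\ast\beta$ is a lift with $n=1$ (\cref{ComparisonInDMWSector}). Nor is there a $\mathrm{Sq}^2$/$\mathrm{Sq}^3$ obstruction: $f_4^C$ is pulled back from $S^4$, where $H^7=0$, so $\mathrm{Sq}^3_{\mathbb{Z}}f_4^C=0$ identically.

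The actual first obstruction is the \emph{twisted} $k$-invariant $\bigl(\mathrm{Sq}^3_{\mathbb{Z}}-h_3^C\cup(-)\bigr)(n f_4^C)$. It lives in $H^7(L_1S^4;\mathbb{Z})\cong H^7(\mathrm{Cyc}_1S^4;\mathbb{Z})\cong\mathbb{Z}_{/2}$, where it equals $n\,\epsilon$ with $\epsilon:=h_3^C\cup f_4^C$. The missing idea is that $\epsilon\neq 0$. This is a torsion cup-product class, so it pulls back to zero on every sphere, and no homotopy-group argument can detect it. It is nonzero precisely because $L_1S^4$ is \emph{not} $S^4\times(S^3\vee S^6)$ cohomologically. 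The section of $\mathrm{ev}_0$ splits homotopy groups (as used in \cref{HomotopyOfCyc1S4}), but in the Serre spectral sequence $d_4(\iota_6)=\pm2\,\iota_3\iota_4$ \cref{d4OfIota6}. Hence $H^7=\mathbb{Z}_{/2}\langle\iota_3\iota_4\rangle$ rather than $\mathbb{Z}$, so odd $n$ is obstructed while even $n$ clears the first stage.

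The existence half also has a gap. $H^9(\mathrm{Cyc}_1S^4;\mathbb{Z})\cong\mathbb{Z}_{/2}\langle f_2^C\cup\epsilon\rangle$ is pure torsion, so $F_8$ and other rational information say nothing. ``Absence of 9-cells'' is false: \cref{ScndStageInclusion} concerns the cells attached in passing to $\mathrm{Cyc}_2S^4$, not the cells of $\mathrm{Cyc}_1S^4$. This secondary obstruction also depends on the first-stage choice. Linearity cannot be used either, since $2f_2^C$ is no longer divisible once $d_5^{h_3}f_2^C=\epsilon\neq0$.

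The paper resolves this in the $h_3$-twisted AHSS. The module structure over the untwisted AHSS gives $d_5^{h_3}\bigl((f_2^C)^2\bigr)=f_2^C\cup\epsilon$, so the only possible value of $d_7^{h_3}(2f_2^C)$ is already a boundary. One then extends from $\mathrm{Cyc}_1S^4\times_{BS^1}\mathbb{C}P^5$ and uses $H^{13+\mathrm{evn}}(\mathrm{Cyc}_1S^4;\mathbb{Z})=0$ to kill all higher obstructions. This is needed because $\mathrm{Cyc}_1S^4\supset BS^1$ is infinite-dimensional, so it is not true that only the degree-6 and degree-8 stages matter. All even $n$ then follow from $n=2$ by additivity.

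Finally, in your first step $H^4(\mathrm{Cyc}_1S^4;\mathbb{Z})$ is only $\mathbb{Z}\langle(f_2^C)^2\rangle$ (rationally $\mathrm{d}F_4=H_3\wedge F_2$), so no class in it restricts to $f_4^C$. The D4 charge sits in $\phi^C$, which lies over $f_2^C\neq0$. The group of maps to $\mathrm{Cyc}\,B^4\mathbb{Z}$ over $B^3\mathbb{Z}$ is the extension of $H^2$ by $H^4$, split by $\phi^C$. Your conclusion $\phi^K\circ\Phi=n\phi^C+\lambda(f_2^C)^2$ survives, since it follows from $\Phi^\ast f_2^K=nf_2^C$ alone. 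However, the parity condition $\lambda\equiv n/2 \pmod 2$ is asserted rather than derived.
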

\begin{proof}[Proof outline]
\begin{enumerate}
\item 
The Mayer--Vietoris long exact sequence for the homotopy pushout \cref{TheHomotopyPushout} in \cref{Cyc1AsAPushout} gives the cohomology groups of $\mathrm{Cyc}_1 S^4$ (and of $L_1 S^4$) on which the following spectral sequence arguments rely, such as:
\begin{subequations}
  \begin{align}
    \label{2CohomologyOfCyc1}
    H^2\bracket({
      \mathrm{Cyc}_1 S^4
      ;\,
      \mathbb{Z}
    })
    & 
    \simeq
    \mathbb{Z}\bracket\langle{
      f_2^C
    }\rangle
    \\
    \label{4CohomologyOfCyc1}
    H^4\bracket({
      \mathrm{Cyc}_1 S^4
      ;\,
      \mathbb{Z}
    })
    & 
    \simeq
    \mathbb{Z}\bracket\langle{
      \bracket({f_2^C})^2
    }\rangle
    \\
    \label
    {4CohomologyOfL1}
    H^4\bracket({
      L_1 S^4
      ;\,
      \mathbb{Z}
    })
    & 
    \simeq
    \mathbb{Z}\bracket\langle{
      f_4^C
    }\rangle
    \\
    \label{5CohomologyOfCyc1}
    H^5\bracket({
      \mathrm{Cyc}_1 S^4
      ;\,
      \mathbb{Z}
    })
    &
    \simeq
    0
    \\
    \label{7CohomologyOfCyc1}
    H^7\bracket({
      \mathrm{Cyc}_1 S^4
      ;\,
      \mathbb{Z}
    })
    &
    \simeq
    \mathbb{Z}_{/2}
    \\
    \label{9CohomologyOfCyc1}
    H^9\bracket({
      \mathrm{Cyc}_1 S^4
      ;\,
      \mathbb{Z}
    })
    &
    \simeq
    \mathbb{Z}_{/2}\bracket\langle{
      f_2^C \cup \epsilon
    }\rangle
    \\
    \label{OddCohomologyOfCyc1Above9}
    \mathllap{
    \forall_{
      \mathrm{evn} 
      \,\in\, 
      2\mathbb{N}
    }
    \;\;\;
    }
    H^{11 + \mathrm{evn}}\bracket({
      \mathrm{Cyc}_1 S^4
      ;\,
      \mathbb{Z}
    })
    & 
    \simeq 0
    \,,
      \end{align}
\end{subequations}
where the class $\epsilon$ in \cref{9CohomologyOfCyc1} is explained in a moment in \cref{Epsilon}.

\item
  We first consider lifting only to $\truncation{10}{B \mathrm{U}} \sslash B \mathrm{U}(1)$ (from which the full lift will follow readily in 
  \cref{LiftingBeyond10Truncation} of
  \cref{ConstructingTheLift} below).
  
  In this situation, the fiber of $\phi^K$ in \cref{ComparisonMapAsALift} is that of $\inlinetikzcd{ \truncation{10}{B \mathrm{U}} \ar[r] \& \truncation{4}{B \mathrm{U}} }$ (by \cref{LEMtau4BUModBU1IsCycB4Z}), which, by Bott periodicity, is 5-connected.  
  Therefore, the first obstruction to the lift $\Phi$ is encountered in the group \cref{7CohomologyOfCyc1}. 
  Inspection of the edge homomorphism of the Serre spectral sequence for $\inlinetikzcd{\mathrm{Cyc}_1 S^4 \ar[r] \& B S^1 }$ reveals that pullback to the homotopy fiber of $f_2^C$, along $\inlinetikzcd{ j : L_1 S^4 \ar[r] \& \mathrm{Cyc}_1 S^4 }$, is an isomorphism \begin{equation}
  \label{IsoInDegreeSeven}
    \inlinetikzcd{
      H^7\bracket({
        \mathrm{Cyc}_1 S^4
        ;
        \mathbb{Z}
      })
      \ar[
        rr,
        "{ j^\ast }",
        "{ \sim }"'
      ]
      \&\&
      H^7\bracket({
        L_1 S^4
        ;
        \mathbb{Z}
      })
      \mathrlap{\,.}
    }
  \end{equation}
  Therefore the first obstruction is equivalently that of lifting the integral class $n \cdot f_4^C$ \cref{ProjectingBraneChargesOutOfSmallCycCohomotopy} in $H^4\bracket({ L_1 S^4; \mathbb{Z} })$ \cref{4CohomologyOfL1}  to $h_3^C$-twisted K-theory.

  \item 
  \label{TheSteenrodObstructionTerm}
  This latter obstruction is well-known (\cite[(4.1)]{AtiyahSegal2006}, cf. \parencites[(11.5)]{DMW2003}[(2.3)]{EvslinSati2006}) to be given by the expression  
    \begin{equation}
     \label{TwistedSteenrodSquare}
       \bracket({
        \mathrm{Sq}^3_{\mathbb{Z}}
        \, 
         -
        h_3^C 
         \cup
        (-)
       })
       \bracket({  
         n \cdot f^C_4
       })
       \,\in\,
       H^7\bracket({
         L_1 S^4
         ;\,
         \mathbb{Z}
       })
       \simeq
       \mathbb{Z}_{/2}
       \mathrlap{\,.}
     \end{equation}
     
    Crucially, in our case, 
    $f_4^C$ is pulled back from $S^4$ \cref{ProjectingBraneChargesOutOfSmallCycCohomotopy}. Since $\mathrm{Sq}^3$ is natural while $H^7\bracket({ S^4; \mathbb{Z}}) = 0$, this immediately implies that 
    the Steenrod summand in \cref{TwistedSteenrodSquare} vanishes identically:
    \begin{equation}
      \label
      {IntegralEquationOfMotion}
      \mathrm{Sq}^3_{\mathbb{Z}}\, f_4^C 
      \,=\, 
      0
      \mathrlap{\,.}
    \end{equation}
    This equation \cref{IntegralEquationOfMotion} is called the \textbf{integral equation of motion of M-theory} in \cite[\S~5]{DMW2000}, there considered in the untwisted DMW sector (cf. \cref{DMWSector}); here it is implied (cf. \parencites[Prop. 3.15]{FSS20-H}{GS21}{Grady2025}) by M-brane charge quantization in 4-Cohomotopy, \cref{TheCFieldIn11D}.

    \item
    But since here we do not restrict to vanishing twist $h_3^C$, the obstruction \cref{TwistedSteenrodSquare} is nominally still given by the $n$-fold multiple of the class
    \begin{equation}
    \label{Epsilon}
      \inlinetikzcd{
      \epsilon 
      \,:=\,
      { h_3^C \cup f_4^C }
      \,\in\,
      H^7\bracket({
        L_1 S^4
        ;\,
        \mathbb{Z}
      })
      \ar[
        rr,
        "{ 
          \scaledbracket({
            j^\ast 
          })^{-1}
        }",
        "{ \sim }"'
      ]
      \&\&
      H^7\bracket({
        \mathrm{Cyc}_1 S^4
        ;\,
        \mathbb{Z}
      })
      \simeq
      \mathbb{Z}_{/2}
      \mathrlap{\,.}
      }
    \end{equation} 
    Now the multiplicative integral Serre spectral sequence of the evaluation fibration $\inlinetikzcd{ S^3 \vee S^6 \simeq \Omega_1 S^4 \ar[r] \&  L_1 S^4 \ar[r, "{ \mathrm{ev}_0 }"] \& S^4 }$
    \cref{ComputingBasedLoopsOfS4} has 
    \begin{equation}
      \label{d4OfIota6}
      \mathrm{d}_4(\iota_6) 
        = 
      \pm 2 \iota_3 \iota_4
      \mathrlap{\,,}
    \end{equation} whence
    \begin{equation}
      \label{EpsilonIsNontrivial}
      \epsilon = [1] \in \mathbb{Z}_{/2}
    \end{equation}
    is non-trivial.
    This shows that lifts $\Phi$ cannot exist for odd $n$, forced by the $h_3^C$-twist (and that $n = 1$ is possible in the DMW sector, where $h_3 = 0$, cf. \cref{DMWSector} below).

  \item
  \label{ConstructingTheLift}
  Since twisted K-theory classes are additive, the claimed existence of lifts for all even $n$ now follows as soon as we establish a lift at $n = 2$. 
  
  This is obtained by a careful analysis of the fate of the D6-brane charge $f_2^C$ in the $h_3^C$-twisted Atiyah--Hirzebruch spectral sequence:
  \begin{enumerate}
    \item 
    First we note that there are only two possibly non-vanishing differentials of $f_2^C$, $\mathrm{d}_5^{h_3}$ and $\mathrm{d}_7^{h_3}$, because
    \begin{enumerate}
      \item all differentials $\mathrm{d}^{h_3}_{2\bullet}$ vanish since their codomains are odd rows, while the coefficient space has homotopy concentrated in even degree,
      \item 
      $\mathrm{d}_3^{h_3}$
      and all $\mathrm{d}^{h_3}_{9 + 2\bullet}$
      vanish since their codomains are subquotients of
      the vanishing groups \cref{5CohomologyOfCyc1,OddCohomologyOfCyc1Above9}, respectively.
    \end{enumerate}

    \item
    \label{FirstNontrivialDifferential}
    Its first possibly non-vanishing differential is just the class \cref{Epsilon}:
    \begin{equation}
      \label{d5Off2}
      \mathrm{d}^{h_3}_5 \, f_2^C
      =
      \epsilon
      \mathrlap{\,,}
    \end{equation} 
    as in the argument of \cref{TheSteenrodObstructionTerm}.
    But since this class inhabits the 2-torsion group \cref{7CohomologyOfCyc1}, we find: 
    \begin{equation}
      \mathrm{d}^{h_3}_5 
      \bracket({
        2 \cdot f_2^C
      })
      =
      2 \cdot 
      \mathrm{d}^{h_3}_5 f_2^C
      =
      2 \cdot \epsilon
      =
      [0]
      \in
      \mathbb{Z}_{/2}
      \mathrlap{\,.}
    \end{equation}

    \item 
    The second possibly non-vanishing differential is of the form
    $\mathrm{d}^{h_3}_7 \bracket({2 \cdot f_2^C})$ in a subquotient of $H^9\bracket({ \mathrm{Cyc}_1 S^4 ; \mathbb{Z} }) \simeq \mathbb{Z}_{/2}\bracket\langle{ f_2^C \cup \epsilon }\rangle$ \cref{9CohomologyOfCyc1}.

    Beware that with $\mathrm{d}^{h_3}_5 f_2^C = \epsilon \neq [0]$ \cref{EpsilonIsNontrivial}, the class $2 \cdot f_2^C$, which survives to the next page by \cref{FirstNontrivialDifferential}, is no longer divisible there, so that linearity of $\mathrm{d}_7^{h_3}$ alone does not imply vanishing of $\mathrm{d}_7^{h_3}\bracket({2 \cdot f_2^C})$, despite its notational appearance.

    Instead, we invoke the left module structure, to be denoted $(-) \dot \cup (-)$, of the $h_3$-twisted AHSS over the $0$-twisted AHSS, and use that $f_2^C$ may equivalently be regarded as an untwisted class, to deduce:
    \begin{equation}
      \begin{aligned}
      \mathrm{d}_5^{h_3}\bracket({
        f_2^C
        \cup
        f_2^C
      })
      & =
      \mathrm{d}_5^{h_3}\bracket({
        f_2^C
        \mathbin{\dot\cup}
        f_2^C
      })
      \\
      & =
      \underbrace{
      \mathrm{d}_5^{0}\bracket({
        f_2^C
      })
      }_{0}
      \mathbin{\dot\cup}
      f_2^C
      +
      f_2^C
      \mathbin{\dot\cup}
      \underbrace{
      \mathrm{d}_5^{h_3}\bracket({
        f_2^C
      })
      }_{ \epsilon }
      \\
      & =
      f_2^C \mathbin{\dot\cup} \epsilon
      = 
      f_2^C \cup \epsilon
      \mathrlap{\,,}
    \end{aligned}
  \end{equation}
  where under the braces we used \cref{d5Off2} and that $f_2^C$, being the class of the complex line bundle associated with the M-circle bundle, is a permanent AHSS cycle.

  Similarly, to see that $\bracket({f_2^C})^2$ exists on the 5th page in the first place, one computes:
  \begin{equation}
    \begin{aligned}
      \mathrm{d}_3^{h_3}
      \bracket({
        f_2^C 
          \cup 
        f_2^C
      })
      &=
      \mathrm{d}_3^{h_3}
      \bracket({
        f_2^C 
          \mathbin{\dot\cup} 
        f_2^C
      })
      \\
      & =
      \underbrace{
      \mathrm{d}_3^0\bracket({
        f_2^C
      })
      }_{0}
      \mathbin{\dot\cup}
      f_2^C
      +
      f_2^C
      \mathbin{\dot\cup}
      \underbrace{
      \mathrm{d}_3^{h_3}\bracket({
        f_2^C
      })
      }_{ 0 }
      \mathrlap{\,,}
    \end{aligned}
  \end{equation}
  where now under the second brace we used \cref{5CohomologyOfCyc1}.

  This reveals that the generator $f_2^C \cup \epsilon = \mathrm{d}_5^{h_3}\bracket({ \bracket({f_2^C})^2 })$ of $H^9$ \cref{9CohomologyOfCyc1} is a boundary and hence vanishes on the next page. Since $\mathrm{d}_7^{h_3} \bracket({2 \cdot f_2^C})$ must be a multiple of this vanished generator, it vanishes, too.  

  \item
  \label{LiftApproximationOnFiniteModel}
  With the two possibly non-trivial differentials on $2 \cdot f_2^C$ thus vanishing, the class survives to the $\infty$-page, and as such it exhibits an $h_3$-twisted K-theory class $\xi$ on, so far, the finite-dimensional approximation of $\mathrm{Cyc}_1 S^4$ given by $\mathrm{Cyc}_1 S^4 \times_{B S^1} \mathbb{C}P^5$, with first Chern class $2\cdot f_2^C$.

  \item
  \label{LiftingBeyond10Truncation}
  We now observe that the relative cells of $\bracket({ \mathrm{Cyc}_1 S^4,\, \mathrm{Cyc}_1 S^4 \times_{B S^1} \mathbb{C}P^5})$ have dimension $\geq 12$, so that this class $\xi$ from \cref{LiftApproximationOnFiniteModel} actually extends to all of $\mathrm{Cyc}_1 S^4$ as a map to $\truncation{10}{B\mathrm{U}} \sslash B \mathrm{U}(1)$.

  But the further Postnikov obstructions to lifting this further, all the way through $\inlinetikzcd{ B\mathrm{U} \sslash B \mathrm{U}(1) \ar[r] \& \truncation{10}{B \mathrm{U}} \sslash B \mathrm{U}(1) }$, now lie entirely in $H^{13 + \mathrm{evn}}\bracket({ \mathrm{Cyc}_1 S^4; \mathbb{Z} })$, while these groups all vanish by \cref{OddCohomologyOfCyc1Above9}. This establishes the promised full lift $\inlinetikzcd{ \mathrm{Cyc}_1 S^4 \ar[r, "{ \Phi }"] \& B \mathrm{U} \sslash B \mathrm{U}(1) }$ claimed in \cref{TheComparisonMap}.

  \item
  \label
  {CycB4ZInProofOfComparisonMap}
  Finally, to obtain from this the D4-brane charge as claimed in \cref{D4ChargePullback}, one may show that the group of D6/NS5/D4-brane charges on $\mathrm{Cyc}_1 S^4$ (via \cref{GroupStructureOnBraneCharges}), hence of homotopy classes of maps of the form
  \[
    \begin{tikzcd}[
      column sep=20pt,
      row sep=3pt
    ]
      \mathrm{Cyc}_1 S^4
      \ar[
        rr,
        dashed,
      ]
      \ar[
        dr,
        "{
          h_3^C
        }"'
      ]
      &&
      \mathrm{Cyc} 
      \,
      B^4 \mathbb{Z}
      \ar[
        dl,
        "{ h_3 }"
      ]
      \\
      &
      B^3 \mathbb{Z}
      \mathrlap{\,,}
    \end{tikzcd}
  \]
  is an extension of \cref{2CohomologyOfCyc1} by 
  \cref{4CohomologyOfCyc1}, split by $f_2^C \mapsto \phi^C$ \cref{ProjectingBraneChargesOutOfSmallCycCohomotopy}, hence is the middle group in
  \begin{equation}
  \label
  {GroupOfMapsFromCyc1S4ToCycB4Z}
    \begin{tikzcd}[
      column sep=15pt
    ]
      0
      \ar[r]
      &
      \mathbb{Z}\bracket\langle{
        \bracket({ f_2^C })^2
      }\rangle
      \ar[
        rr, 
        hook
      ]
      &&    
      \mathbb{Z}\bracket\langle{
        \bracket({ f_2^C })^2
      }\rangle
      \oplus
      \mathbb{Z}\bracket\langle{
        \phi^C
      }\rangle
      \ar[
        rr,
        "{  
          \phi^C 
            \,\mapsto\,
          f_2^C
        }",
        "{
          \scaledbracket({
            f_2^C
          })^2
            \,\mapsto\,
          0
        }"'
      ]
      &\phantom{--}&
      \mathbb{Z}\bracket\langle{
        f_2^C
      }\rangle
      \ar[
        r
      ]
      &
      0
      \mathrlap{\,.}
    \end{tikzcd}
  \end{equation}
  This reveals that---up to addition of the ambiguity term $\bracket({f_2^C})^2$ which disappears on $L_1 S^4$---these brane charges come in joint multiples of $\phi^C$, which by \cref{ProjectingBraneChargesOutOfSmallCycCohomotopy} means that $n \cdot f_2^C$ goes along with $n \cdot f_4^C$.  
\end{enumerate}
  This way we arrive at the main claim of \cref{TheComparisonMapAtEvenN}. 

\item
  From the last step, \cref{CycB4ZInProofOfComparisonMap} in \cref{ConstructingTheLift}, we immediately obtain the proof of most of the claim \cref{ComparisonMapLiftsPhiCThroughPhiK}:  
  If $\Phi^\ast$ sends $f_2^K \mapsto n \cdot f_2^C$ 
    \cref{TheChargeMultiplication},
    then \cref{GroupOfMapsFromCyc1S4ToCycB4Z} implies that 
    \begin{equation}
      \phi^K \circ \Phi
      -
      n \cdot \phi^C
      =
      \lambda
      \bracket({
        f_2^C
      })^2
      \mathrlap{\,,}
    \end{equation}
    for unique $\lambda \in \mathbb{Z}$.
    This is the claimed commutativity of \cref{ComparisonMapAsALift}.
  \qedhere
\end{enumerate}
\end{proof}

\begin{remark}[Explicit construction]
\label[remark]{ExplicitConstruction}
  The universal twisted K-theory class $\Phi$ on $\mathrm{Cyc}_1 S^4$ at $n = 2k$, whose existence is guaranteed by \cref{Comparison_map_to_twisted_K_theory}, admits a more direct explicit construction relying on the pushout characterization of $\mathrm{Cyc}_1 S^4$ given by \cref{Cyc1AsAPushout}. We briefly indicate this without proof:
  
  Denoting by
  \begin{enumerate}
    \item $L$ the complex line bundle classified by $f_2^C$,
    
    \item $\beta$ 
    the Bott generator on $S^4$ \cref{BottGeneratorOn4Sphere}

    (the tautological quaternionic line minus its complex rank),

    \item $G$ the transition line bundle of the gerbe $h_3^C$
    
    (with respect to chosen trivializations of $h_3$ on the legs of the pushout, which exist since the latter's integral 3-cohomology vanishes)
  \end{enumerate}
  we have the following data on the pushout components (for $n = 2k$ in \cref{TheChargeMultiplication}):
  \begin{equation}
  \label{TheExplicitKClasses}
  \adjustbox{}{
  \begin{tabular}{lll}
    On oriented great circles
    & 
    the K-theory class
    & 
    $A 
      := 
      \tfrac{n}{2}
      \bracket({
        L - L^{-1}
      })
    $,
    \\
    on degenerate circles
    &
    the K-theory class
    &
    $
      B 
        := 
      n(L - 1)
      +
      \tfrac{n}{2}
      \beta(L + 1)
    $,
    \\
    on great circles \& poles
    &
    the K-theory relation
    &
    $
      A
      = 
      G B  
      \mathrlap{\,.}
    $
  \end{tabular}
  }
  \end{equation}
  Here the product operation corresponds to tensoring of vector bundles, hence in the last line corresponds to the $B \mathrm{U}(1)$ action (by $G$) on $B \mathrm{U}$. Therefore these data constitute a cocone under our pushout \cref{TheHomotopyPushout} which by the universal property of the latter establishes a comparison map $\Phi$ (\cref{TheCocone}). This is the map guaranteed by \cref{Comparison_map_to_twisted_K_theory}.
\end{remark}

\begin{SCfigure}[.55][htb]
\caption{\label{TheCocone}%
  The cohomology operation from 
  small-cyclified 4-Cohomotopy to 
  twisted K-theory, guaranteed by \cref{Comparison_map_to_twisted_K_theory}, is the amalgamation of a pair of canonical K-theory classes and their relation \cref{TheExplicitKClasses} that exist naturally on the components of the pushout decomposition \cref{TheHomotopyPushout} of the small-cyclified loop space of the 4-sphere, as described in \cref{ExplicitConstruction}.
}
\centering
\adjustbox{rndfbox=4pt}{
$
    \begin{tikzcd}[
      row sep=40pt,
      column sep=30pt
    ]
     \left\{
      \substack{
        \textup{oriented great circles}
        \\
        \textup{around a given pole}
      }
      \right\}
      \ar[
        r,
        "{ 
          \text{forget}
        }"{yshift=-1pt},
        "{ 
          \text{pole}
        }"{swap,yshift=+1pt}
      ]
      \ar[
        d,
        "{ 
          \substack{
            \textup{remember}
            \\
            \textup{pole}
          }
        }"{description}
      ]
      &
     \left\{
      \substack{
        \textup{oriented great}
        \\
        \textup{circles in $S^4$}
      }
      \right\}
      \ar[
        dl,
        shorten=10pt,
        Rightarrow,
        "{ 
          \text{shrink circles} 
        }"{sloped},
        "{ 
          \text{ to poles } 
        }"{swap, sloped}
      ]
      \ar[
        ddr,
        bend left=21,
        "{ A }"{
          description,
          name=A,
          pos=.5
        }
      ]
      \ar[d]
      &[-20pt]
      \\
      \big\{
      \substack{
        \textup{points} 
        \\
        \textup{in $S^4$}
      }
      \big\}
      \times B S^1
      \ar[
        drr,
        bend right=15,
        "{ B }"{
          description,
          name=B,
          pos=.65
        }
      ]
      \ar[
        r,
        hook
      ]
      &
      \mathrm{Cyc}_1 S^4
      \ar[
        Rightarrow,
        from=A, to=B,
        crossing over,
        "{ \color{darkgreen} 
          G^{-1} 
        }"{
          description,
          pos=.21
        }
      ]
      \\[-10pt]
      & 
      &
      \ar[
        from=ul,
        dashed,
        crossing over,
        "{ \Phi }"{
          description,
          pos=.6
        }
      ]
      B \mathrm{U}
      \sslash 
      B \mathrm{U}(1)
    \end{tikzcd}
$
}
\end{SCfigure}

\begin{remark}[\scalebox{.975}{Cohomology operation from small-cyclified Cohomotopy to twisted K-theory}]
\begin{enumerate}
\item
We may read \cref{Comparison_map_to_twisted_K_theory} as establishing 
(for fixed NS5-brane charges $h_3 \in H^3\bracket({X^9; \mathbb{Z}})$) 
K-valued ``character map''-like cohomology operations \cref{TwistedCohomologyOperation} that naturally approximate small-cyclified 4-Cohomotopy of 9-manifolds by twisted rank-zero K-theory 
\cref{TwistedKTheoryAsTwistedNonabe}:
\begin{equation}
  \begin{tikzcd}
   H^1_{h_3}\bracket({
     X^9;
     \Omega\,
     \mathrm{Cyc}_1
     S^4
   })
   \ar[
     r,
     "{ \Phi_\ast }"
   ]
   &
   H^1_{h_3}\bracket({
     X^9;
     \Omega\,
     B\mathrm{U}
   })   
   \simeq
   \mathrm{KU}^{h_3}_{\mathrm{rk=0}}\bracket({
     X^9
   })
   \mathrlap{\,.}
   \end{tikzcd}
\end{equation}

\item
In view of \cref{TheChargeMultiplication}, this means that twisted K-theory is an abelian approximation to small-cyclified 4-Cohomotopy over IIA spacetimes, which:
\begin{enumerate}
\item necessarily forgets the NS1-brane charge 

(not quantizable in any abelian cohomology theory), 
\item accurately reflects NS5-brane charge,
\item 
\label{The2TorsionIssue}
reflects D4- and D6-brane charges up to a factor of 2.
\end{enumerate}

\end{enumerate}
\end{remark}

Next we reconsider this situation in the DMW subsector, where the comparison becomes finer still, since there the factor of 2 in \cref{Comparison_map_to_twisted_K_theory} \cref{TheChargeMultiplication} disappears (by \cref{ComparisonInDMWSector} \cref{TheShiftedLift} below).

\subsection
{The DMW Sector}
\label
{DMWSector}

By the \emph{DMW sector} we shall mean the situation of vanishing NS5- and D6-brane charge, as in \cref{TheObservationByDMW}. In terms of classifying spaces, passage to this sector means lifting classifying maps of charges through the homotopy fibers of the maps $\bracket({f_2^C, h_3^C})$ \cref{ProjectingBraneChargesOutOfSmallCycCohomotopy} and $\bracket({f_2^K,\, h_3^K})$ \cref{DiagramOfBraneChargesInKTheory} that extract these charges, respectively:
\begin{equation}
\label
{HomotopyFibersForDMWSector}
  \begin{tikzcd}[column sep=large]
    \widetilde{ L_1 S^4 }
    \ar[
      d,
      "{ \tilde j }"'
    ]
    \ar[r]
    \ar[
      dr,
      phantom,
      "{ \lrcorner }"{pos=.1}
    ]
    & 
    \ast
    \ar[d]
    \\
    \mathrm{Cyc}_1 S^4
    \ar[
      r,
      "{
        \scaledbracket({
          f_2^C
          ,\,
          h_3^C
        })
      }"'{swap}
    ]
    &
    B^2 \mathbb{Z}
    \times
    B^3 \mathbb{Z}
    \mathrlap{\,,}
  \end{tikzcd}
  \;\;\;\;\;
  \begin{tikzcd}[column sep=large]
    B \mathrm{SU}
    \ar[d]
    \ar[r]
    \ar[
      dr,
      phantom,
      "{ \lrcorner }"{pos=.1}
    ]
    & 
    \ast
    \ar[d]
    \\
    B \mathrm{U}
    \sslash
    B \mathrm{U}(1)
    \ar[
      r,
      "{
        \scaledbracket({
          f_2^K
          ,\,
          h_3^K
        })
      }"'{swap}
    ]
    &
    B^2 \mathbb{Z}
    \times
    B^3 \mathbb{Z}
    \mathrlap{\,.}
  \end{tikzcd}
\end{equation}

\begin{lemma}[Rational model of small-cyclified Cohomotopy in DMW sector]
\label[lemma]{RationalModelsOfTildeL1S4}
  \begin{enumerate}
  \item
  \label{RelativeMinimalModelForTildeL1S4}
  The \emph{relative} Whitehead $L_\infty$-algebra $\mathfrak{l}_{_{\mathrm{Cyc}_1 S^4}} \widetilde{ L_1 S^4}$ of $\tilde j$ \cref{HomotopyFibersForDMWSector} \textup{(the Koszul dual to its relative minimal Sullivan model \cite[Prop. 5.16]{FSS23-Char})} induces the fluxes and Gauss laws of \cref{The-full-IIA-EM-Gauss-laws} with global potentials $A_1$ and $B_2$ for $F_2$ and $H_3$ adjoined, respectively:
  \begin{equation}
    \label{ClosedRelativeForms}
    \Omega^1_{\mathrm{cl}}
    \bracket({
      X^9
      ;\,
      \mathfrak{l}_{_{\mathrm{Cyc}_1 S^4}} \widetilde{ L_1 S^4 }
    })
    \,\simeq\,
    \left\{
    \begin{aligned}
      A_1 
      & 
      \in
      \Omega^1_{\mathrm{dR}}
      \bracket({X^9})
      \\
      B_2 
      & 
      \in
      \Omega^2_{\mathrm{dR}}
      \bracket({X^9})
      \\
      \left({
        \begin{aligned}
        H_3, 
        H_7,
        \\
        F_2, 
        F_4,
        F_6,
        F_8
        \end{aligned}
      }\right)
      & \in
      \Omega^1_{\mathrm{cl}}
      \bracket({
        X^9
        ;\,
        \mathfrak{l}
        \mathrm{Cyc}_1 S^4
      })
    \end{aligned}
    \,\middle\vert\,
    \begin{aligned}
      \mathrm{d}\, A_1
      & = 
      F_2
      \\
      \mathrm{d}\, B_2 
      & =
      H_3
      \\
      \phantom{A}
      \\
      \phantom{A}
    \end{aligned}
    \right\}
    \mathrlap{.}
  \end{equation}

  \item 
  \label{AbsoluteMinimalModelForTildeL1S4}
  The absolute Whitehead $L_\infty$-algebra $\mathfrak{l} \widetilde{ L_1 S^4 }$ instead induces
  \begin{equation}
    \Omega^1_{\mathrm{cl}}\bracket({
      X^9
      ;\,
      \mathfrak{l} 
      \widetilde{ L_1 S^4 }
    })
    \simeq
    \left\{
      \bracket({
        \begin{aligned}
        H_3, 
        \widetilde H_7, 
        \\
        F_2, 
        \widetilde F_4, 
        \widetilde F_6,
        \widetilde F_8
        \end{aligned}
      })
      \in
      \Omega^1_{\mathrm{cl}}    
      \bracket({
        X^9
        ;\,
        \mathfrak{l}
        \mathrm{Cyc}_1 S^4
      })
      \,\middle\vert\,
      \begin{aligned}
        H_3 & = 0
        \\
        F_2 & = 0
      \end{aligned}
     \right\}
     \mathrlap{.}
  \end{equation}

  \item
  \label{RelativeToAbsoluteMinimalModel}
  Their weak equivalence $\inlinetikzcd{\mathfrak{l}_{\mathrm{Cyc}_1 S^4}\widetilde{ L_1 S^4 } \ar[r, "{\gamma}", "{ \sim }"'] \& \mathfrak{l} \widetilde{ L_1 S^4 } }$ induces the following transformation of closed $L_\infty$-algebra valued differential forms on $X^9$:
  \begin{equation}
  \label
  {WeakEquivalenceBetweenRelativeAndAbsoluteModel}
    \begin{tikzcd}[
      ampersand replacement=\&,
      row sep=-3pt, 
      column sep=10pt,
      /tikz/column 5/.append style={anchor=base west},
    ]
      \Omega^1_{\mathrm{cl}}
      \bracket({
        X^9
        ;\,
        \mathfrak{l}_{_{\mathrm{Cyc}_1 S^4}}
        \widetilde{ L_1 S^4 }
      })
      \ar[
        rr,
        "{ \gamma_\ast }"
      ]
      \&\&
      \Omega^1_{\mathrm{cl}}\bracket({
        X^9
        ;\,
        \mathfrak{l}\widetilde{ L_1 S^4 }
      })
      \\
      \left(
      \begin{array}{l}
        A_1
        \\
        B_2
        \\
        H_3
        \\
        H_7
        \\
        F_2
        \\
        F_{4 + 2k}
      \end{array}
      \right)
      \&\longmapsto\&
      \left(
      \begin{array}{ll}
        0 &
        \\
        0 &
        \\
        0 &
        \\
        \widetilde H_7
        & \defneq
        H_7 
          + 
        A_1 
          \wedge
        \widetilde F_6
        \\
        0
        \\
        \widetilde{F}_{4+2k}
        & \defneq
        \bracket({
          e^{-B_2}
          \wedge
          \sum_{r}
          F_{2r}
        })_{4+2k}
      \end{array}
      \right)
      \mathrlap{.}
    \end{tikzcd}
  \end{equation}
  \end{enumerate}
\end{lemma}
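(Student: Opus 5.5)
The plan is to work entirely in dg-algebraic rational homotopy theory, modelling the homotopy pullback \cref{HomotopyFibersForDMWSector} along $\tilde j$ by a relative Sullivan algebra. The starting point is the minimal Sullivan model of $\mathrm{Cyc}_1 S^4$, which is read off from \cref{The-full-IIA-EM-Gauss-laws} via \cref{CEGenerators}. It has generators $H_3, H_7, F_2, F_4, F_6, F_8$, with the differentials of \cref{TheFullGaussLaw}. The map $(f_2^C, h_3^C)$ to $B^2\mathbb{Z}\times B^3\mathbb{Z}$, whose model is $\mathbb{Q}[f_2,h_3]$ with zero differential, sends $f_2\mapsto F_2$ and $h_3\mapsto H_3$. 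This holds up to rational scale, which is fixed by \cref{BraneChargeFromCohomotopy} and \cref{ObtainingGaussLawsForCycS4}, because $F_2$ and $H_3$ are the images of the rational generators of $H^2$ and $H^3$.

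For \cref{RelativeMinimalModelForTildeL1S4}, I will use the standard fact that the homotopy fiber of a map into $B^2\mathbb{Z}\times B^3\mathbb{Z}$, i.e. the pullback of the path fibration, is modelled relatively over the model of $\mathrm{Cyc}_1 S^4$. One adjoins acyclic generators $A_1, B_2$ with $\mathrm{d}A_1 = F_2$ and $\mathrm{d}B_2 = H_3$, which are the relative Sullivan model of $\ast\to B^2\mathbb{Z}\times B^3\mathbb{Z}$ pushed out. Dualizing through \cref{CEGenerators} then yields \cref{ClosedRelativeForms} directly.

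For \cref{RelativeToAbsoluteMinimalModel} and \cref{AbsoluteMinimalModelForTildeL1S4}, I will minimalize this relative algebra. It is not minimal, because $\mathrm{d}A_1 = F_2$ and $\mathrm{d}B_2 = H_3$ are linear, so the contractible pairs $(A_1,F_2)$ and $(B_2,H_3)$ have to be split off. The concrete step is a change of generators that makes the remaining generators closed modulo the ideal generated by $F_2, H_3$. For the RR tower, $\widetilde F := e^{-B_2}\wedge\sum_r F_{2r}$ satisfies $\mathrm{d}\widetilde F = e^{-B_2}\bigl(-H_3\sum F + \mathrm{d}\sum F\bigr) = 0$ by the twisted Gauss laws $\mathrm{d}F_{2r+2} = H_3 F_{2r}$, up to the sign convention matching \cref{TheFullGaussLaw}. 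So $\widetilde F_{4+2k}$ equals $F_{4+2k}$ plus decomposables involving $B_2$, and its differential differs from the untwisted one only by terms in $F_2$ through $\widetilde F_{2} = F_2$.

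For $H_7$, I will compute $\mathrm{d}(H_7 + A_1\wedge \widetilde F_6)$. Using $\mathrm{d}\widetilde F_6 = 0$ on the quotient and rewriting $\tfrac12F_4^2 - F_2F_6$ in the tilde variables, this should give $\tfrac12 \widetilde F_4\wedge\widetilde F_4$ modulo the ideal. The $B_2$-dependent terms must cancel, which is the invariance of the Chern–Simons pairing under $e^{-B_2}$. I expect this bookkeeping, in particular the signs and the precise correction term making $\widetilde H_7$ well defined, to be the main obstacle. I would first verify it degree by degree on the finitely many monomials of degree $8$.

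Once every generator is rewritten this way, the algebra is isomorphic to $\Lambda(A_1,F_2)\otimes\Lambda(B_2,H_3)\otimes\Lambda(\widetilde H_7,\widetilde F_4,\widetilde F_6,\widetilde F_8)$. The first two factors are contractible. The quotient map $\gamma$, setting $A_1,B_2,F_2,H_3$ to zero, is therefore a quasi-isomorphism onto the last factor, which is the minimal model with the Gauss laws of the statement. Applying $\gamma$ to forms gives \cref{WeakEquivalenceBetweenRelativeAndAbsoluteModel}. As a consistency check, the rational homotopy of the result should agree with $\pi_\bullet(\widetilde{L_1S^4})\otimes\mathbb{Q}$ from \cref{HomotopyOfCyc1S4} after removing degrees $2$ and $3$.
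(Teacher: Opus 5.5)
Your route is the paper's. You adjoin $A_1,B_2$ with $\mathrm{d}A_1=F_2$, $\mathrm{d}B_2=H_3$ (the pushout of the path-fibration model, which is the ``standard argument'' behind (i)). You rotate the RR tower by $e^{-B_2}$ and correct $H_7$ by $A_1\wedge\widetilde F_6$. You then use invariance of $\tfrac12 F_4\wedge F_4-F_2\wedge F_6$ under $F\mapsto e^{-B_2}\wedge F$. The paper gets this invariance from the involution $\sigma(\alpha_{2k})=(-1)^k\alpha_{2k}$, which satisfies $\sigma(e^{B_2})=e^{-B_2}$; your direct check on degree-8 monomials works too.

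The gap is in how you derive (iii), and there are two linked problems. First, the identity $\mathrm{d}(H_7+A_1\wedge\widetilde F_6)=\tfrac12\widetilde F_4\wedge\widetilde F_4$ has to hold \emph{exactly} in the relative algebra, not modulo the ideal $(F_2,H_3)$. Modulo that ideal it is empty: $\mathrm{d}H_7\equiv\tfrac12F_4^2\equiv\tfrac12\widetilde F_4^2$ with no correction term at all. Moreover, every $B_2$-dependent term of $\tfrac12\widetilde F_4^2=\tfrac12F_4^2-B_2F_2F_4+\tfrac12B_2^2F_2^2$ already lies in $(F_2)$, so nothing is left to cancel. The exact computation is $\mathrm{d}(H_7+A_1\widetilde F_6)=\tfrac12F_4^2-F_2F_6+F_2\widetilde F_6-A_1\,\mathrm{d}\widetilde F_6=\tfrac12\widetilde F_4^2$. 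It uses that $\mathrm{d}\widetilde F_{2k}=0$ holds on the nose (your own $e^{-B_2}$ computation shows this, not merely ``on the quotient''). It also uses the exact invariance $\tfrac12F_4^2-F_2F_6=\tfrac12\widetilde F_4^2-\widetilde F_2\widetilde F_6$ with $\widetilde F_2=F_2$. Only this exact identity makes your tensor splitting an isomorphism of dg-algebras, and it is what forces the term $A_1\wedge\widetilde F_6$. For (ii) alone the congruence would suffice, since the quotient by the dg-ideal $(A_1,B_2,F_2,H_3)$ is a quasi-isomorphism anyway.

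Second, your $\gamma$ points the wrong way. The map $\gamma_*$ in (iii) is precomposition with the Chevalley--Eilenberg dual $\gamma^*\colon\mathrm{CE}(\mathfrak{l}\widetilde{L_1S^4})\to\mathrm{CE}(\mathfrak{l}_{\mathrm{Cyc}_1S^4}\widetilde{L_1S^4})$. The formulas in (iii) say precisely that $\gamma^*$ is the \emph{inclusion} of your third tensor factor: $\widetilde H_7\mapsto H_7+A_1\wedge\widetilde F_6$ and $\widetilde F_{2k}\mapsto(e^{-B_2}\wedge\sum_rF_{2r})_{2k}$. The quotient map killing $A_1,B_2,F_2,H_3$ goes the other way on CE-algebras. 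Precomposing with it sends absolute closed forms to relative ones with vanishing $A_1,B_2,F_2,H_3$, so it cannot produce (iii). Nor can you just discard those components of a relative closed form, since $(H_7,F_4,F_6,F_8)$ alone violates the absolute Gauss laws ($\mathrm{d}F_4=H_3\wedge F_2$). Take $\gamma^*$ to be the section instead. It is a dg-map by the exact identity above, and a quasi-isomorphism because the other two factors are contractible. With that change the argument is complete and coincides with the paper's.
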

\begin{proof}
\Cref{RelativeMinimalModelForTildeL1S4,AbsoluteMinimalModelForTildeL1S4} follow by standard arguments in rational homotopy theory.
\Cref{RelativeToAbsoluteMinimalModel}
is mainly a reincarnation of the classical observation in twisted cohomology (cf. \cite[(23) \& app.]{RohmWitten1985}), that 
\begin{equation}
  \begin{aligned}
    \mathrm{d}
    \bracket({
      e^{-B_2}
      \wedge
      \sum_r F_{2r}
    })
    & = 0
    \\
    \Leftrightarrow
    \;\;\;\;\;\;
    \mathrm{d}
    \bracket({
      e^{-B_2}
      \wedge
      \sum_r F_{2r}
    })_{2k}
    & = 0
    \;\;\;
    \forall_k
    \mathrlap{\,.}
  \end{aligned}
\end{equation}
The only extra subtlety here is the presence of $H_7$ with its non-linear Gauss law \cref{TheH7GaussLaw}. For this, first observe that $\mathrm{d} H_7$ is invariant in form under exchanging
$F_{2\bullet} \leftrightarrow \widetilde F_{2\bullet}$, where we include 
\begin{equation}
  \widetilde F_2 
    := 
  \bracket(
    {e^{-B_2} 
    \wedge 
    \sum_r F_{2r} 
  })_2 
    = 
  F_2
  \mathrlap{\,,}
\end{equation}
in that:
\begin{equation}
  \begin{aligned}
    \mathrm{d}\, H_7
    & =
    \tfrac{1}{2}
    F_4 \wedge F_4
    -
    F_2 \wedge F_6
    \\
    & =
    \tfrac{1}{2}
    \widetilde F_4 
      \wedge 
    \widetilde F_4
    -
    \widetilde F_2 
      \wedge 
    \widetilde F_6
    \mathrlap{\,.}
  \end{aligned}
\end{equation}
Namely consider the algebra automorphism $\sigma$ which on even degree generators $B_2$, $F_{2k}$ is $\sigma(\alpha_{2k}) := (-1)^{k} \alpha_{2k}$, then:
  \begin{align*}
    \tfrac{1}{2}
    F_4 \wedge F_4
    -
    F_2 \wedge F_6
    & =
    \tfrac{1}{2}\bracket({
      \sigma\bracket({
        \sum_r F_{2r}
      })
      \wedge
      \sum_r F_{2r}
    })_8
    \\
    & =
    \tfrac{1}{2}\bracket({
      \sigma\bracket({
        e^{B_2}
        \wedge
        \sum_r \widetilde F_{2r}
      })
      \wedge
      e^{B_2}
      \wedge
      \sum_{r} \widetilde F_{2r} 
    })_8
    \\
    & =
    \tfrac{1}{2}\bracket({
        e^{-B_2}
        \wedge
        \sigma\bracket({
          \sum_r \widetilde F_{2r}
        })
      \wedge
      e^{B_2} \sum_{r} \widetilde F_{2r} 
    })_8
    \\
    & =
    \tfrac{1}{2}\bracket({
        \sigma\bracket({
          \sum_r \widetilde F_{2r}
        })
      \wedge
      \sum_{r} \widetilde F_{2r} 
    })_8
    \\
    &
    =
    \tfrac{1}{2}
    \widetilde F_4 
    \wedge
    \widetilde F_4
    -
    \widetilde F_2
    \wedge
    \widetilde F_6
    \mathrlap{\,.}
  \end{align*}
  Therefore the differential on $\widetilde H_7$ in \cref{WeakEquivalenceBetweenRelativeAndAbsoluteModel} indeed has the required form
  \begin{align*}
      \mathrm{d}\,
      \widetilde H_7
      & \defneq
      \mathrm{d}\bracket({
        H_7 
          + 
        A_1 
          \wedge 
        \widetilde F_6
      })
      \\
      & =
      \tfrac{1}{2}
      \widetilde F_4
      \wedge
      \widetilde F_4
      -
      \smash{
      \underbrace{
      \bracket({
        \widetilde F_2
        -
        \mathrm{d}A_1
      })}_{0}
      \wedge \widetilde F_6
      }
      \\
      & =
      \tfrac{1}{2}
      \widetilde F_4
      \wedge
      \widetilde F_4
      \mathrlap{\,.}
      \qedhere
    \end{align*}
\end{proof}

In this DMW sector \cref{HomotopyFibersForDMWSector}, we have our second main result, elaborating here on \cref{Comparison_map_to_twisted_K_theory}:
\begin{theorem}[Comparison map in the DMW sector]
\label[theorem]{ComparisonInDMWSector}
\begin{enumerate}
\item 
\label
{LiftInDMWSector}
  In the DMW sector \cref{HomotopyFibersForDMWSector}, the analogues of the lifts \cref{TheComparisonMap}, namely dashed maps of the form
  \begin{equation}
  \label{LiftInTheDMWSector}
    \begin{tikzcd}[row sep=small]
      \widetilde{ L_1 S^4 }
      \ar[
        rr,
        dashed,
        "{
          \widetilde \Phi
        }"
      ]
      \ar[
        dr,
        "{
          n \cdot 
          \widetilde{\phi^C}
        }"'
      ]
      &&
      B \mathrm{SU}
      \ar[
        dl,
        "{ \widetilde{\phi^K} }"
      ]
      \\
      &
      B^4 \mathbb{Z}
      \mathrlap{\,,}
    \end{tikzcd}
  \end{equation}
  exist for all $n \in \mathbb{N}$.

\item
  A canonical such lift at $n = 1$ is given by the Bott generator $\beta \in \widetilde{\mathrm{KU}}^0\bracket({S^4})$:
  \begin{equation}
  \label
  {BottGeneratorOn4Sphere}
    \begin{tikzcd}
      \widetilde{L_1 S^4}
      \ar[
        r,
        "{ \mathrm{ev}_0 }"
      ]
      &
      S^4 
      \ar[
        r,
        "{ \beta }"
      ]
      &
      B \mathrm{SU}
      \mathrlap{\,,}
    \end{tikzcd}
  \end{equation}
  whose Chern character is
  concentrated in degree 4:
  \begin{equation}
    \mathrm{ch}\bracket({
      \mathrm{ev}_0^\ast \beta
    })
    \defneq
    \widetilde F_4^C
    \;
    \in
    H^4\bracket({
      \widetilde
      { L_1 S^4 }
      ;\,
      \mathbb{Q}
    })
    \mathrlap{\,.}
  \end{equation}

  \item
  \label{TheShiftedLift}
  For every lift $\Phi$ at $n=2$ in \cref{TheComparisonMap}, its shift by the Bott generator \cref{BottGeneratorOn4Sphere},
  \begin{equation}
  \label
  {ShiftedLifts}
    \widetilde{\Phi}
    \,:=\,
    \tilde j^\ast \Phi
    \,-\,
    \mathrm{ev}_0^\ast \beta
    \mathrlap{\,,}
  \end{equation}
  is a lift at $n = 1$ in \cref{LiftInTheDMWSector}.

  \item 
  \label{UniversalChernChInDMWSector}
  The universal Chern character of such a lift in \cref{ShiftedLifts} is up to degree 10 of the form
  \begin{equation}
    \label{ChernCharacterOfTildePhi}
    \mathrm{ch}_{\leq 10}
    \bracket({
      \widetilde \Phi
    })
    \,=\,
      \widetilde F_4^C
    \,+\,
    2 \widetilde F_6^C
    \,+\,
    2 \widetilde F_8^C
    \,+\,
    2 \widetilde F_{10}^C
    +
    \kappa 
    \widetilde F_4^C \wedge \widetilde F_6^C
    \mathrlap{\,,}
    \;\;\;
    \left\{
    \begin{aligned}
      \widetilde F_{2k}^C
      & \in
      H^{2k}\bracket({
        \widetilde{L_1 S^4}
        ;\,
        \mathbb{Q}
      })
      \\
      \kappa 
      & 
      \in
      \mathbb{Q}
    \mathrlap{\,,}
    \end{aligned}
    \right.
  \end{equation}
  where each summand in the relevant degrees $\leq 8$ is integral and is as such an indivisible generator of the lattice $H^{2k}_{\mathrm{int}}(-;\mathbb{Q})$ of rational cohomology classes that have integral preimages:
  \begin{equation}
  \label
  {IntegralityOfChernCharacterComponents}
    \begin{aligned}
      H^{2k}
      _{\mathrm{int}}\bracket({
        \widetilde{L_1 S^4};
        \mathbb{Q}
      })
      \simeq
      \begin{cases}
        \mathbb{Z}
        \cdot
        \phantom{1} \widetilde F_4^C
        & \text{ if $2k = 4$ }
        \\
        \mathbb{Z}
        \cdot
        2 \widetilde F_6^C
        & \text{ if $2k = 6$ }
        \\
        \mathbb{Z}
        \cdot
        2 \widetilde F_8^C
        & \text{ if $2k = 8$}
        \mathrlap{\,.}
      \end{cases}
    \end{aligned}
  \end{equation}
  \item 
  \label{PhiNotDivisible}
  In particular, the lift $\Phi$ \cref{TheComparisonMap} does not become divisible by 2 in the DMW sector; instead it is the shift \cref{BottGeneratorOn4Sphere} that normalizes it.
  \end{enumerate}
\end{theorem}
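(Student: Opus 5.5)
The plan is to derive every item from \cref{Comparison_map_to_twisted_K_theory}, restricted along $\tilde j$ \cref{HomotopyFibersForDMWSector}, together with the rational model of \cref{RationalModelsOfTildeL1S4}.

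First, the basic structure. By \cref{BraneChargeFromCohomotopy}, $\widetilde{L_1 S^4}$ is the homotopy fiber of $(f_2^C,h_3^C)$ over $\mathrm{Cyc}_1 S^4$, hence also the homotopy fiber of $h_3^C$ over $L_1 S^4$. So $f_4^C$ restricts to a class $\widetilde{\phi^C}$, and $\mathrm{ev}_0$ is available. Similarly $\widetilde{\phi^K}$ is $c_2$ on $B\mathrm{SU}$. For \cref{BottGeneratorOn4Sphere} I would use two facts: $c_2(\beta)$ generates $H^4(S^4;\mathbb Z)$, and $f_4^C$ is pulled back from $S^4$ along $\mathrm{ev}_0$ \cref{ProjectingBraneChargesOutOfSmallCycCohomotopy}. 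With the orientation fixed, these give $\widetilde{\phi^K}\circ\beta\circ\mathrm{ev}_0=\widetilde{\phi^C}$. Since $\mathrm{ch}(\beta)$ is concentrated in degree $4$ on $S^4$, pulling it back gives $\widetilde F_4^C$.

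Item \cref{TheShiftedLift} then follows from additivity of $c_2$ on $B\mathrm{SU}$, because $c_1=0$ there and so $c_2$ is primitive. By \cref{ComparisonMapAsALift} at $n=2$, the class $\widetilde{\phi^K}\circ\tilde j^\ast\Phi$ equals $2\widetilde{\phi^C}+\lambda\,\tilde j^\ast(f_2^C)^2=2\widetilde{\phi^C}$, since $f_2^C$ vanishes on $\widetilde{L_1S^4}$. Subtracting $\mathrm{ev}_0^\ast\beta$ leaves $\widetilde{\phi^C}$. Multiples $n\cdot\widetilde\Phi$ then give lifts for every $n$, which is item \cref{LiftInDMWSector}. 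Note that in the DMW sector the twist is trivial, so K-classes add without obstruction.

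Next, the Chern character, which is where the real work lies. I would compute $H^{\le 10}(\widetilde{L_1S^4};\mathbb Z)$ integrally, using the Serre spectral sequence of $\Omega_1S^4\simeq S^3\vee S^6\to \widetilde{L_1S^4}\to S^4$ twisted by the $h_3$-fiber. The key input is $\mathrm d_4(\iota_6)=\pm2\iota_3\iota_4$ from \cref{d4OfIota6}, which should make the integral generators in degrees $6$ and $8$ equal to twice the rational classes $\widetilde F_6^C$ and $\widetilde F_8^C$ of the minimal model \cref{RationalModelsOfTildeL1S4}. This gives \cref{IntegralityOfChernCharacterComponents}. To pin down the components of $\mathrm{ch}(\widetilde\Phi)$, I would use the explicit cocone of \cref{ExplicitConstruction} at $n=2$ and compute $\mathrm{ch}$ of $A$ and $B$, restricted where $L$ is trivial. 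The degree-$6$ and degree-$8$ components must be nonzero integral multiples of the lattice generators. Indivisibility would come from the AHSS argument of \cref{Comparison_map_to_twisted_K_theory}: the relevant classes survive exactly with the factor $2$. The cross term $\kappa\,\widetilde F_4^C\wedge\widetilde F_6^C$ is left undetermined, so only a rational coefficient is claimed there.

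Finally, item \cref{PhiNotDivisible}. From \cref{ShiftedLifts}, $\mathrm{ch}(\tilde j^\ast\Phi)=2\widetilde F_4^C+2\widetilde F_6^C+\dots$. Divisibility by $2$ would require $\widetilde F_6^C$ to be integral, which contradicts \cref{IntegralityOfChernCharacterComponents}.

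The main obstacle is the integral cohomology computation combined with identifying the Chern character components exactly, rather than only up to sign or scale. I would first try to fix these by restricting to explicit test spheres $\gamma_6,\gamma_8$ from \cref{HomotopyOfCyc1S4Table}.
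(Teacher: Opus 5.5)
Your handling of items (i)--(iii) is fine, and more explicit than the paper's outline. Restricting $\phi^K\circ\Phi \simeq 2\phi^C + \lambda\,(f_2^C)^2$ along $\tilde j$ kills the $\lambda$-term. Additivity of $c_2$ on $B\mathrm{SU}$ then shows that $\tilde j^\ast\Phi - \mathrm{ev}_0^\ast\beta$ lifts $\widetilde{\phi^C}$, and multiples give every $n$.

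The genuine gap is item (iv), on which (v) rests. The statement concerns \emph{every} lift $\Phi$ at $n=2$, but the cocone of \cref{ExplicitConstruction} describes one particular $\Phi$, and the paper only indicates it, without proof. Computing $\mathrm{ch}$ of a class glued over the pushout via the gerbe transition $G$ would also be a Mayer--Vietoris computation you have not set up. Pairing with the test spheres $\gamma_6,\gamma_8$ only detects spherical classes.

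What is missing is the rational rigidity argument the paper uses. $\Phi^\ast$ is a morphism of minimal Sullivan models with $\Phi^\ast H_3^K = H_3^C$ and $\Phi^\ast F_2^K = 2F_2^C$. Since $\mathrm{d}F^K_{2k} = H_3^K\wedge F^K_{2k-2}$, compatibility with the differentials of \cref{TheFullGaussLaw} forces inductively $\Phi^\ast F^K_{2k} = 2F^C_{2k} + \mathcal{O}(F_2^C) + (\text{closed terms})$. In degrees $\leq 8$ the only closed terms outside the ideal of $F_2^C$ are multiples of the exact $\mathrm{d}H_7^C$. Degree $10$ admits the extra $\kappa$-line coming from the nonlinear $H_7$ Gauss law. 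So it is the twist that propagates the multiplicity $2$ from $F_2$ to all $F_{2k}$, for all lifts at once. Pulling back along $\tilde j$ (which kills $F_2^C,H_3^C$ and passes to the tilded generators of \cref{RationalModelsOfTildeL1S4}) and subtracting $\mathrm{ch}(\mathrm{ev}_0^\ast\beta)=\widetilde F_4^C$ gives \cref{ChernCharacterOfTildePhi}.

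Your integrality and indivisibility steps also lack a mechanism. Chern character components are not integral in general; already $\mathrm{ch}_3 = \tfrac12 c_3$ on $B\mathrm{SU}$. The twisted AHSS on $\mathrm{Cyc}_1 S^4$ from the proof of \cref{Comparison_map_to_twisted_K_theory} says nothing about $\widetilde{L_1S^4}$. The paper's device has two parts:
\begin{enumerate}
\item On $\widetilde{L_1S^4}$, it compares the K-theory AHSS via $\mathrm{ch}$ with that of periodic rational cohomology, concluding that the lowest nonvanishing Chern character component (in degree $\leq 10$) is integral. Applied to $\tilde j^\ast\Phi - 2\,\mathrm{ev}_0^\ast\beta$, whose Chern character starts with $2\widetilde F_6^C$, this gives integrality of $2\widetilde F_6^C$.
\item The remaining integrality and generator claims come from the Serre spectral sequence of the $B^2\mathbb{Z}$-fibration $\widetilde{L_1S^4}\to L_1S^4$.
\end{enumerate}
Note that the fiber of your map $\widetilde{L_1S^4}\to S^4$ is not $\Omega_1 S^4$ but the homotopy fiber of $h_3^C$ on it. Your instinct that $\mathrm{d}_4(\iota_6)=\pm 2\iota_3\iota_4$ is the source of the factor $2$ is right, as the Kronecker pairings in \cref{HomotopyOfTheFiber} show.

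The same lowest-component principle is needed in item (v). If $\tilde j^\ast\Phi = 2\Psi$, then $\Psi - \mathrm{ev}_0^\ast\beta$ has Chern character starting with $\widetilde F_6^C$, and only that principle forces $\widetilde F_6^C$ to be integral. Your sentence ``divisibility would require $\widetilde F_6^C$ to be integral'' silently assumes it.
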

    \begin{proof}[Proof outline]
    We explain the main steps modulo spectral sequence runs.
    \begin{enumerate}
    
    \item
      Consider the generators $H^K_3$ and  $F_{2\bullet}^K$ of the rational minimal model of $B \mathrm{U} \sslash B \mathrm{U}(1)$ (cf. \parencites[\S~2]{FreedHopkinsTeleman2002}[Ex. 6.6]{FSS23-Char}), satisfying
      \begin{equation}
        \label{SullivanModelOfTwistedK}
        \mathrm{d}\, F_{2k}^K
        =
        H_3^K \wedge F_{2(k-1)}^K
        \mathrlap{\,.}
      \end{equation}
      The universal twisted Chern character on this space is $\mathrm{ch} \defneq \sum_{k \in \mathbb{N}} F_{2k}^K$, with $F_2^K$ 
      normalized to the rational image of the first Chern class $c_1$. 
      
      Therefore, by the multiplicities \cref{TheChargeMultiplication} from \cref{Comparison_map_to_twisted_K_theory}, the twisted Chern character of $\Phi$ starts out as
      \begin{equation}
        \label{RationalChernChOfPhi}
        \begin{aligned}
        \mathrm{ch}\bracket({\Phi})
        & =
        \Phi^\ast\bracket({
          F_2^K 
          + 
          F_4^K
          + 
          \cdots
        })
        \\
        & =
        \Phi^\ast\bracket({
          F_2^K
        })    
          + 
          \cdots
        \\
        & =
        2 F_2^C
        +  
        \cdots
        \\[+3pt]
        \Phi^\ast H_3^K 
        & = H_3^C
        \mathrlap{\,,}
        \end{aligned}
      \end{equation}
      where the ellipses denote higher degree terms. 
    
      Since pullback $\Phi^\ast$ respects wedge products and the differential, this gives
      \begin{equation}
        \begin{alignedat}{2}
          \mathrm{d}\, 
          \bracket({
            \Phi^\ast \, F_4^K 
          })
          & = 
          \Phi^\ast \mathrm{d}\, F_4^K
          \\
          & 
          =
          \Phi^\ast\bracket({
            H_3^K \wedge F_2^K
          })
          &\;\;&
          \text{\small by \cref{SullivanModelOfTwistedK}}
          \\
          & =
          \Phi^\ast\bracket({
            H_3^K
          })
          \wedge
          \Phi^\ast\bracket({
            F_2^K
          })
          \\
          & 
          =
          2 \, H_3^C \wedge F_2^C
          &&
          \text{\small by \cref{RationalChernChOfPhi}}.
        \end{alignedat}
      \end{equation}
      But the only elements in the minimal model \cref{TheFullGaussLaw} of $\mathrm{Cyc}_1 S^4$ with this property are
      \begin{equation}
        \Phi^\ast F_4^K
        =
        2 F_4^C
        +
        \lambda' F_2^C \wedge F_2^C
        \,,
        \;\;\;\;
        \lambda' \in \mathbb{Q}
        \mathrlap{\,.}
      \end{equation}
      Iterating this kind of argument in increasing degrees gives:
      \begin{subequations}
        \begin{align}
          \Phi^\ast F_4^K
          & = 
          2 F_4^C
          + 
          \mathcal{O}\bracket({F_2^C})
          \\
          \Phi^\ast F_6^K
          & = 
          2 F_6^C
          + 
          \mathcal{O}\bracket({F_2^C})
          \\
          \Phi^\ast F_8^K
          & = 
          2 F_8^C
          + 
          \mathcal{O}\bracket({F_2^C})
          +
          \nu \, 
          \mathrm{d} 
          H_7^C
          \\
          \label
          {PullbackOfF10AlongPhi}
          \Phi^\ast F_{10}^K
          & = 
          2 F_{10}^C
          + 
          \mathcal{O}\bracket({
            F_2^C,
            H_3^C
          })
          +
          \kappa 
          \bracket({
            F_4^C \wedge F_6^C
            +
            2
            H_3^C \wedge H_7^C
            -
            3 F_2^C \wedge F_8^C
          })
          \mathrlap{\,,}
        \end{align}
      \end{subequations}
      where $\mathcal{O}({\cdots})$ denotes the ideal generated by the listed generators,
      where $\nu \in \mathbb{Q}$,
      and where $\kappa \in \mathbb{Q}$ in \cref{PullbackOfF10AlongPhi} is a point in the line of those closed elements in the minimal model that do not vanish with $F_2^C$, which is possible due to the nonlinear differential of $H_7$, cf. \cref{TheH7GaussLaw,The-full-IIA-EM-Gauss-laws}.
      
      Since $F_2^C$ and $H_3^C$ do vanish upon pullback $\tilde j^\ast$ \cref{HomotopyFibersForDMWSector} to the DMW sector, 
      we conclude that:
      \begin{equation}
        \label{ChernCharacterOfJAstPhi}
        \begin{aligned}
          \mathrm{ch}_{\leq 10}\bracket({
            \tilde j^\ast \Phi
          })
          &
          =
            2 \widetilde F_4^C 
          + 2 \widetilde F_6^C 
          + 2 \widetilde F_8^C  
          + \bracket({
            2 \widetilde F_{10}^C
            + 
            \kappa 
            \,
            \widetilde F_4^C \wedge \widetilde F_6^C
          })
          \\
          &
          \phantom{{}={}}
          +
          \mathrm{d}\bracket({
            \nu
            \widetilde H_7^C
          })
          \mathrlap{\,,}
        \end{aligned}
      \end{equation}
      where the tilded generators 
      appear via the transformation \cref{WeakEquivalenceBetweenRelativeAndAbsoluteModel} in \cref{RelativeToAbsoluteMinimalModel} of \cref{RationalModelsOfTildeL1S4}. The last summand in \cref{ChernCharacterOfJAstPhi} is exact and has no effect on the following analysis of cohomology classes.
    
    \item 
      The Bott generator \cref{BottGeneratorOn4Sphere} has Chern character concentrated in degree 4 of the form: 
      \begin{equation}
        \mathrm{ch}\bracket({
          \mathrm{ev}_0^\ast
          \beta
        })
        =
        \widetilde F_4^C
        \mathrlap{\,,}
      \end{equation}
      since it is pulled back from $S^4$ and indivisible there. With \cref{ChernCharacterOfJAstPhi} this implies the claim \cref{ChernCharacterOfTildePhi} for $\mathrm{ch}\bracket({ \tilde j^\ast \Phi - \mathrm{ev}_0^\ast \beta })$. Analogously it implies that
      \begin{equation}
        \label{ChernCharOfJastPhiMinusTwoBeta}
        \mathrm{ch}\bracket({
          \tilde j^\ast \Phi 
            - 
          2 \mathrm{ev}_0^\ast \beta
        })
        = 
        2 \widetilde F_6^C + \cdots
        \mathrlap{\,.}
      \end{equation}
    
     \item 
     \label{LeadingCherCharComponentIsIntegral}
     The Chern character induces a morphism of Atiyah--Hirzebruch spectral sequences over $\widetilde {L_1 S^4}$, from K-theory to even-periodic rational cohomology. 
     Inspection of this morphism reveals that here the lowest-degree non-vanishing component of a Chern character is an \emph{integral} form, provided that component has degree at most 10. With \cref{ChernCharOfJastPhiMinusTwoBeta} this implies that $2 \widetilde F_6$ is integral. 

     \item
     Inspection of the Serre spectral sequence of $\inlinetikzcd{ \widetilde{L_1 S^4} \ar[r] \& L_1 S^4 }$ shows that $2 \widetilde F_8$ is also integral and that all three integral components thus obtained are in fact integral generators, hence establishing the claim \cref{IntegralityOfChernCharacterComponents}.

     \item 
     Finally, with $2 \widetilde F_6$ thus established to be integrally indivisible, we see from \cref{ChernCharacterOfJAstPhi} with \cref{LeadingCherCharComponentIsIntegral} that $\tilde j^\ast \Phi - 2\mathrm{ev}_0^\ast \beta$ is not divisible, hence that $\tilde j^\ast \Phi$ is not, and consequently neither is $\Phi$. This proves the final claim \cref{PhiNotDivisible}.
     \qedhere
    \end{enumerate}  
\end{proof}
    
\begin{remark}
\label[remark]
{PhysicalImpactOfDMWSectorResult}
Some physical impact of \cref{ComparisonInDMWSector}:
\begin{enumerate}

\item
The shifted lifts in \cref{ShiftedLifts} refine the lifts considered by DMW  \cref{TheDMWLift} to a  \emph{natural construction} (on those M-brane charges that appear in Cohomotopy), both in the colloquial sense and in the technical sense: $\widetilde \Phi$ is a cohomology operation \cref{NonabelianCohomologyOperation} from small-cyclified Cohomotopy to K-theory---here considered on topological charges, but after passage to \emph{differential} nonabelian cohomology \cite{SS26-HigherGauge} also on actual IR-completed fields.

\item
\label
{OnIntegralityOfChernChComponents}
Equation
\cref{IntegralityOfChernCharacterComponents} entails that K-theory classes approximating small-cyclified 4-Cohomotopy classes on globally hyperbolic 10D spacetimes are guaranteed to have integral Chern character components.%
\footnote{%
An analogous observation was made already in \cite{BuSS2021}, there for the case of fractional D-branes with charges in \emph{equivariant} K-theory of an orbifold singularity: Lifting these to equivariant Cohomotopy forces their equivariant Chern characters to have integral components.}
This may address a worry voiced by early authors, that non-integral Chern character components are a conceptual problem for RR-flux quantization in K-theory, cf. \parencites[(1.1), (2.19)]{MooreWitten2000}[\S~6]{BachasDouglasSchweigert2000}.

\item
The Bott generator \cref{BottGeneratorOn4Sphere} may be regarded as the K-theoretic unit measure of M5-brane charge already up in 11D (cf. \parencites[\S~2.3]{SS23-Mf}[\S~2.2]{SS26-Orb}). Hence the comparison map \cref{ShiftedLifts} measures K-theoretic D-brane charge relative to this backdrop.

\item 
The indivisible integrality of the 10D flux $2 \widetilde F_6$ in \cref{ChernCharacterOfTildePhi}, with its factor of 2, is reminiscent of (and probably equivalent to, under dimensional reduction) the indivisible integrality of the 11D \emph{Page charge} flux $2\bracket({ G_7 - \tfrac{1}{2} C_3 \wedge G_4 })$ found in \cite[Thm. 4.8]{FSS21-Hopf} with the ``same'' factor of 2, again under \emph{Hypothesis H}. For a possible physical interpretation cf. \cite[(3)]{FSS21-Hopf}.

\end{enumerate}

\end{remark}

Finally, we may characterize more explicitly how close the comparison maps \cref{TheComparisonMap,ShiftedLifts} are to equivalences:
\begin{corollary}[Homotopy fiber of the comparison map]
\label[corollary]{HomotopyOfTheFiber}
\begin{enumerate}
  \item
  The homotopy fiber of 
  $\widetilde \Phi$ \cref{ShiftedLifts} followed by 10-truncation,
  \begin{equation}
    \label
    {TildePhiFollowedBy10Truncation}
    \begin{tikzcd}
    \widetilde{L_1 S^4} 
      \ar[
        r, 
        "{
          \widetilde \Phi
        }"{description}
      ] 
      \ar[
        rr,
        uphordown,
        "{
          \widetilde \Phi_{\leq 10}
        }"{description}
      ]
      &
      B \mathrm{SU}
      \ar[
        r,
        ->>
      ]
      &
      \truncation{10}{B \mathrm{SU}}
    \end{tikzcd}
  \end{equation}
  has low homotopy groups 
  corresponding exactly to the NS1-brane charge and all the torsion brane charges in \cref{HomotopyOfCyc1S4Table}:
  \medskip 
  \begin{equation}
  \label
  {LowHomotopyOfFiberOfTildePhi}
    \mathllap{%
    \forall_{k \leq 9}
    \;\;\;\;
    }
    \pi_{k}
    \bracket({
      \mathrm{fib}\bracket({
        \widetilde \Phi_{\leq 10}
      })
    })
    \simeq 
    \overbrace{
    \mathrm{Tor}\bracket({
      \pi_{k}\bracket({
        \mathrm{Cyc}_1 S^4
      })
    })
    }^{ \textup{torsion branes} }
    \oplus
    \begin{cases}
     \smash{%
      \overbrace{%
      \mathbb{Z}\langle \gamma_7 \rangle%
      }^{\textup{NS1}}%
      }
      &
      \textup{ if } k = 7
      \\
      0 &
      \textup{ otherwise. }
    \end{cases}
  \end{equation}
  \item
  The homotopy fiber of 
  $\inlinetikzcd{ 
    \mathrm{Cyc}_1 S^4 
      \ar[
        rr, 
        "{
          \Phi_{ \leq 10 }
        }"
      ] 
      \&\& 
      \truncation
        {10}
        { B \mathrm{U} } 
      \sslash 
      B \mathrm{U}(1)
  }$ \cref{TheComparisonMap} for $n = 2$
  has low homotopy groups equal to those of \cref{LowHomotopyOfFiberOfTildePhi} plus two copies of $\mathbb{Z}_{/2}$ reflecting the multiplication \cref{TheChargeMultiplication}:
  \begin{equation}
  \label{LowHomotopyOfFiberOfPhi}
    \mathllap{%
    \forall_{k \leq 9}
    \;\;\;\;
    }
    \pi_k\bracket({
      \mathrm{fib}\bracket({
        \Phi_{ \leq 10 }
      })
    })
    \simeq
    \pi_k\bracket({
      \mathrm{fib}\bracket({
        \widetilde \Phi_{ \leq 10 }
      })
    })
    \oplus
    \begin{cases}
      \mathbb{Z}_{/2}
      &
      \textup{ if }
      k \in \{1,3\}
      \\
    0
    &
    \textup{ otherwise. }
    \end{cases}
  \end{equation}
\end{enumerate}
In particular, both maps induce isomorphisms on rational homotopy groups in degrees $\leq 10$ away from the degree 7 of the NS1-brane charge \textup{(cf. \cref{RationalHomotopyGroups})}.
\end{corollary}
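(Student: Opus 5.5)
The plan is to compute the homotopy groups of both fibers from the long exact sequence of homotopy groups of the fibration, in degrees $\leq 9$. Both sequences have the form
\[
  \cdots \to \pi_{k+1}\bracket({\truncation{10}{B\mathrm{SU}}}) \to \pi_k\bracket({\mathrm{fib}}) \to \pi_k\bracket({\widetilde{L_1 S^4}}) \to \pi_k\bracket({\truncation{10}{B\mathrm{SU}}}) \to \cdots
\]
For the fiber of $\widetilde\Phi_{\leq 10}$, the homotopy of $\truncation{10}{B\mathrm{SU}}$ is $\mathbb{Z}$ in degrees $4,6,8,10$ and zero otherwise, by Bott periodicity. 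The homotopy of $\widetilde{L_1 S^4}$ follows from \cref{HomotopyOfCyc1S4}. Indeed, $\widetilde{L_1 S^4}$ is the homotopy fiber of $\bracket({f_2^C,h_3^C})$. That map is an isomorphism on $\pi_2$ (generated by $\gamma_2$) and on $\pi_3$ (generated by $\gamma_3$), since these classes are detected by $f_2^C$ and $h_3^C$. Hence, in degrees $\geq 4$,
\[
  \pi_k\bracket({\widetilde{L_1 S^4}}) \simeq \pi_k\bracket({\mathrm{Cyc}_1 S^4}),
\]
while in degrees $\leq 3$ these groups vanish. One caveat: the $\mathbb{Z}_{/12}$-type summands of the form $\gamma_3\nu'$ come from $\pi_k(S^3)$ with $k\geq 4$, and they survive intact because $\pi_{k}(B^2\mathbb{Z}\times B^3\mathbb{Z})=0$ for $k\geq 4$.

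The key step is to determine the map $\widetilde\Phi_*$ on the free summands $\gamma_4,\gamma_6,\gamma_8$. On torsion it is automatically zero, since the target is torsion-free. On $\gamma_7$ it is zero for degree reasons, since $\pi_7(B\mathrm{SU})=0$. For the free summands I would use the Chern character computation \cref{ChernCharacterOfTildePhi} together with the integrality statement \cref{IntegralityOfChernCharacterComponents}. The Chern character of the $k$-th Bott generator of $\pi_{2k}(B\mathrm{SU})$ is $(\pm)(k-1)!$ times the orientation class. Composing with $\gamma_{2k}$, the pairing of $\mathrm{ch}_{2k}(\widetilde\Phi)$ with $\gamma_{2k}$ must then be compared with the pairing of the generators $\widetilde F_4^C$, $2\widetilde F_6^C$, $2\widetilde F_8^C$ of the integral lattice. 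This step needs the Hurewicz images of $\gamma_{2k}$ in $H_{2k}$; I would get them from the Serre spectral sequence and Hilton--Milnor description ($\Omega_1S^4\simeq S^3\vee S^6$, with $\gamma_8$ the Whitehead product summand), and check that $\widetilde\Phi_*$ sends each $\gamma_{2k}$, $k=2,3,4$, to a generator.

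I expect this normalization to be the main obstacle. The factorials $1!,2!,3!$ must exactly cancel against the Hurewicz divisibility of $\gamma_6$ and $\gamma_8$, reflected in the factors of $2$ in $2\widetilde F_6^C$ and $2\widetilde F_8^C$. If that cancellation holds, $\widetilde\Phi_*$ is an isomorphism on the free parts in degrees $4,6,8$. The long exact sequence then splits into short exact sequences. Each torsion group injects into $\pi_k(\mathrm{fib})$, $\mathbb{Z}\langle\gamma_7\rangle$ contributes in degree $7$, and the cokernels vanish in degrees $\leq 9$. The surjectivity onto $\pi_{10}$ is not needed for $k\leq 9$, only the kernel in degree $10$, which enters $\pi_9$; here I would argue that $\pi_{10}$ of $\truncation{10}{B\mathrm{SU}}$ maps to $\pi_9(\mathrm{fib})$ with image hit by nothing extra, which needs a check on $\pi_{10}(\widetilde{L_1S^4})\otimes\mathbb{Q}$ via \cref{RationalHomotopyGroups}. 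Extensions split because the torsion and free pieces sit in different degrees or are retracts via $\gamma$-generators. This gives \cref{LowHomotopyOfFiberOfTildePhi}.

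For $\Phi_{\leq 10}$ the same sequence is used with $\mathrm{Cyc}_1 S^4$ and $\truncation{10}{B\mathrm{U}}\sslash B\mathrm{U}(1)$, whose homotopy is $\mathbb{Z}$ in degrees $2,3,4,6,8,10$. By \cref{TheChargeMultiplication}, $\Phi_*$ is the identity on $\pi_3$ and multiplication by $2$ on $\pi_2$. On $\pi_4$ it is multiplication by $2$, via $f_4$. On $\pi_6$ and $\pi_8$ it is an isomorphism: by \cref{ShiftedLifts}, on these groups $\Phi$ and $\widetilde\Phi$ differ by $\mathrm{ev}_0^*\beta$, which factors through $S^4$ and so acts trivially on $\gamma_6,\gamma_8$. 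The resulting cokernels $\mathbb{Z}_{/2}$ in degrees $2$ and $4$ shift into $\pi_1$ and $\pi_3$ of the fiber, giving \cref{LowHomotopyOfFiberOfPhi}. The final rational statement is then immediate, since every remaining discrepancy is torsion except $\gamma_7$.
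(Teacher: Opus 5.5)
Your route is the paper's route: the long exact sequence of the fibration, the identification of $\widetilde{L_1 S^4}$ with the 3-connected cover of $\mathrm{Cyc}_1 S^4$, and pairing the Chern character \cref{ChernCharacterOfTildePhi} against Hurewicz images of the spherical generators. Several parts of your proposal are fine:
\begin{itemize}
\item the torsion summands;
\item the class $\gamma_7$;
\item part (ii), via $\tilde j^\ast\Phi=\widetilde\Phi+\mathrm{ev}_0^\ast\beta$ with $\mathrm{ev}_0^\ast\beta$ factoring through the torsion groups $\pi_{\geq 5}(S^4)$.
\end{itemize}

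However, two steps fail as written. First, the normalization you flag as the main obstacle rests on a false premise. The Chern character maps $\widetilde{\mathrm{KU}}{}^0(S^{2k})\simeq\pi_{2k}(B\mathrm{SU})$ isomorphically onto $H^{2k}(S^{2k};\mathbb{Z})$, so $\mathrm{ch}$ of the Bott generator is $\pm$ the orientation class. The factor $(k-1)!$ belongs to the top Chern class $c_k$, not to $\mathrm{ch}_k$. With your factorials the bookkeeping cannot close. For example, $\langle 2\widetilde F_8, h(\gamma_8)\rangle$ would have to be divisible by $3!=6$, whereas it is in fact $\pm1$. With the correct normalization, the degree of $\widetilde\Phi_\ast(\gamma_{2k})$ is just $\pm\langle\mathrm{ch}_{2k}(\widetilde\Phi),h(\gamma_{2k})\rangle$. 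What must then be computed are the Kronecker pairings $\int_{h(\gamma_4)}\widetilde F_4=\pm1$, $\int_{h(\gamma_6)}\widetilde F_6=\pm\tfrac12$ and $\int_{h(\gamma_8)}\widetilde F_8=\pm\tfrac12$. The paper obtains the degree-6 pairing from $\mathrm{d}_4(\iota_6)=\pm2\iota_3\iota_4$ \cref{d4OfIota6}, and the higher ones via Whitehead products. Against $\widetilde F_4+2\widetilde F_6+2\widetilde F_8$ these give $\pm1$ in each degree.

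Second, you have degree 10 backwards. Consider the exact sequence $\pi_{10}(\widetilde{L_1S^4})\to\pi_{10}(\truncation{10}{B\mathrm{SU}})\to\pi_9(\mathrm{fib})\to\pi_9(\widetilde{L_1S^4})\to 0$. Here it is the \emph{cokernel} in degree 10 that enters $\pi_9$ of the fiber; the kernel in degree 10 only affects $\pi_{10}$ of the fiber. Since $\pi_9(\mathrm{Cyc}_1S^4)$ is pure torsion, the claim at $k=9$ is \emph{equivalent} to surjectivity of $\widetilde\Phi_\ast$ onto $\pi_{10}(\truncation{10}{B\mathrm{SU}})\simeq\mathbb{Z}$. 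So this is exactly the step you declared unnecessary. A rational check via \cref{RationalHomotopyGroups} only gives finite index, which leaves a possible extra finite cyclic summand in $\pi_9(\mathrm{fib})$.

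The paper closes this gap with three ingredients:
\begin{itemize}
\item the free Whitehead-product generator $\gamma_{10}$ from Hilton--Milnor on $S^3\vee S^6$;
\item the pairing $\int_{h(\gamma_{10})}\widetilde F_{10}=\pm\tfrac12$;
\item the formula $\mathrm{ch}_{10}(\widetilde\Phi)=2\widetilde F_{10}+\kappa\,\widetilde F_4\wedge\widetilde F_6$, whose decomposable $\kappa$-term pairs to zero with any spherical class.
\end{itemize}
Part (ii) at $k=9$ needs the same surjectivity for $\Phi_{\leq10}$. Your $\mathrm{ev}_0^\ast\beta$ argument transfers it, since $\pi_{10}(S^4)$ is torsion.
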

\begin{proof}[Proof outline]
  Compare the long exact sequence of homotopy groups using \cref{HomotopyOfCyc1S4}. Observe that the Kronecker pairing between (integration over) the Hurewicz image $h(-)$ of the spherical generators in \cref{HomotopyOfCyc1S4Table} against the universal fluxes \cref{WeakEquivalenceBetweenRelativeAndAbsoluteModel} is:
  \begin{subequations}
    \begin{align}
      \label{KroneckerPairingInDegree4}
      \int_{h(\gamma_4)} 
      \widetilde F_4
      & 
      =
      \pm 1
      \\
      \label{KroneckerPairingInDegree6}
      \int_{h(\gamma_6)}
      \widetilde F_6
      & 
      =
      \pm \tfrac{1}{2}
      \\
      \label{KroneckerPairingInDegree8}
      \int_{h(\gamma_8)}
      \widetilde F_8
      & 
      =
      \pm \tfrac{1}{2}
      \\
      \label{KroneckerPairingInDegree10}
      \int_{h(\gamma_{10})}
      \widetilde F_{10}
      & 
      =
      \pm \tfrac{1}{2}
      \mathrlap{\,.}
    \end{align}
  \end{subequations}
  Here \cref{KroneckerPairingInDegree6} follows with the differential \cref{d4OfIota6} and \cref{KroneckerPairingInDegree8,KroneckerPairingInDegree10} via Whitehead products.
  With \cref{ChernCharacterOfTildePhi} this means that $\widetilde \Phi$ is an isomorphism on the free parts of $\pi_4$, $\pi_6$ and $\pi_8$ and surjective on $\pi_{10}$.
\end{proof}

\begin{SCfigure}[.9][htb]
\caption{\label{ComparisonAndFiber}%
  The comparison map $\widetilde \Phi$ \cref{ShiftedLifts} exhibits, in the untwisted DMW sector, K-theory as an abelian approximation to small-cyclified 4-Cohomotopy that misses \cref{LowHomotopyOfFiberOfTildePhi} NS1-brane charge (necessarily by \cref{AbelianCohDoesNotQuantizeNonlinear}, due to the non-linear Gauss law of its $H_7$-flux) as well as torsion branes\protect\footnotemark\ predicted by Hypothesis H. Generally, $\Phi$ \cref{TheComparisonMap} exhibits twisted K-theory as such an approximation, moreover missing \cref{LowHomotopyOfFiberOfPhi} a $\mathbb{Z}_{/2}$ ambiguity \cref{TheChargeMultiplication} in D6- and D4-brane charge.}
\centering
\adjustbox{rndfbox=4pt}{
$
  \begin{tikzcd}[
    row sep=45pt,
    column sep=40pt
  ]
    &[63pt]
    \mathrm{fib}\bracket({
      \widetilde \Phi_{\leq 10}
    })
    \ar[
      d
    ]
    \ar[
      r
    ]
    \ar[
      dr,
      phantom,
      "{ \lrcorner }"{pos=.1}
    ]
    &[-25pt]
    \ast
    \ar[
      d,
      "{ 0 }"
    ]
    \\
    X^9
    \ar[
      r,
      "{ 
        \text{IIA brane charges}
      }"{pos=.4, yshift=-1pt},
      "{
         \text{by Hypothesis H...}      
      }"{swap,pos=.4, yshift=+1pt}
    ]
    \ar[
      rr,
      downhorup,
      "{ 
        \text{...approximated in K-theory}
      }"'
    ]
    \ar[
      ur,
      dashed,
      shift left=3pt,
      "{ 
          \text{
            ...unseen in K-theory
          }
      }"{sloped, pos=.65},
      "{
          \text{(NS1 and torsion)}
      }"{swap, sloped, pos=.65}
    ]
    &
    \widetilde{ L_1 S^4 }
    \ar[
      r,
      "{ 
        \widetilde \Phi_{\leq 10}
      }"
    ]
    &
    \truncation{10}{B \mathrm{SU}}
  \end{tikzcd}
$
}
\end{SCfigure}

\footnotetext{%
  The torsion brane charges indicated as missed by K-theory in \cref{ComparisonAndFiber} are those visible already on near-horizon spacetimes of the form $\mathbb{R}^{9} \setminus \mathbb{R}^p$ (cf. \cref{MeasuringDBraneCharge}) as listed in \cref{HomotopyOfCyc1S4Table}. On more general spaces, K-theory also sees some torsion charges, cf. \cref{TorsionMBraneCharges}.
}

\section
{Conclusion}

We close by briefly putting our results in perspective. 
To recall, the open problem we are addressing is how to combine two key conjectures/hypotheses about string theory, namely:
\begin{enumerate}
\item 
the \emph{Hypothesis K} that type IIA flux/charge is quantized in twisted K-theory, 

\item 
the \emph{M-theory conjecture}
that type IIA is the dimensional reduction on a small circle of a completion of 11D supergravity, 
\end{enumerate}
and the ambition is to combine these \emph{top-down}: Since flux quantization is IR-completion, we ask for an IR-completion of 11D SuGra whose dimensional reduction on small circle fibers \emph{implies} that type IIA charge is quantized in twisted K-theory.

\subsection
{Observations}
We began with the following initial observations on the general problem of deriving flux quantization in 10D from 11D:
\begin{enumerate}
\item \textbf{Proper electromagnetic flux quantization in nonabelian cohomology.}
These are observations which we have already laid out elsewhere, but they are worth repeating here:
\begin{enumerate}
\item
One should not only quantize the magnetic sector in 11D ($G_4$, M5-brane charge) but also \textbf{include the electric sector} ($G_7 + \cdots$, M2-brane charge). 

(This holds on general grounds, but particularly here since quantization of electric fluxes is already part of the traditional proposal---\emph{Hypothesis K}: that type IIA flux is quantized in K-theory---which is to be derived here from 11D.)

\item
But since the Gauss law constraint on the electric $G_7$-flux in 11D is non-linear, the \textbf{quantized charge must be in nonabelian cohomology}, beyond the traditional realm of abelian generalized cohomology theories like K-theory.

(Again, this issue was already visible all along in 10D: There the electric $H_7$-flux has a non-linear Gauss law (the direct dimensional reduction of the situation with $G_7$ in 11D), and this Gauss law is necessarily dropped by the traditional \emph{Hypothesis K}.)

\item The universal (namely with minimal classifying cell complex) choice of nonabelian cohomology  for electromagnetic flux quantization admissible in 11D gives 
\textbf{11D SuGra quantized in
4-Cohomotopy}. This universal choice is what we consider here---\emph{Hypothesis H}.

(There are other admissible choices in 11D, but substantially different choices will break our derivation of K-theory in 10D. Conversely, readers who already believe in \emph{Hypothesis K} for type IIA may read our result here in the opposite direction as a further consistency check that \emph{Hypothesis H} is the right IR-completion of 11D SuGra for the purpose of completing it to M-theory.)

\end{enumerate}

\item \textbf{Dimensional charge reduction on small M-circles via Morse--Bott theory.}
\label{IdeaOfSmallVariation}
This is our key novel observation, not discussed before:
\begin{enumerate}
\item
The notion that IIA is the dim-reduction of M-theory on a \emph{small} circle needs to be extended to the topological IR-completed sector of brane charges (where metric circumference is not a meaningful size indicator).

\item
Our earlier theorem---that dimensional circle reduction of quantized charges means passage to the \emph{cyclic loop space} of the charge classifying space---allows us to identify such topological \emph{small circle} reductions as factoring charge classifying maps through stages of the Morse--Bott filtration of loops of bounded variation.

(This requires the charge classifying space to admit the auxiliary structure of a Riemannian manifold---a strong constraint, but satisfied by \emph{Hypothesis H}.)

\item 
Under \emph{Hypothesis H} in 11D there is visible exactly one non-trivial stage in the Morse--Bott filtration, namely the first one, thus unambiguously called the \emph{small-cyclified loop space of the 4-sphere}. This is equivalently the space of round embeddings of the M-circle into $S^4$, with degenerate embeddings (collapsed circles) included as shapes of reparameterization orbifold singularities $\ast \sslash S^1 \simeq B S^1$.

\end{enumerate}

\end{enumerate}

\subsection
{Derivations}

From these premises we rigorously derived the following (proofs outlined here, with substantial spectral sequence runs left to the reader):
\begin{enumerate}[resume]

\item 
\textbf
{Completion of 10D charge content.}
  The resulting small circle reduction, from 11D SuGra IR-completed in 4-Cohomotopy, gives \textbf{IIA IR-completed in small-cyclified 4-Cohomotopy},
  which is the first nontrivial 10D flux quantization that reflects strictly all magnetic and electric NS/RR-fluxes:
  \begin{enumerate}
    \item $H_7$ appears with its nonlinear Gauss law quantized, whence 
    \textbf{string (NS1-brane) charge is now well-quantized} (not so under the traditional \emph{Hypothesis K}).
    
    \item $F_8$ appears right away as a quantized flux, as such not ordinarily originating from 11D, but \textbf{completing the Chern character} as demanded by the traditional \emph{Hypothesis K}.

    \item
    The corresponding black \textbf{D0-brane charge is now well defined}, which does not ordinarily descend from 11D. 
    
    (Beware that the D0-branes in BFSS/BMN matrix models are \emph{probe branes} on flat or pp-wave backgrounds, while here we are dealing with their backreacted black cousins, which may be underappreciated.)

    \item 
    On non-compact Cauchy surfaces, this D0-charge
    turns out to be in integral 8-cohomology, but generally \textbf{D0-brane charge is in stable 8-Cohomotopy twisted by D6-brane charge} (the M-circle bundle class) mod 2.

    \item The \textbf{D0-brane tadpole is precisely the obstruction} to restricting an unconstrained circle reduction to a small circle reduction of M-brane charges.
\end{enumerate}

\item 
\textbf
{Twisted K-Theory emerges.}
This is now the key point:

This small-cyclified 4-Cohomotopy theory---not postulated but derived by small circle reduction from 11D---turns out to be a nonlinear deformation (by NS1-brane charge) of rank-zero twisted K-theory, as witnessed by a comparison cohomology operation
\[
  \begin{tikzcd}
    \text{
      small-cyclified 4-Cohomotopy
    }
    \ar[
      r,
      "{ \Phi }"
    ]
    &
    \text{
      rank-0
      twisted K-theory
    }
    \mathrlap{\,,} 
  \end{tikzcd}
\]
which exhibits an abelian approximation, in that:
\begin{enumerate}
\item
in general it:
\begin{enumerate}

  \item 
  bijects on NS5-brane charge and $H_3$-flux,

  \item 
  surjects onto an index-2 sublattice of 
  D6/D4-charges with their $F_2$/$F_4$-fluxes,

  \item 
  forgets $H_7$-flux and NS1-charge (invisible to twisted K-theory)

  as well as some torsion charges,
\end{enumerate}

\item 
in the DMW sector of vanishing NS5- and D6-charges it improves to $\widetilde \Phi$ which:
\begin{enumerate}
  \item 
  has universal untwisted Chern character
  \begin{equation}
  \label{ChernCharacterInConclusion}
    \mathrm{ch}\bracket({
      \widetilde{\Phi}
    })
    \,=\,
    \grayunderbrace
      {\widetilde F_4}
      {1} 
      \,+\, 
    \grayunderbrace
      {2 \widetilde F_6}
      {1} 
      \,+\, 
    \grayunderbrace
      {2 \widetilde F_8}
      {1} 
      \;+\, 
      \substack{
        \text{irrelevant terms}
        \\
        \text{of degrees $\geq 10$}
      }
      \mathrlap{\,,}
  \end{equation}
  where 
  \begin{equation}
    \widetilde F_{2k} 
    = 
    F_{2k}
    \,-\,
    B_2 \wedge F_{2k-2}
    \,+\, 
    \tfrac{1}{2}
    B_2 \wedge B_2 \wedge  F_{2k-4}
    \,+\,
    \cdots
  \end{equation}
  are the untwisted RR-flux densities,

  \item where---as indicated below the braces in \cref{ChernCharacterInConclusion}---the relevant first three summands are generators of the integral lattice inside their cohomology group 
  (so $\widetilde F_6$, $\widetilde F_8$ are actually half-integral),

  \item 
  implying that the K-theoretic Chern characters underlying a small-cyclified 4-Cohomotopy class on a Cauchy surface $X^9$ are guaranteed to have integral components.
  
  (The failure of this property in plain K-theory had been the cause of some head-scratching in the literature on \emph{Hypothesis K}.)
\end{enumerate}

\item This $\widetilde \Phi$ improves the \emph{ad hoc} DMW lifts---of integral D4-brane charge to K-theory---to a natural transformation (on those D4-charges that survive in Cohomotopy), and $\Phi$ generalizes this beyond the DMW sector to twisted K-theory.

\end{enumerate}

\end{enumerate}

\subsection
{Vistas}
\label{Vistas}

A number of follow-up questions arise; we briefly mention a few:
\begin{enumerate}

\item
\textbf
{The Half-Integral Shift of $G_4$.}
For brevity we have disregarded here the tangential twisting of 4-Cohomotopy that is part of \emph{Hypothesis H} proper \parencites{FSS20-H}{FSS21-Hopf}, and which implies \cite[Prop. 3.13]{FSS20-H} the half-integral shift in the $G_4$-flux that is expected in M-theory \cite{Witten1997Flux}:
\begin{equation}
  \inlinetikzcd{
  \bracket[{
    G_4 
      + 
    \tfrac{1}{4}p_1(\omega)
  }]
  \in
  H^4\bracket({
    X^{10}
    ;\,
    \mathbb{Z}
  })
  \ar[r]
  \&
  H^4\bracket({
    X^{10}
    ;\,
    \mathbb{R}
  })
  \mathrlap{\,.}
  }
\end{equation}
We expect that the statements and proofs in \cref{Results} generalize with only mild adjustments to this situation, but it remains to be worked out.

\item
\textbf{The Partition Function.}
The DMW result has two parts: 
\begin{enumerate}
\item 
For every D4-brane charge $f_4$ satisfying $\mathrm{Sq}^3_{\mathbb{Z}} \, f_4 = 0$, and every lift $Q$ to K-theory through the AHSS as in \cref{TheDMWLift}, the corresponding \emph{M-theory phase} of $f_4$ and \emph{K-theory phase} of $Q$ agree \cite[(5.4)]{DMW2000} up to a fixed correction factor.

\item 
The \emph{M-theory partition function} over $f_4 \in H^4\bracket({ X^{9}; \mathbb{Z} })$ localizes onto the constraint subset $\bracketmid\{{ f_4}{\mathrm{Sq}^3_{\mathbb{Z}}\, f_4 = 0 }\}$ (the \emph{integral equation of motion}), and there it coincides with the type IIA partition function over K-theory charges $Q$.
\end{enumerate}

In contrast, under \emph{Hypothesis H} the \emph{integral equation of motion} is absorbed, cf.  \cref{IntegralEquationOfMotion}, into the Cohomotopical flux-quantization (thereby avoiding the notion that M-theory has a path-integral definition). At the same time, this makes $f_4$ be in the image of the Cohomotopical character map \cite[Ex. 6.11, cf. \S~12]{FSS23-Char}
\begin{equation}
\label{CohomotopicalCharacter}
  \inlinetikzcd{
  H^1\bracket({
    X^9
    ;\,
    \Omega S^4
  })
  \defneq 
  \pi^4\bracket({
    X^9
  })
  \ar[
    rr,
    "{
      \mathrm{ch}^{S^4}
    }"
  ]
  \&\&
  H^4\bracket({
    X^{9};
    \mathbb{Z}
  })
  \mathrlap{\,,}
  }
\end{equation}
which is in general not surjective (nor injective, nor with additive image):
The first obstruction to lifting through \cref{CohomotopicalCharacter} is $\mathrm{Sq}^2 \rho_2 f_4$, which is stronger than $\mathrm{Sq}^3_{\mathbb{Z}} f_4 \defneq \beta \mathrm{Sq}^2 \rho_2 f_4$.

Therefore, under \emph{Hypothesis H}, the ``M-theory partition function'' is nevertheless over a different domain than the ``K-theory partition function'' and the two may no longer coincide. Since this image need not be an additive subgroup, it cannot directly replace the charge lattice in DMW's theta-function construction.

However, also under \emph{Hypothesis H} this is now the wrong comparison to make anyway: 
One should instead compare, on the IIA side, to a partition function not over K-theory classes but over small-cyclified 4-Cohomotopy classes.  While these putative modified partition functions (over 4-Cohomotopy in 11D and over small-cyclified 4-Cohomotopy in 10D) may be more complicated, their matching should no longer be a surprising coincidence but a structural consequence, as now they are manifestly related by dimensional reduction (along \emph{small M-circles}).

\item
\textbf{Type IIB and T-Duality.}
Here we have focused entirely on type IIA. Beyond that one wants to ask:
\begin{quote}\emph{%
  What is the nonabelian refinement of flux/charge quantization in type IIB that is to twisted $\mathrm{KU}^1$ as small-cyclified Cohomotopy is to twisted $\mathrm{KU}^0$?
}
\end{quote}
Since type IIB is related to M-theory only via its T-duality  with type IIA, key guidance on this matter should come from addressing the related question:
\begin{quote}\emph{%
How does topological T-duality \textup{(cf. \cref{CycB3ZInTopTDuality})} lift from twisted K-theory to small-cyclified 4-Cohomotopy and whatever its type IIB dual may be?
}
\end{quote}
This will need a new idea, because inspection shows%
\footnote{%
  We are grateful to Grigorios Giotopoulos for highlighting this point.
}
that the ordinary mechanism of topological T-duality from \cref{TopTDualityViaCyclification} in \cref{CycB3ZInTopTDuality} cannot generalize to the presence of NS1-charges and their $H_7$-flux: Already rationally, cyclification cannot relate non-linear Gauss laws like \cref{TheH7GaussLaw} in this way \cref{TopTDualityViaCyclification}.

\item
\label{SDualityAndFTheory}
\textbf
{S-duality and F-theory.}
The result of \cref{DimReductionViaCyc}---that under circle-dimensional reduction charges are classified by the cyclification $\mathrm{Cyc}(\mathcal{A})$ of the original classifying space $\mathcal{A}$---applies, as cited there, more generally to reductions on $n$-tori $T^n$ for all $n \in \mathbb{N}$: If charges in higher dimensions are classified by $\mathcal{A}$ then the same charges seen on the base of a $T^n$-principal bundle are classified by the \emph{$n$-toroidification} $\toroidification^n(\mathcal{A}) := \mathrm{Map}\bracket({ T^n, \mathcal{A} }) \sslash T^n$. For $n = 1$ this reduces to the cyclification; for $n = 2$ we have
\begin{equation}
\label
{TorViaCyCyc}
  \begin{tikzcd}
    \toroidification\bracket({
      \mathcal{A}
    })
    :=
    \toroidification^2\bracket({
      \mathcal{A}
    })
    :=
    \mathrm{Map}\bracket({
      T^2
      ,
      \mathcal{A}
    })
    \sslash T^2
    \,\simeq\,
    ({
      \mathrm{Cyc}
      \,
      \mathrm{Cyc}
      \,
      \mathcal{A}
    })
    \;\;
    \underset
      {\mathclap{
        \mathrm{Cyc}(B S^1)
      }}
      {\times}
    \;\;
    B T^2
    \mathrlap{\,,}
  \end{tikzcd}
\end{equation}
etc.
Hence, we are led to ask:
\begin{quote}\emph{What is the general definition of subspaces
$
  \toroidification^n_1\bracket({S^4})
  \subset
  \toroidification^n\bracket({S^4})
$
that qualify as \emph{small $n$-toroidification}, in generalization of  small cyclification $\mathrm{Cyc}_1\bracket({S^4}) \defneq \toroidification^1_1\bracket({S^4})$?}
\end{quote}

For example, one finds (in the notation of \cref{ReductionTo9D}) that flux quantization of 9D SuGra (cf. \parencites[\S 2.2]{OrtinEtAl2011}) understood as 11D SuGra reduced on a torus (cf. \parencites{Aspinwall1996}{Schwarz1996}) with (only) the M-circle taken to be small, therefore, with \cref{TorViaCyCyc}, classified by
\begin{equation}
  \inlinetikzcd{
  \mathrm{Cyc}\bracket({
    \mathrm{Cyc}_{\mathcolor{purple}{1}}
    \bracket({
      S^4
    })})
    \;\;
    \underset
      {\mathclap{
        \mathrm{Cyc}(B S^1)
      }}
      {\times}
      \;\;
    B T^2
    \ar[
      rr,
      hook
    ]
    \&\&
    \toroidification\bracket({
      S^4
    })
    \mathrlap{\,,}
  }
\end{equation}
\begin{enumerate}
  \item 
  sees the flux $F_8$ and its descendant $F_7^{\mathrm{A}}$, via our results \cref{The-full-IIA-EM-Gauss-laws,DimReductionViaCyc}, cf. \cref{F8Appears},
  \item
  \label{F7MDoesNotAppear}
  but next misses its $\mathrm{SL}(2,\mathbb{Z})$-partner $F_7^{\mathrm{M}}$-flux, in iteration of the way that  $\mathrm{Cyc}\bracket({S^4})$ had missed $F_8$ (by \cref{ObtainingGaussLawsForCycS4}).
\end{enumerate}

The analysis of potential candidates 
$
  \inlinetikzcd{
    \toroidification_1 S^4
    \ar[
      r,
      hook
    ]
    \&
    \toroidification\, S^4
  }
$
for \emph{small toroidification}, 
that would fix this \cref{F7MDoesNotAppear}---in the way that small cyclification $\mathrm{Cyc}_1 S^4$ fixed M$\to$IIA-reduction (in \cref{Results})---turns out to be considerably more involved than the analysis of $\mathrm{Cyc}_1\bracket({S^4})$ that we presented here. 

\begin{figure}[p]
\caption{\label{ReductionTo9D}%
\textbf{Flux species, their brane sources and their Bianchi/Gauss laws, across dimensions.} Solid arrows denote (double) dimensional reduction, dashed arrows pick up Chern forms of circle bundles. Reflection---at the middle dashed line, of occupied rows in any column---is $\pm$ (dilated) Hodge duality in that spacetime dimension. The very bottom Hodge duals do \emph{not} ($\times$) originate from naive dimensional reduction; but with \emph{Hypothesis H} they do appear via the \emph{small circle} reduction considered in \cref{MorseBottRegularity} and in \cref{Vistas} \cref{SDualityAndFTheory}.
}
\centering
\adjustbox{
  rndfbox=4pt,
  scale=.9
}{
  \begin{tikzcd}[
    ampersand replacement=\&,
    column sep=30pt,
    row sep=-2pt,
    /tikz/column 1/.append style={anchor=base west},
    /tikz/column 2/.append style={anchor=base west},
    /tikz/column 3/.append style={anchor=base west}
  ]
    \colorbox{lightgray}{ 11D }
    \ar[r]
    \& 
    |[xshift=9pt]|
    \colorbox
      {lightgray}
      { IIA }
    \ar[r]
    \& 
    \colorbox
      {lightgray}
      { 9D $\sim$ IIB/$S^1_{\mathrm{B}}$ }
    \mathrlap{
      \;\;\; \scalebox{.8}{(massless)}
    }
    \\
    \\
    \&\&
    \adjustbox{rndfbox=4pt}{$
    \begin{aligned}
      & 
      \mathcolor{gray}{\mathrm{d}}\, 
      F_1^{\phantom{\mathrm{A}}}
      \;\;
      \text{(D7$/S^1_\mathrm{B}$)}
      \\
      & 
      \mathcolor{gray}{= 0}
      \;\;\;\;\;\;
    \end{aligned}
    $}
    \\
    \&
    \;\;\;
    [S^1_\mathrm{A}]
    \ar[
      r,
      dashed,
      |->,
      shorten=10pt
    ]
    \&
    \adjustbox{rndfbox=4pt}{$
    \begin{aligned}
      & 
      \mathcolor{gray}{\mathrm{d}}\, F^{\mathrm{A}}_2
      \;\;
      \text{(NS5)}
      \\
      & 
      \mathcolor{gray}{= 0} 
      \;\;\;\;\;\;
    \end{aligned}
    $}
    \\
    \;\;\;
    [S^1_\mathrm{M}]
    \ar[
      r,
      |->,
      dashed,
      shorten=15pt
    ]
    \&
    \adjustbox{rndfbox=4pt}{$
    \begin{aligned}
      & 
      \mathcolor{gray}{\mathrm{d}}\, F_2
      \;\;
      \text{(D6)}
      \\
      & \mathcolor{gray}{= 0}
      \;\;\;\;\;\;
    \end{aligned}
    $}
    \ar[
      r,
      |->,
      shorten=10pt
    ]
    \ar[
      uur,
      |->,
      crossing over,
      shorten=10pt,
      end anchor=west
    ]
    \&
    \adjustbox{rndfbox=4pt}{$
    \begin{aligned}
      & 
      \mathcolor{gray}{\mathrm{d}}\, F^{\mathrm{M}}_2
      \;\;
      \text{(D5)}
      \\
      & 
      \mathcolor{gray}{= 
      F_1 \wedge F_2^{\mathrm{A}}}
    \end{aligned}
    $}
    \\
    \&\&
    \adjustbox{rndfbox=4pt}{$
    \begin{aligned}
      & 
      \mathcolor{gray}{\mathrm{d}}\, 
      H_2^{\phantom{\mathrm{A}}}
      \;\;
      \text{(KK)}
      \\
      & 
      \mathcolor{gray}{= 0}
      \;\;\;\;\;\;
    \end{aligned}
    $}
    \\
    \&
    \adjustbox{rndfbox=4pt}{$
    \begin{aligned}
      & 
      \mathcolor{gray}{\mathrm{d}}\, H_3
      \;\;
      \text{(NS5)}
      \\
      & 
      \mathcolor{gray}{= 0}
      \;\;\;\;\;\;
    \end{aligned}
    $}
    \ar[
      r,
      |->,
      shorten=10pt
    ]
    \ar[
      ur,
      |->,
      shorten=10pt,
      end anchor=west
    ]
    \&
    \adjustbox{rndfbox=4pt}{$
    \begin{aligned}
      & 
      \mathcolor{gray}{\mathrm{d}}\, 
      F_3^{\mathrm{M}}
      \;\;
      \text{(NS5/$S^1_{\mathrm{B}}$)}
      \\
      & 
      \mathcolor{gray}{= 
      - H_2 \wedge F_2^{\mathrm{A}}}
    \end{aligned}
    $}
    \\
    \&\&
    \adjustbox{rndfbox=4pt}{$
    \begin{aligned}
      & 
      \mathcolor{gray}{\mathrm{d}}\, F_3^{\mathrm{A}}
      \;\;
      \text{(D5/$S^1_{\mathrm{B}}$)}
      \\
      & 
      \mathcolor{gray}{= 
      H_2 \wedge F_2^{\mathrm{M}}
      -
      F_1 \wedge F_3^{\mathrm{M}}
      }
    \end{aligned}
    $}
    \\
    \adjustbox{rndfbox=4pt}{$
    \begin{aligned}
      & 
      \mathcolor{gray}{\mathrm{d}}\, 
      G_4
      \;\;
      \text{(M5)}
      \\
      & 
      \mathcolor{gray}{= 0}
      \;\;\;\;\;\;
    \end{aligned}
    $}
    \ar[
      uur,
      |->,
      shorten=10pt,
      end anchor=west
    ]
    \ar[
      r,
      |->,
      shorten=6pt
    ]
    \&
    \adjustbox{rndfbox=4pt}{$
    \begin{aligned}
      & 
      \mathcolor{gray}{\mathrm{d}}\, 
      F_4
      \;\;
      \text{(D4)}
      \\
      & 
      \mathcolor{gray}{= H_3 \wedge F_2}
    \end{aligned}
    $}
    \ar[
      r,
      |->,
      shorten=10pt
    ]
    \ar[
      ur,
      |->,
      shorten=10pt,
      end anchor=west
    ]
    \&
    \adjustbox{rndfbox=4pt}{$
    \begin{aligned}
      & 
      \mathcolor{gray}{\mathrm{d}}\, 
      F_4^{\phantom{\mathrm{A}}}
      \;\;
      \text{(D3)}
      \\
      & 
      \mathcolor{gray}{= 
      F_3^{\mathrm{A}}
      \wedge 
      F_2^{\mathrm{A}}
      +
      F_3^{\mathrm{M}}
      \wedge 
      F_2^{\mathrm{M}}
      }
    \end{aligned}
    $}
    \\[+2pt]
    {}
    \ar[
     rrr,
     -,
     dashed,
     gray,
     shorten <=-7pt,
     shorten >=26pt,
     line width=2,
     "{
        \leftrightarrow
     }"{
       black,
       rotate=-90, 
       scale=3,
       pos=.83},
    ]
    \&\&\&
    {}
    \\[-4pt]
    \&\&
    \adjustbox{rndfbox=4pt}{$
    \begin{aligned}
      & 
      \mathcolor{gray}{\mathrm{d}}\, 
      F_5^{\phantom{\mathrm{A}}}
      \;\;
      \text{(D3/$S^1_{\mathrm{B}}$)}
      \\
      & 
      \mathcolor{gray}{= 
      F_3^{\mathrm{M}}
      \wedge
      F_3^{\mathrm{A}}
      +
      H_2 \wedge F_4
      }
    \end{aligned}
    $}
    \\
    \&
    \adjustbox{rndfbox=4pt}{$
    \begin{aligned}
      & 
      \mathcolor{gray}{\mathrm{d}}\, 
      F_6
      \;\;
      \text{(D2)}
      \\
      & 
      \mathcolor{gray}{= 
      H_3 \wedge F_4}
    \end{aligned}
    $}
    \ar[
      ur,
      |->,
      shorten=10pt,
      end anchor=west
    ]
    \ar[
      r,
      |->,
      shorten=10pt
    ]
    \&
    \adjustbox{rndfbox=4pt}{$
    \begin{aligned}
      & 
      \mathcolor{gray}{\mathrm{d}}
      \, 
      F_6^{\mathrm{M}}
      \;\;
      \text{(D1)}
      \\
      & 
      \mathcolor{gray}{= 
      F_5 \wedge F_2^{\mathrm{A}}
      +
      F_3^{\mathrm{M}} \wedge F_4
      }
    \end{aligned}
    $}
    \\
    \&\&
    \adjustbox{rndfbox=4pt}{$
    \begin{aligned}
      & 
      \mathcolor{gray}{\mathrm{d}}\, F_6^{\mathrm{A}}
      \;\;
      \text{(NS1)}
      \\
      &
      \mathcolor{gray}{
      = 
      F_4 
        \wedge 
      F_3^{\mathrm{A}}
      -
      F_5
        \wedge 
      F_2^{\mathrm{M}} 
      -
      F_1 
        \wedge 
      F_6^{\mathrm{M}}
      }
    \end{aligned}    
    $}
    \\
    \adjustbox{rndfbox=4pt}{$
    \begin{aligned}
      & 
      \mathcolor{gray}{\mathrm{d}}\, 
      G_7
      \;\;
      \text{(M2)}
      \\
      & 
      \mathcolor{gray}{= 
      \tfrac{1}{2}
      G_4 \wedge G_4}
    \end{aligned}    
    $}
    \ar[
      uur,
      |->,
      shorten=10pt,
      end anchor=west
    ]
    \ar[
      r,
      |->,
      shorten=6pt
    ]
    \&
    \adjustbox{rndfbox=4pt}{$
    \begin{aligned}
      & 
      \mathcolor{gray}{\mathrm{d}}\, 
      H_7
      \;\;
      \text{(NS1)}
      \\
      & 
      \mathcolor{gray}{= 
      \tfrac{1}{2}
      F_4 \wedge F_4
      -
      F_2 \wedge F_6
      }
    \end{aligned}
    $}
    \ar[
      r,
      |->,
      shorten=6pt
    ]
    \ar[
      ur,
      |->,
      shorten=10pt,
      end anchor=west
    ]
    \&
    \adjustbox{rndfbox=4pt}{$
    \begin{aligned}
      & 
      \mathcolor{gray}{\mathrm{d}}\, 
      H_7^{\phantom{\mathrm{A}}}
      \;\;
      \text{(KK${}^\ast$)}
      \\
      & 
      \mathcolor{gray}{= 
      \tfrac{1}{2}
      F_4 \wedge F_4
      -
      F_2^{\mathrm{M}}
      \wedge 
      F_6^{\mathrm{M}}
      -
      F_2^{\mathrm{A}}
      \wedge
      F_6^{\mathrm{A}}
      }
    \end{aligned}
    $}
    \\
    \&\&
    \adjustbox{rndfbox=4pt}{$
    \begin{aligned}
      & 
      \mathcolor{gray}{\mathrm{d}}\, F_7^{\mathrm{A}}
      \;\;
      \text{(D1/$S^1_{\mathrm{B}}$)}
      \\
      & 
      \mathcolor{gray}{= 
      H_2 
        \wedge 
      F_6^{\mathrm{M}}
      -
      F_5
        \wedge 
      F_3^{\mathrm{M}}
      }
    \end{aligned}
    $}
    \\
    \&
    \;\;\;\;\;
    \adjustbox{
      scale=1.7,
      raise=-2pt
    }{$\times$}
    \ar[
      r,
      -,
      dotted,
      shorten <=-7pt
    ]
    \&
    \adjustbox{rndfbox=4pt}{$
    \begin{aligned}
      & 
      \mathcolor{gray}{\mathrm{d}}\, F_7^{\mathrm{M}}
      \;\;
      \text{(NS1/$S^1_{\mathrm{B}}$)}
      \\
      & 
      \mathcolor{gray}{= 
      H_2 
        \wedge
      F_6^{\mathrm{A}}
        -
      F_5 
        \wedge
      F_3^{\mathrm{A}}
        -
      F_1 
        \wedge
      F_7^{\mathrm{A}}
      }
    \end{aligned}
    $}
    \\
    \;\;\;\;\;\;\;
    \adjustbox{
      scale=1.7,
      raise=-2pt
    }{$\times$}
    \ar[
      r,
      -,
      dotted,
      shorten <=-7pt
    ]
    \&
    \adjustbox{rndfbox=4pt}{$
    \begin{aligned}
      & 
      \mathcolor{gray}{\mathrm{d}}\, 
      F_8
      \;\;
      \text{(D0)}
      \\
      & 
      \mathcolor{gray}{
        = H_3 \wedge F_6
      }
    \end{aligned}
    $}
    \ar[
      r,
      |->,
      shorten=10pt
    ]
    \ar[
      uur,
      |->,
      crossing over,
      shorten=10pt,
      end anchor=west
    ]
    \&
    \adjustbox{rndfbox=4pt}{$
    \begin{aligned}
      & 
      \mathcolor{gray}{\mathrm{d}}\, 
      F_8^{\phantom{\mathrm{A}}}
      \;\;
      \text{(D(-1))}
      \\
      &
      \mathcolor{gray}{
      = 
      F_2^{\mathrm{A}}
        \wedge
      F_7^{\mathrm{A}}
        +
      F_3^{\mathrm{M}}
        \wedge
      F_6^{\mathrm{M}}
      }
    \end{aligned}
    $}
  \end{tikzcd}
  \hspace{-1.35cm}
}
\rotatebox[origin=c]
  {-90}
  {{magnetic}\hspace{7.5cm}{electric}}
\end{figure}

\item 
\label{HigherGlobalSymmetries}
\textbf{Higher Global Symmetries.}
While we have shown that twisted K-theory is an abelian approximation to small-cyclified 4-Cohomotopy up to 10-truncation, beyond that the two cohomology theories diverge more substantially and increasingly so (\cref{RationalHomotopyGroups}). 

This divergence above 10-truncation has no effect on the classification of brane charges in the nonabelian cohomology of a 9-dimensional Cauchy surface, $\pi_0\mathrm{Map}\bracket({ X^9, \mathcal{A} }) \defneq H^1\bracket({ X^9; \Omega\, \mathcal{A} })$ (for $\mathcal{A}$ one of the two classifying spaces), but it does mean that the higher homotopy groups $\pi_n \mathrm{Map}\bracket({ X^9, \mathcal{A} })$ eventually witness a substantial difference. 

The homotopy type of $\mathrm{Map}\bracket({ X^9, \mathcal{A} })$, partially reflected in these higher homotopy groups, is the $\infty$-groupoid of \emph{large} higher gauge transformations of the flux-quantized higher gauge fields (\cite[\S~4.2.1, p. 32]{SS26-HigherGauge}, cf. \parencites[p. 4]{Gwilliam2025}{PerezLona2025}[\S~3.4]{AlfonsiKimLuciani2026}%
\footnote{%
  To compare these references to our discussion, notice that
   \cite[\S~3.4]{AlfonsiKimLuciani2026}
   addresses the (transgressive) 
   \emph{cohomology} (instead of the homotopy) of the mapping space as generalized symmetries, which retains a coarsened image of the homotopy type; while 
  \cite{PerezLona2025} considers the mapping space into $\mathrm{Aut}(\mathcal{A})$ instead of into $\mathcal{A}$, but for given charges $\big[\inlinetikzcd{ X \ar[r, "{ a }"] \& \mathcal{A} }\big]$,   
  evaluation provides a canonical comparison map 
  $\inlinetikzcd{ 
    \mathrm{Map}\bracket({ 
      X, 
      \mathrm{Aut}\bracket({
        \mathcal{A}
      }) 
    }) 
    \ar[
      r, 
      "{ \mathrm{ev}_a }"
    ] 
    \& 
    \mathrm{Map}\bracket({ 
      X, 
      \mathcal{A} 
    }) 
  }$.
}%
), and hence reflects \emph{higher global symmetries} (also known as \emph{categorified symmetries} \cite{SchreiberSkoda2009}, or  \emph{generalized symmetries}).

Therefore, while \emph{Hypothesis H} and twisted K-theory essentially agree on brane charges (away from the NS1-brane and up to torsion charges), their implied higher global symmetries differ substantially and increasingly so as the degree increases (cf. \cref{RationalHomotopyGroups}).
This could in principle serve as a further test of the hypotheses H/K against physical expectations obtained by other means.

\end{enumerate}

We leave these follow-up questions for future work.

\appendix
\section
{Background}

We highlight a couple of pertinent points that tend to receive unclear discussion in the literature.

\subsection
{Measuring D-Brane Charge}
\label
{MeasuringDBraneCharge}

One needs to distinguish (cf. \cite[\S~2.2]{SS25-Flux}) between:%
\footnote{%
  Beware that the meaning of the terminology ``black brane'' and ``solitonic brane'' is not standardized across the literature. Some authors in the past have said ``solitonic'' for what we call ``singular''. The reader in doubt should take \cref{BlackBranes,SolitonicBranes} as the \emph{definition} of how we use terminology for the present purpose (following \cite[\S~2.2]{SS25-Flux}).
}
\begin{enumerate}
\item 
\label{BlackBranes}
backreacted, singular, \textbf{black branes} (cf. \parencites{Marolf2012}) that are spacetime singularities (not necessarily curvature singularities but necessarily flux density singularities for the flux they source) and as such are to be \emph{deleted from spacetime}, as familiar from black hole singularities,

\item 
\label{SolitonicBranes}
nonsingular \textbf{solitonic branes} which are finite bumps of flux density \emph{vanishing at infinity} (as is characteristic for solitons).
\end{enumerate}

For brevity, here we focus on \cref{BlackBranes}, singular black branes whose charge is on their \emph{linking spheres}. (The charge of solitonic branes, such as the Abrikosov vortices of 4D electromagnetism, is instead on the one-point compactification of their transverse space.)

The concept is best nailed down by first recalling its classical template: 

\paragraph
{Black 0-Branes of 4D Einstein--Maxwell theory} 
These are the Reissner--Nordstr{\"o}m spacetimes, topologically the \emph{complement} of the worldline of a point charge (which itself is excised from the spacetime, whose timelike singularity it is, or would be if included, cf. \parencites{Carter1968}{Carter2009}): 
\begin{equation}
  X^{1,3}_{\mathrm{RN}}
  \,\simeq\,
  \mathbb{R}^{1,0} 
  \times
  X^3_{\mathrm{RN}}
  \,\simeq\,
  \smash{%
  \overbrace{%
    \mathbb{R}^{1,0}
  }^{%
    \mathclap{%
      \substack{%
        \text{singular worldline}
      }
    }
  }
  }
  \times 
  \mathbb{R}_{> 0}
  \times
  \smash{
  \overbrace{
    S^2
  }^{
    \mathclap{
      \substack{
        \text{linking sphere}
      }
    }
  }
  }
  \,\simeq\,
  \mathbb{R}^{1,3}
  \setminus
  \mathbb{R}^{1,0}
  \mathrlap{\,.}
\end{equation}
The historical argument of Dirac (\cite{Dirac1931}, cf. \parencites{Alvarez1985}[\S~2.1]{SS25-Flux}) entails, in modern paraphrase, that the quantized charge carried by these black 0-branes is the (ordinary integral) 2-cohomology of their ambient spacetime, hence of their linking sphere, which reveals that they appear in integer multiples of a unit charge brane:
\begin{equation}
  H^2\bracket({
    X^{3}_{\mathrm{RN}}
    ;\,
    \mathbb{Z}
  })
  \,\simeq\,
  H^2\bracket({
    \mathbb{R}_{> 0} \times S^2
    ;\,
    \mathbb{Z}
  })
  \,\simeq\,
  H^2\bracket({
    S^2
    ;\,
    \mathbb{Z}
  })
  \,\simeq\,
  \mathbb{Z}
  \mathrlap{\,.}
\end{equation}
Or rather, with both electric and magnetic charges quantized as in \cref{IRCompletionOfSuGra}, and adopting the general notation of nonabelian cohomology as in \cref{ProperFluxQuantization}, we see the 2-dimensional lattice of electromagnetic charges as:
\begin{equation}
    H^1\bracket({
      X^{3}_{\mathrm{RN}}
      ;\,
      \Omega
      B^2 \mathbb{Z}^2
    })
    \,\simeq\,
    \pi_0
    \,
    \mathrm{Map}\bracket({
      S^2
      ,\,
      B^2 \mathbb{Z}^2
    })
    \,\simeq\,
    \mathbb{Z}^2
  \mathrlap{\,.}
\end{equation}
The analogous argument shows that there are no black $(p > 0)$-branes in 4D Einstein--Maxwell, because locally near their (singular) worldvolume we see charge
\begin{equation}
    H^1\bracket({
      \mathbb{R}^3 \setminus 
      \mathbb{R}^p
      ;\,
      \Omega
      B^2 \mathbb{Z}^2
    })
    \,\simeq\,
    \pi_0
    \,
    \mathrm{Map}\bracket({
      S^{2-p}
      ,\,
      B^2 \mathbb{Z}^2
    })
    \,\simeq\,
    \begin{cases}
      \mathbb{Z}^2
      & 
      \text{ if $p = 0$}
      \\
      0
      &
      \text{ otherwise. }
    \end{cases}
\end{equation}

\paragraph
{Black Brane Charge in General}
Clearly then (cf. \cite{SS25-Flux}) for \emph{any} flux quantization law $\mathcal{A}$ (as in \cref{ProperFluxQuantization}) the brane charge on a Cauchy surface $X^d$ is its nonabelian $\Omega\mathcal{A}$-cohomology \cref{NonabelianCohomology}. Specifically, charges carried by $p$-branes as seen ``near-horizon'' are measured on their near linking sphere (cf. \parencites[\S~1.2.1]{Marolf2012}), since:
\begin{equation}
  \overbrace{
  \mathbb{R}^{1,d}
  \setminus
  \mathbb{R}^{1,p}
  }^{%
    \substack{%
      \text{complement of}
      \\
      \text{singular brane}
    }
  }
  \;\;\simeq\;\;
  \overbrace{
  \mathbb{R}^{1,p}
  }^{%
    \mathclap{%
      \substack{%
        \text{singular brane}
        \\
        \text{worldvolume}
      }
    }
  }
  \;\times\;
  \overbrace{
  \bracket({
    \mathbb{R}^{d-p}
    \setminus \{0\}
  })}^{
    \mathclap{%
      \substack{%
        \text{punctured}
        \\
        \text{transverse space}
      }
    }
  }
  \;\;\simeq\;\;
  \underbrace{
  \overbrace{
  \mathbb{R}^{1,p}
  }^{%
    \mathclap{%
      \substack{%
        \text{singular brane}
        \\
        \text{worldvolume}
      }
      \;
    }
  }
  \;\times\;
  \overbrace{%
    \mathbb{R}^1_{> 0}
  }^{\mathclap{
    \;
    \substack{
      \text{radial}
      \\
      \text{direction}
    }
  }}
  }_{%
    \substack{%
      \text{contractible}
    }
  }
  \;\times\;
  \overbrace{
    S^{d-p-1}
  }^{%
    \substack{%
      \text{linking}
      \\
      \text{sphere}
    }
  }
  \mathrlap{\,,}
\end{equation}
and hence take values in the $\bracket({d-p-1})$-st homotopy group of $\mathcal{A}$ (assuming here for brevity that $\mathcal{A}$ is simply-connected):
\begin{equation}
\label{GeneralFormulaForBraneCharge}
  \mathllap{
  \substack{
    \text{\color{gray}fundamental}
    \\
    \text{\color{gray}$p$-brane charges}
  }
  \;\;
  }
  H^1\bracket({
    \mathbb{R}^d 
    \setminus
    \mathbb{R}^p
    ;\,
    \Omega
    \,
    \mathcal{A}
  })
  \,\simeq\,
  \pi_0
  \mathrm{Map}\bracket({
    S^{d-p-1}
    ,\,
    \mathcal{A}
  })
  \,\simeq\,
  \pi_{d-p-1}\bracket({
    \mathcal{A}
  })
  \mathrlap{\,.}
\end{equation}
For instance:
\begin{enumerate}
\item 
with $\mathcal{A} \defneq B^3 \mathbb{Z} \times B^7 \mathbb{Z}$ in 10D \cref{GaussLawOfBField} this gives a 5-brane and a 1-brane (the NS branes in type IIA, cf. \cref{DBraneAndSourcedFluxes}):
\begin{equation}
\label
{NSBraneChargesDerived}
  H^1\bracket({
    \mathbb{R}^9 \setminus \mathbb{R}^p
    ;\,
    \Omega
    \bracket({
      B^3 \mathbb{Z}
      \times
      B^7 \mathbb{Z}
    })
  })
  \simeq
  \pi_{8-p}\bracket({
    B^3 \mathbb{Z}
    \times
    B^7 \mathbb{Z}
  })
  \simeq
  \begin{cases}
    \mathbb{Z}
    &
    \text{ if $p \in \{5,1\}$}
    \\
    0
    &
    \text{ otherwise}
    \mathrlap{\,,}
  \end{cases}
\end{equation}

\item
with $\mathcal{A} \defneq S^4$ in 11D \cref{S4ValuedGaussLaws} one finds M-brane charges as displayed in \cref{M5ChargeAccordingToHypothesisH,M2ChargeAccordingToHypothesisH}.
\end{enumerate}

\paragraph
{Black D-Brane Charges in IIA}

Consider $\mathcal{A} \defneq B \mathrm{U}$, the classifying space of \emph{rank-zero} K-theory (no Romans mass term) inside the full classifying space $\mathrm{KU}_0 \simeq B \mathrm{U} \times \mathbb{Z}$ of K-theory. This is the right classifying space to reproduce the expected black D-brane charges in non-massive IIA, according to the general formula \cref{GeneralFormulaForBraneCharge}, in that:
\begin{equation}
\label{DBraneChargeInOmegaBUCohomology}
  H^1\bracket({
    \mathbb{R}^{9}
    \setminus
    \mathbb{R}^p
    ;\,
    \Omega\, 
    B \mathrm{U}
  })
  \simeq
  \pi_0
  \,
  \mathrm{Map}\bracket({
    S^{8-p}
    ,\,
    B \mathrm{U}
  })  
  \simeq
  \pi_{8-p}\bracket({
    B \mathrm{U}
  })
  \simeq
  \begin{cases}
    \mathbb{Z}
    & 
    \text{ if } p \in \{0,2,4,6\}
    \\
    0
    & 
    \text{ otherwise, } 
  \end{cases}
\end{equation}
which identifies exactly the sources for the RR-fluxes available in non-massive IIA, cf. \cref{11DAnd10DFluxSpecies,DBraneAndSourcedFluxes}.

\begin{SCtable}[.9][htb]
\caption{%
\label{DBraneAndSourcedFluxes}%
The NS/D-brane species in IIA sourcing, via  \cref{DBraneChargeInOmegaBUCohomology,NSBraneChargesDerived}, the flux species from \cref{11DAnd10DFluxSpecies}.
\newline
(We refrain from showing an entry for ``D(-2)-branes'' and their $F_{10}$ flux in massive IIA, cf. \cref{ConsequenceOfCauchyPrinciple} \cref{NoPhysicalDMinus2Charge}.)}
\adjustbox{rndfbox=4pt, scale=0.9}{
\begin{tblr}{
  colspec = {l||cc|cccc|c},
  row{2} = {bg=gray!30}
}
  & 
  \SetCell[c=6]{c}
    \textbf{--- Type IIA --- }
  & & & & & &
  $\substack{
    \text{massive}
    \\
    \text{IIA}
  }$
  \\
  brane: 
  & 
  NS1
  &
  NS5
  &
  D0
  & 
  D2 
  & 
  D4 
  & 
  D6 
  &
  D8
  \\
  $\substack{
    \text{sourced}
    \\
    \text{flux:}
  }$
  & 
  $H_7$
  & 
  $H_3$
  &
  $F_8$
  & 
  $F_6$ 
  & 
  $F_4$ 
  & 
  $F_2$ 
  &
  $F_0$
  \\
  \hline
  Sector:
  &
  \SetCell[c=2]{c}
    \textbf{NS}
  & &
  \SetCell[c=5]{c}
    \textbf{RR}
  & & & & 
\end{tblr}
}
\end{SCtable}

The analogous statement for type I D-brane charges seen in rank-zero $\mathrm{KO}$-theory is:
\begin{equation}
\label{DBraneChargeInOmegaBOCohomology}
    H^1\bracket({
      \mathbb{R}^9
        \setminus
      \mathbb{R}^p
      ;\,
      \Omega
      \,
      B \mathrm{O}
    })
    \simeq
    \pi_0\,
    \mathrm{Map}\bracket({
      S^{8-p}
      ,\,
      B \mathrm{O}
    })
    \simeq
    \pi_{8-p}\bracket({
      B \mathrm{O}
    })
    \simeq
    \begin{cases}
      \mathbb{Z}
      &
      \text{ if } 
      p \in \{0,4\}
      \\
      \mathbb{Z}_{/2}
      &
      \text{ if }
      p \in \{6,7\}
      \\
      0
      &
      \text{ otherwise. }
    \end{cases}
\end{equation}

\paragraph
{As seen in the traditional literature}
The traditional literature on \emph{Hypothesis K} instead insists on using the full $\mathrm{KU}_0 \simeq \mathbb{Z} \times  B \mathrm{U}$ as the classifying space. This then requires quotienting out by hand the spurious (in non-massive IIA) D8-brane charge (cf. \cite{BergshoeffdeRooGreenPapadopoulosTownsend1996}) reflected in $\pi_0\bracket({ B \mathrm{U} \times \mathbb{Z} }) \simeq \mathbb{Z}$.

One way to write this suggestively is to consider the linking $(8-p)$-sphere in \cref{DBraneChargeInOmegaBUCohomology} as the boundary of the $(9-p)$-ball filling it (which is physical in case the black $p$-brane enclosed by the sphere vanishes),
\begin{equation}
  \begin{tikzcd} 
    S^{8-p}
    \ar[
      r, 
      hook,
      "{ i }"
      ]
    &
    D^{9-p}
    \mathrlap{\,,}
  \end{tikzcd}
\end{equation}
and then to observe that the rank-zero K-cohomology \cref{DBraneChargeInOmegaBUCohomology} may equivalently be written as the following quotient by the spurious (in plain IIA) D8-charge group $\mathbb{Z}_{\text{``D8''}}$: 
\begin{equation}
  H^1\bracket({
    \mathbb{R}^9 \setminus \mathbb{R}^p
    ;\,
    \Omega B \mathrm{U}
  })
  \simeq
  \frac{
    \grayoverbrace{
    \mathrm{KU}^0\bracket({
      S^{8-p}
    })
    }{ 
      \mathbb{Z}_{\text{D$p$}} 
      \times
      \mathbb{Z}_{\text{``D8''}}
    }
  }
  {
    \grayunderbrace{
    i^\ast
    \mathrm{KU}^0\bracket({
      D^{9-p}
    })
    }{
      \mathbb{Z}_{\text{``D8''}}
    }
  }
  \,\simeq\,
  \mathbb{Z}_{\text{D$p$}}
    :=
  \begin{cases}
    \mathbb{Z} & \text{ if  } p \in \{0,2,4,6\}
    \\
    0 & \text{ otherwise. }
  \end{cases}
\end{equation}
This quotient expression is how black D-brane charge in IIA is explained by Seiberg, Maldacena, Moore, Witten and others in
\parencites
[(1.6)]{MaldacenaMooreSeiberg2001}
[(2.2)]{BergmanGimonKol2001}
[(2.10)]{MooreWitten2000}.
Earlier, the idea that IIA D-brane charge is to be measured in $\mathrm{KU}^0$ of the linking sphere, as in \cref{DBraneChargeInOmegaBUCohomology}, goes back to \parencites[(3.13)--(3.16)]{Horava1998}, was stated for the case of orientifolds in \cite[p. 2]{BergmanGimonSugimoto2001}, and was recently reiterated in \parencites[p. 6]{KaidiTachikawaYonekura2024}, perhaps without accounting for the D8-brane issue.


\subsection
{Duality and Cauchy Surfaces}
\label
{IndependenceOfDualFluxesOnCauchySurf}

Throughout, we assume globally hyperbolic spacetimes (cf. \parencites[\S~3.11]{MinguzziSanchez2008}[Thm. 1.1]{BernalSanchez2005}, implicit in most of the literature on brane charges):
\begin{equation}
\label{CauchySurface}
  X^{1,d}
  \,\simeq\,
  \mathbb{R}^{1,0}
  \times X^d
  \,,
  \;\;\;
  \text{ with Cauchy surface }
  \inlinetikzcd{
    X^d 
    \simeq
    \{t = 0\} \times X^d
    \ar[
      r,
      hook,
      "{ i_0 }"
    ]
    \&
    X^{1,d}
    \mathrlap{\,.}
  }
\end{equation}

We highlight two  basic but important points:
  \begin{enumerate}
  \item 
  Brane charges (cf. \cref{ProperFluxQuantization} and \cref{MeasuringDBraneCharge}) are entirely on the Cauchy surface $X^d$, since there is no topology in the time direction:
  \begin{equation}
    H^1\bracket({
      X^{1,d}
      ;\,
      \Omega
      \mathcal{A}
    })
    \,\simeq\,
    H^1\bracket({
      X^d
      ;\,
      \Omega
      \mathcal{A}
    })
    \mathrlap{\,.}
  \end{equation}

  \item Here on the Cauchy surface, the electric-magnetic flux/charge pairs become \emph{independent} of each other
  (\parencites[Thm. 2.2]{SS24-Phase}[Thm. 3.4]{SS26-HigherGauge}): Although on spacetime,
  \begin{equation}
  \label{HodgeDualityRelation}
    G_{d+1-p}
    \,=\,
    \star_{1,d}
    F_p
  \end{equation}
  is the on-shell EM dual flux to $F_p$, and as such fully determined by \cref{HodgeDualityRelation} once $F_p$ is given \emph{on spacetime}, their restrictions (pullbacks) to the Cauchy surface \cref{CauchySurface} are \emph{independent} of each other, in that the pair
  \begin{equation}
    \left(
    \begin{aligned}
      E_{d+1-p}
      & 
      :=
      i_0^\ast G_{d+1-p}
      \mathrlap{\,,}
      \\
      B_{p}
      & 
      :=
      i_0^\ast F_{p}
    \end{aligned}
    \; \right)
  \end{equation}
  is initial value \emph{Cauchy data} for the higher Maxwell equations of motion of $F_p$. The Hodge duality relation \cref{HodgeDualityRelation} is instead what determines the evolution of these data \emph{away from the Cauchy surface} into spacetime.
  \end{enumerate}

  This is familiar in the archetypical example:
  \begin{example}
  For 4D Maxwell theory in temporal gauge, the electric flux density $E_2$ on $X^3$ is the \emph{canonical momentum} to the magnetic gauge potential $A_1$, locally on $X^3$ (cf. \parencites[\S~19.1.1]{HenneauxTeitelboim1992}[\S~5.1]{BlaschkeGieres2021}), as such \emph{independent} from $A_1$ as well as from the magnetic flux density:
  \[
    \begin{tikzcd}[
      row sep=0pt, 
      column sep=20pt
    ]
      {}
      \ar[
        r,
        phantom,
        "{ \text{canonical} }"
      ]
      & 
      {}
      \\
      \text{coordinate}
      &
      \text{momentum}
      \\[10pt]
      \mathllap{
       \substack{
         \text{magnetic}
         \\
         \text{potential}
       }
       \;
      }
      A_1 
      \ar[
        d,
        |->,
        "{ 
          \mathrm{d}_{X^3}
         }"{swap}
      ]
      & 
      E_2
      \mathrlap{
       \;
       \substack{
         \text{electric}
         \\
         \text{flux}
       }
       \;\;
      }
      \\[25pt]
      \mathllap{
       \substack{
         \text{magnetic}
         \\
         \text{flux}
       }
       \;
      }
      B_2
      \ar[
        ur,
        dotted,
        <->,
        "{
          \substack{ 
            \text{independent}
          }
        }"{sloped, swap}
      ]
    \end{tikzcd}
    \;\;\;\;
    \;\;\;\;
    \text{on the Cauchy surface $X^3$.}
  \]
\end{example}

This means \parencites{SS24-Phase}[\S~3.2]{SS26-HigherGauge} that:
\begin{standout}
  Flux/charge quantization takes place on a Cauchy surface, respecting the Gauss laws but independent of the duality constraint.
\end{standout}

\smallskip 
\begin{remark}
\label[remark]{ConsequenceOfCauchyPrinciple}
In the main text, this principle explains notably:
\begin{enumerate}
  \item why classifying spaces like $B \mathrm{U} \sslash B \mathrm{U}(1)$ or $\mathrm{Cyc}_1 S^4$ really classify electromagnetic charges without overcounting,
  
  \item 
  \label{NoPhysicalDMinus2Charge}
  why the would-be universal D(-2) class $\widetilde{F}_{10}$---which survives on classifying spaces in \cref{ChernCharacterOfTildePhi}---does not appear as a physical charge (as it should not in non-massive IIA, cf. \cref{DBraneAndSourcedFluxes}): its pullback to the Cauchy surface $X^9$ vanishes for degree reasons.
\end{enumerate}
\end{remark}


\paragraph{Declarations}
\begin{description}
  \item[]
  The authors declare that they have no conflict of interest.

  \item[]
  No data were generated or analyzed in the course of this theoretical work.
\end{description}

\printbibliography

\end{document}